\documentclass{llncs}
\pagestyle{headings}

\usepackage{slashbox}

\usepackage{epsfig}
\usepackage{amstext}
\usepackage{amsfonts}
\usepackage{amsmath}
\usepackage{amssymb}
\usepackage{stmaryrd}
\usepackage{verbatim}
\usepackage{subfigure}

%% for multiple biblio ... you have to bibtex all bib unit 'buXXX' 
\usepackage{bibunits}

\usepackage[lined,ruled]{algorithm2e}

\usepackage[latin1]{inputenc}

\usepackage{xcolor,pgf,pgflibraryarrows,tikz}

\usetikzlibrary{fit} % fitting shapes to coordinates
\usetikzlibrary{backgrounds} % drawing the background after the foreground

\tikzstyle{background}=[rectangle,fill=gray!10, inner sep=0.1cm, rounded corners=0mm]
\usetikzlibrary{automata}
\usetikzlibrary{snakes,backgrounds,positioning,shadows}
\usetikzlibrary{patterns}
\usetikzlibrary{shapes}

\newcommand{\outcome}{\textnormal{\textsf{outcome}}}

\newcommand{\nat}{\mathbb{N}}

\newcommand{\strat}[1]{\langle\!\langle #1\rangle\!\rangle} 

\newcommand{\safe}{\textnormal{\textsf{Safe}}}

\newcommand{\plays}{\textnormal{\textsf{Plays}}}
\newcommand{\prefplays}{\textnormal{\textsf{PrefPlays}}}

\SetKw{Next}{next}

\newcommand{\always}{\Box}
\newcommand{\eventually}{\Diamond}

\newcommand{\Obs}{{\sf Obs}}

% from Laurent

\newcommand{\obs}{{\sf obs}}

\title{Doomsday Equilibria for Omega-Regular Games}

\author{K. Chatterjee$^1$ \and L. Doyen$^2$ \and E. Filiot$^3$ \and J.-F. Raskin$^3$}

\institute{$^1$IST Austria, $^2$LSV-ENS de Cachan,\ $^3$CS-Universit\'e Libre de Bruxelles -- U.L.B.}

\begin{document}

\begin{bibunit}[abbrv]

\maketitle

\begin{abstract}
Two-player games on graphs provide the theoretical framework
for many important problems such as reactive synthesis.
While the traditional study of two-player zero-sum games has been 
extended to multi-player games with several notions of equilibria,
they are decidable only for perfect-information games,
whereas several applications require imperfect-information games.

In this paper we propose a new notion of equilibria, 
called doomsday equilibria, which is a strategy profile such that 
all players satisfy their own objective, and if any coalition of players 
deviates and violates even one of the players objective, then the objective
of every player is violated.

We present algorithms and complexity results for deciding 
the existence of doomsday equilibria for various classes of 
$\omega$-regular objectives, both for imperfect-information games, 
and for perfect-information games. We provide optimal complexity
bounds for imperfect-information games, and in most cases 
for perfect-information games.
\end{abstract}

\section{Introduction}

Two-player games on finite-state graphs with $\omega$-regular 
objectives provide the framework to study many important 
problems in computer science~\cite{Sha53,Rabin69,EJ91}. 
One key application area is synthesis of reactive systems~\cite{BL69,RW87,PR89}.
Traditionally, the reactive synthesis problem is reduced to 
two-player zero-sum games, where vertices of the graph represent 
states of the system, edges represent transitions, 
one player represents a component of the system to synthesize, and 
the other player represents the purely adversarial coalition of all the 
other components.
Since the coalition is adversarial, the game is zero-sum, i.e., 
the objectives of the two players are complementary.
Two-player zero-sum games have been studied in great depth in 
literature~\cite{Mar75,EJ91,automata}.

Instead of considering all the other components as purely adversarial, a more realistic 
model is to consider them as individual players each with their own objective,
as in protocol synthesis where the rational behavior of the agents 
is to first satisfy their own objective in the protocol before trying 
to be adversarial to the other agents.
Hence, inspired by recent applications in protocol synthesis, the model of 
multi-player games on graphs has become an active area of research in 
graph games and reactive synthesis~\cite{AHK02,FKL10,UW11}.
In a multi-player setting, the games are not necessarily zero-sum (i.e.,
objectives are not necessarily conflicting) and the classical notion of 
rational behavior is formalized as Nash equilibria~\cite{Nas50}. 
Nash equilibria perfectly capture the notion of rational behavior 
in the absence of external criteria, i.e., the players are concerned
only about their own payoff (internal criteria), and they are indifferent to the payoff of the
other players. 
In the setting of synthesis, the more appropriate notion is the adversarial
external criteria, where the players are as harmful as possible to the other
players without sabotaging with their own objectives. 
This has inspired the study of refinements of Nash equilibria, such as 
secure equilibria~\cite{CHJ06} (that captures the adversarial external criteria), 
rational synthesis~\cite{FKL10}, and led to several new logics  
where the non-zero-sum equilibria can be expressed~\cite{CHP10,DLM10,MMV10,WHY11,MMPV12}.
The complexity of Nash equilibria~\cite{UW11}, secure equilibria~\cite{CHJ06}, rational 
synthesis~\cite{FKL10}, and of the new logics has been studied recently~\cite{CHP10,DLM10,MMV10,WHY11}.
%%and the complexities range from polynomial time to undecidability.

Along with the theoretical study of refinements of equilibria, applications 
have also been developed in the synthesis of protocols. 
In particular, the notion of secure equilibria has been useful in the synthesis 
of mutual-exclusion protocol~\cite{CHJ06}, and of fair-exchange protocols~\cite{KR03,CKS06} 
(a key protocol in the area of security for exchange of digital signatures).
One major drawback that all the notions of equilibria suffer is that 
the basic decision questions related to them are decidable only in the setting 
of perfect-information games (in a perfect-information games the players perfectly 
know the state and history of the game, whereas in imperfect-information games each player
has only a partial view of the state space of the game), and 
in the setting of multi-player imperfect-information games they are undecidable~\cite{PR89}.
However, the model of imperfect-information games is very natural 
because every component of a system has private variables not accessible to other components,
and recent works have demonstrated that imperfect-information games are required in 
synthesis of fair-exchange protocols~\cite{JMM11}. In this paper, we provide the first
decidable framework that can model them.

We propose a new notion of equilibria which we call 
\emph{doomsday-threatening} equilibria (for short, doomsday equilibria).
A doomsday equilibria is a strategy profile such that all players satisfy
their own objective, and if any coalition of players deviates and violates 
even one of the players objective, then doomsday follows (every player objective
is violated). Note that in contrast to other notions of equilibria, doomsday equilibria
consider deviation by an arbitrary set of players, rather than individual
players.
Moreover, in case of two-player non-zero-sum games they coincide with the 
secure equilibria~\cite{CHJ06} where objectives of both players are satisfied.

\begin{figure}
\label{FEPs}
  \begin{center}
	\includegraphics{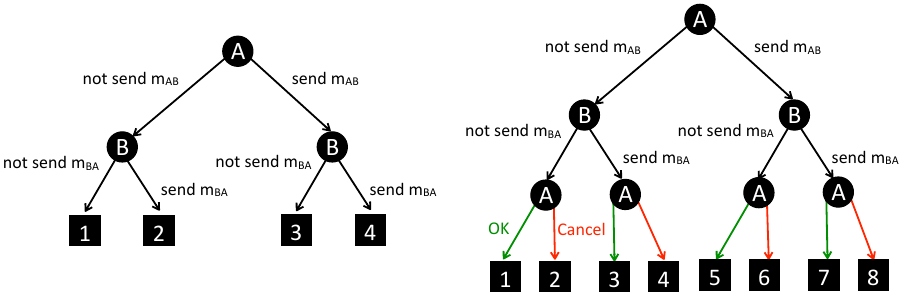}
  \end{center}
\caption{A simple example in the domain of Fair Exchange Protocols}
\end{figure}

\begin{example}
Let us consider the two trees of Fig.~\ref{FEPs}. They model the possible behaviors of two entities {\sf Alice} and {\sf Bob} that have the objective of exchanging messages: ${\sf m_{AB}}$ from {\sf Alice} to {\sf Bob}, and ${\sf m_{BA}}$ from {\sf Bob} to {\sf Alice}. Assume for the sake of illustration that ${\sf m_{AB}}$ models the transfer of property of an house from {\sf Alice} to  {\sf Bob}, while ${\sf m_{BA}}$ models the payment of the price of the house from {\sf Bob} to {\sf Alice}. 

Having that interpretation in mind, let us consider the left tree. On the one hand, {\sf Alice} has as primary objective (internal criterion) to reach either state $2$ or state $4$, states in which she has obtained the money, and she has a slight preference for $2$ as in that case she received the money while not transferring the property of her house to {\sf Bob}, this corresponds to her adversarial external criterion. On the other hand, {\sf Bob} would like to reach either state $3$ or $4$ (with again a slight preference for $3$). 
Also, it should be clear that {\sf Alice} would hate to reach $3$ because she would have transferred the property of her house to {\sf Bob} but without being paid. Similarly, {\sf Bob} would hate to reach $2$. To summarize, {\sf Alice} has the following preference order on the final states of the protocol: $2 > 4 > 1 > 3$, while for {\sf Bob} the order is $3 > 4 > 1 > 2$. Is there a {\em doomsday threatening equilibrium} in this game ? For such an equilibrium to exist, we must find a pair of strategies that please the two players for their primary objective (internal criterion): reach $\{2,4\}$ for {\sf Alice} and reach $\{1,2\}$ for {\sf Bob}. Clearly, this is only possible if at the root {\sf Alice} plays "{\sf send} ${\sf m_{AB}}$", as otherwise we would not reach $\{1,2\}$ violating the primary objective of {\sf Bob}. But playing that action is not safe for {\sf Alice} as {\sf Bob} would then choose  "{\sf not send} ${\sf m_{AB}}$" because he slightly prefers $3$ to $4$. It can be shown that the only rational way of playing (taking into account both internal and external criteria) is for {\sf Alice} to play "{\sf not send} ${\sf m_{AB}}$" and for {\sf Bob} would to play "{\sf not send} ${\sf m_{BA}}$". This way of playing is in fact the only secure equilibrium of the game but this is not what we hope from such a protocol. 

The difficulty in this exchange of messages comes from the fact that {\sf Alice} is starting the protocol by sending her part and this exposes her. To obtain a better behaving protocol, one solution is to add an additional stage after the exchanges of the two messages as depicted in the right tree of Fig.~\ref{FEPs}. In this new protocol, {\sf Alice} has the possibility to cancel  the exchange of messages (in practice this would be implemented by the intervention of a {\sf TTP}\footnote{{\sf TTP} stands for \emph{Trusted Third Party}}). For that new game, the preference orderings of the players are as follows: for {\sf Alice} it is $3>7>1=2=4=6=8>5$, and for {\sf Bod} it is $5>7>1=2=4=6=8>3$. Let us now show that there is a doomsday equilibrium in this new game. In the first round, {\sf Alice} should play "{\sf send} ${\sf m_{AB}}$" as otherwise the internal objective of {\sf Bob} would be violated, then {\sf Bob} should play "{\sf send} ${\sf m_{BA}}$", and finally {\sf Alice} should play ``{\sf OK}'' to validate the exchange of messages. Clearly, this profile of strategies satisfies the first property of a doomsday equilibrium: both players have reached their primary objective. Second, let us show that no player has an incentive to deviate from that profile of strategies.  First, if {\sf Alice} deviates then {\sf Bob} would play "{\sf not send} ${\sf m_{BA}}$", and we obtain a doomsday situation as both players have their primary objectives violated. Second, if {\sf Bob} deviates by playing "{\sf not send} ${\sf m_{BA}}$", then {\sf Alice} would cancel the protocol exchange which again produces a doomsday situation. So, no player has an incentive to deviate from the equilibrium and the outcome of the protocol is the desired one: the two messages have been fairly exchanged. So, we see that the threat of a doomsday that brings the action "{\sf Cancel}" has a beneficial influence on the behavior of the two players.$\hfill \square$
\end{example}

It should now be clear that multi-player games with 
doomsday equilibria provide a suitable framework to model various problems in protocol synthesis.
In addition to the definition of doomsday equilibria, our main contributions are to present algorithms and complexity bounds for deciding the existence of 
such equilibria for various classes of $\omega$-regular objectives both 
in the perfect-information and in the imperfect-information cases. In all cases but one, we establish the exact complexity.
Our technical contributions are summarized in Table \ref{table:summary}. More specifically:
\begin{enumerate}

\item \emph{(Perfect-information games).}
We show that deciding the existence of doomsday equilibria in 
multi-player perfect-information games is (i)~PTIME-complete for 
reachability, B\"uchi, and coB\"uchi objectives; (ii)~PSPACE-complete for safety objectives;
and (iii)~in PSPACE and both NP-hard and coNP-hard for parity objectives.

\item \emph{(Imperfect-information games).}
We show that deciding the existence of doomsday equilibria in 
multi-player imperfect-information games is EXPTIME-complete for 
reachability, safety, B\"uchi, coB\"uchi, and parity objectives.

\end{enumerate}
The area of multi-player games and various notion of equilibria is 
an active area of research, but notions that lead to decidability 
in the imperfect-information setting and has applications in synthesis
has largely been an unexplored area. 
Our work is a step towards it.

\begin{table}[t]
\centering
\begin{tabular}{|c||c|c|c|c|c|}
\hline
objectives & safety & reachability & B¸chi & co-B¸chi & parity \\
\hline 
 & & & & & \textsc{PSpace} \\
perfect information & \textsc{PSpace-C} & \textsc{PTime-C} & \textsc{PTime-C} & \textsc{PTime-C} & \textsc{NP-Hard} \\
 & & & & & \textsc{CoNP-Hard} \\
\hline 
imperfect information & \textsc{ExpTime-C} & \textsc{ExpTime-C} & \textsc{ExpTime-C} & \textsc{ExpTime-C}  & \textsc{ExpTime-C} \\
\hline
\end{tabular}
\vspace{2mm}
\caption{\label{table:summary} Summary of the results}
\vspace{-10mm}
\end{table}

\section{Doomsday Equilibria for Perfect Information Games}\label{sec:prelim}

In this section, we define game arena with perfect information, $\omega$-regular objectives,
and doomsday equilibria. 

\paragraph{{\bf Game Arena}} 
An $n$-player game arena $G$ with perfect information
is defined as a tuple $(S, {\cal P},s_{{\sf init}}, \Sigma, \Delta)$ such that 
$S$ is a nonempty finite set of \textit{states}, ${\cal P}=\{S_1,S_2,\dots,S_n\}$ is a partition of $S$ into $n$ classes of states, one for each player respectively, $s_{{\sf init}} \in S$ is the initial state, $\Sigma$ is
a finite set of actions, and $\Delta : S \times \Sigma \rightarrow S$ is the transition function.

Plays in $n$-player game arena $G$ are constructed as follows. They start in the initial state $s_{{\sf init}}$, and then an $\omega$ number of rounds are played as follows: the player that owns the current state $s$ chooses a letter $\sigma \in \Sigma$ and the game evolves to the position $s' = \Delta(s,\sigma)$, then a new round starts from $s'$. So formally, a {\em play} in $G$ is an infinite sequence $s_0 s_1 \dots s_n \dots$ such that $(i)$ $s_0=s_{{\sf init}}$ and $(i)$ for all $i \geq 0$, there exists $\sigma \in \Sigma$ such that $s_{i+1}=\Delta(s_i,\sigma)$. 
The set of plays in $G$ is denoted by $\plays(G)$, and the set of finite prefixes of plays by $\prefplays(G)$. We denote by $\rho,\rho_1,\rho_i, \dots$ plays in $G$, by $\rho(0..j)$ the prefix of the play $\rho$ up to position $j$ and by $\rho(j)$ the position $j$ in the play $\rho$. We also use $\pi$, $\pi_1$, $\pi_2$, ... to denote prefixes of plays. Let $i \in \{1,2,\dots,n\}$, a prefix $\pi$ belongs to Player~$i$ if ${\sf last}(\pi)$, the last state of $\pi$, belongs to Player~$i$, i.e. ${\sf last}(\pi) \in S_i$. We denote by $\prefplays_i(G)$ the set of prefixes of plays in $G$ that belongs to Player~$i$.

\paragraph{{\bf Strategies and strategy profiles}} 
A {\em strategy} for Player~$i$, for $i \in \{1,2,\dots,n\}$, is a mapping 
$\lambda_i :  \prefplays_i(G) \rightarrow \Sigma$ from prefixes of plays to actions. %
A {\em strategy profile} $\Lambda = (\lambda_1,\lambda_2,\dots,\lambda_n)$ is a tuple of strategies such that $\lambda_i$ is a strategy of Player~$i$. The strategy of Player~$i$ in 
$\Lambda$ is denoted by $\Lambda_i$, 
and the the tuple of the remaining strategies $(\lambda_1,\dots,\lambda_{i-1},\lambda_{i+1},\dots,\lambda_n)$ by  $\Lambda_{-i}$. 
For a strategy $\lambda_i$ of Player $i$, we define its {\em outcome} as the set of plays that are consistent with $\lambda_i$: formally, 
$\outcome_i(\lambda_i)$ is the set of $\rho \in \plays(G)$ such that for all $j \geq 0$, if $\rho(0..j) \in \prefplays_i(G)$, then $\rho(j+1)=\Delta(\rho(j),\lambda_i(\rho(0..j)))$.
Similarly, we define the {\em outcome of a strategy profile} $\Lambda=(\lambda_1,\lambda_2,\dots,\lambda_n)$, as the unique play $\rho \in \plays(G)$ such that for all positions $j$, for all $i \in \{1,2,\dots,n\}$, if $\rho(j) \in \prefplays_i(G)$ then $\rho(j+1)=\Delta(\rho(j),\lambda_i(\rho(0..j)))$.
Finally, given 
a state $s\in S$ of the game, we denote by $G_s$ the game $G$ whose initial state is replaced by $s$.

\paragraph{{\bf Winning objectives}}
A {\em winning objective} (or an \emph{objective} for short) $\varphi_i$ for Player $i {\in} \{1,2,\dots,n\}$ is a set of infinite sequences of states, i.e. $\varphi_i {\subseteq} S^{\omega}$. 
A strategy $\lambda_i$ is {\em winning} for Player $i$ (against all other players) w.r.t. an objective
$\varphi_i$ if $\outcome_i(\lambda_i)\subseteq \varphi_i$. 

%\marginpar{{\bf JF}: Do we need to introduce the notion of Nash equilibrium here ?}
%For a tuple of winning conditions $\overline{\varphi} =
%(\varphi_1,\dots,\varphi_n)$, and a strategy profile $\overline{\sigma}$,
%the value of $\overline{\sigma}$, denoted by
%$v_{\overline{\varphi}}(\overline{\sigma})$, is an $n$-tuple in
%$\{0,1\}^n$ defined by $(v_{\overline{\varphi}}(\overline{\sigma}))_i =
%1$ iff $\overline{\sigma}_i$ is winning for Player $i$ wrt $\varphi_i$.
%
%
%\paragraph{Nash Equilibria} A strategy profile $\overline{\sigma} = (\sigma_1,\dots,\sigma_n)$ is
%a \textit{Nash Equilibrium} if for all $i\in\{1,\dots,n\}$, for all
%strategy $\lambda_i$ of Player $i$:
%$$
%(v_{\overline{\varphi}}(\sigma_1,\dots,\sigma_{i-1},\sigma_{i},\sigma_{i+1},\dots,\sigma_n))_i \geq (v_{\overline{\varphi}}(\sigma_1,\dots,\sigma_{i-1},\lambda_i,\sigma_{i+1},\dots,\sigma_n))_i
%$$

%
Given an infinite sequence of states $\rho \in S^{\omega}$, we denote by ${\sf visit}(\rho)$ the set of states that appear at least once along $\rho$, i.e. ${\sf visit}(\rho)=\{ s \in S | \exists i \geq 0\cdot \rho(i)=s \}$, and ${\sf inf}(\rho)$ the set of states that appear infinitely often along $\rho$, i.e. ${\sf inf}(\rho)=\{ s \in S | \forall i \geq 0\cdot \exists j \geq i \cdot \rho(i)=s \}$.
We consider the following types of winning objectives:
  \begin{itemize}
  	\item a {\em safety objective} is defined by a subset of states $T \subseteq S$ that has to be never left: ${\sf safe}(T)=\{ \rho \in S^{\omega} \mid {\sf visit}(\rho) \subseteq T \}$;  
  	\item a {\em reachability objective} is defined by a subset of states $T \subseteq S$ that has to be reached: ${\sf reach}(T)=\{ \rho \in S^{\omega} \mid {\sf visit}(\rho) \cap T \not= \emptyset \}$; 
	\item a {\em B\"uchi objective} is defined by a subset of states $T \subseteq S$ that has to be visited infinitely often: ${\sf B\ddot uchi}(T)=\{ \rho \in S^{\omega} \mid {\sf inf}(\rho) \cap T \not= \emptyset \}$; 
	\item a {\em co-B\"uchi objective} is defined by a subset of states $T \subseteq S$ that has to be reached eventually and never be left: ${\sf coB\ddot uchi}(T)=\{ \rho \in S^{\omega} \mid {\sf inf}(\rho) \subseteq T \}$; 
	\item let $d \in \nat$, a {\em parity objective with $d$ priorities} is 
          defined by a priority function $p : S \rightarrow  \{0,1,\dots,d\}$ as the set of plays such that the smallest priority visited infinitely often is even: ${\sf parity}(p)=\{ \rho \in S^{\omega} | \min \{ p(s) \mid s \in {\sf inf}(\rho)\} \mbox{~is~even} \}$.
  \end{itemize}
B\"uchi, co-B\"uchi and parity objectives $\varphi$ are called {\em tail objectives} because they enjoy the following closure property: for all $\rho \in \varphi$ and all $\pi \in S^{*}$, $\rho\in \varphi$ iff $\pi \cdot \rho \in \varphi$.

Finally, given an objective $\varphi\subseteq S^\omega$ and a subset $P\subseteq \{1,\dots,n\}$, we
write $\strat{P} \varphi$ to denote the set of states $s$ from which the players from $P$ can cooperate
to enforce $\varphi$ when they start playing in $s$. Formally, 
$\strat{P} \varphi$ is the set of states $s$ such that there exists a set of strategies 
$\{ \lambda_i\ |\ i\in P\}$ in $G_s$, one for each player in $P$, such that
 $\bigcap_{i\in P} \outcome_i(\lambda_i) \subseteq \varphi$.

\paragraph{{\bf Doomsday Equilibria}} A strategy profile $\Lambda=(\lambda_1,\lambda_2,\dots,\lambda_n)$ is a 
{\em doomsday-threatening equilibrium} (doomsday equilibrium or DE for short) if:
\begin{enumerate}
  \item  it is winning for all the players, i.e. $\outcome(\Lambda) \in\bigcap_i \varphi_i$;

  \item each player is able to retaliate in case of deviation: for all $1 \leq i \leq n$, for all $\rho\in \outcome_i(\lambda_i)$, if
    $\rho\not\in \varphi_i$, then $\rho \in \bigcap_{j=1}^{j=n} \overline{\varphi_j}$ (doomsday), where $\overline{\varphi_j}$ denotes the complement of $\varphi_j$ in $S^\omega$.  
\end{enumerate}

In other words, when all players stick to their strategies then they all win, and if any arbitrary coalition of players deviates and makes
even just one other player lose then this player retaliates and ensures a doomsday, i.e. all players lose. 

\paragraph{Relation with Secure Equilibria}
In two-player games, the doomsday equilibria coincide with the notion of secure equilibrium~\cite{CHJ06} where both players satisfy
their objectives. In secure equilibria, for all $i\in\{1,2\}$, any
deviation of  Player $i$ that does not decrease her payoff does not decrease the payoff of Player $3{-}i$ either. In other words, if a deviation of Player $i$ decreases (strictly)
the payoff of Player $3{-}i$, i.e. $\varphi_{3{-}i}$ is not satisfied, then it also decreases her own payoff, i.e. $\varphi_i$ is not satisfied. A two-player secure equilibrium where both
players satisfy their objectives is therefore a doomsday equilibrium.

%% \paragraph{Remark} We could also require that at least one of the
%% Player can punish any selfish Player:

%% \begin{enumerate}
%%   \item[4.] \textbf{there exists} $i$, $\outcome(\lambda_i)\models (\bigvee_{i\neq
%%     j} W_j)\rightarrow W_i$. 
%% \end{enumerate}

\begin{figure*}[!t] 

\begin{center}

\begin{tabular}{cc}
\begin{minipage}[b]{0.45\linewidth}

\subfigure[Doomsday (Safety)]{
  \label{fig:example1}

\begin{tikzpicture}[very thick, scale=0.77]
\tikzstyle{every state}=[fill=gray!20!white]

\tikzset{P3/.style={diamond,draw,very thick,minimum size=7mm}}
\tikzset{P2/.style={rectangle,draw,very thick,minimum size=7mm}}
\tikzset{P1/.style={circle,draw,very thick,minimum size=7mm}}
\tikzset{P1init/.style={initial,circle,draw,very thick,minimum size=7mm}}
\tikzset{P1final/.style={accepting,circle,draw,very thick,minimum size=7mm}}

\node[P1](g1) at (0,0) {\rotatebox{270}{:-)}} ;
\node[P2](g2) at (2,0) {\rotatebox{270}{:-)}} ;
\node[P3](g3) at (4,0) {\rotatebox{270}{:-)}} ;
\node[P2](b1) at (0,2) {\rotatebox{270}{:-(}} ;
\node[P3](b2) at (2,2) {\rotatebox{270}{:-(}} ;
\node[P1](b3) at (4,2) {\rotatebox{270}{:-(}} ;

\draw [->] (g1) to (g2);
\draw [->] (g2) to (g3);
\draw [->] (g3) to [bend left=45] (g1);

\draw [->] (b1) to (b2);
\draw [->] (b2) to (b3);
\draw [->] (b3) to [bend right=45] (b1);

\draw [->] (g1) to (b1);
\draw [->] (g2) to (b2);
\draw [->] (g3) to (b3);

% \path (sinit') edge [loop above] node {$(\#,\#,\#)$} (sinit');

% \draw[line width=.5mm,color=black] (11,4.7) node[sttstates] {$(1,0,0)$};

\end{tikzpicture}
}
\end{minipage}

\begin{minipage}[b]{0.45\linewidth}

\subfigure[Büchi objectives]{
 \label{fig:example2}

\begin{tikzpicture}[very thick, scale=0.8]
\tikzstyle{every state}=[fill=gray!20!white]

\tikzset{P3/.style={diamond,draw,very thick,minimum size=10mm}}
\tikzset{P2/.style={rectangle,draw,very thick,minimum size=7mm}}
\tikzset{P1/.style={circle,draw,very thick,minimum size=7mm}}
\tikzset{P1init/.style={initial,circle,draw,very thick,minimum size=7mm}}

\node[P1](g1) at (0,0) {$s_1$} ;
\node[P2](g2) at (2,0) {$s_2$} ;
\node[P3](g3) at (1,-1.5) {$s_3$} ;
\node[P2](b1) at (0,-3) {\rotatebox{270}{:-)}} ;
\node[P1](b2) at (2,-3) {\rotatebox{270}{:-)}} ;

\node[P2](gg3) at (-1.8,0.2) {\rotatebox{270}{:-)}} ;
\node[P3](g4) at (-0.3,1.8) {\rotatebox{270}{:-)}} ;
\node[P1](g5) at (3.8,0.2) {\rotatebox{270}{:-)}} ;
\node[P3](g6) at (2.3,1.8) {\rotatebox{270}{:-)}} ;

\draw [->] (g1) to (g2);
\draw [->] (g2) to (g3);
\draw [->] (g3) to (g1);

\draw [->] (b1) to (g3);
\draw [->] (b2) to (b1);
\draw [->] (g3) to (b2);

\draw [->] (g2) to (g6);
\draw [->] (g6) to (g5);
\draw [->] (g5) to (g2);

\draw [->] (g1) to (gg3);
\draw [->] (gg3) to (g4);
\draw [->] (g4) to (g1);

% \path (sinit') edge [loop above] node {$(\#,\#,\#)$} (sinit');

% \draw[line width=.5mm,color=black] (11,4.7) node[sttstates] {$(1,0,0)$};

\end{tikzpicture}
}
\end{minipage}
\end{tabular}
\end{center}
\caption{\label{fig:ex1} Examples of doomsday equilibria for Safety and B\"uchi objectives}
\end{figure*}
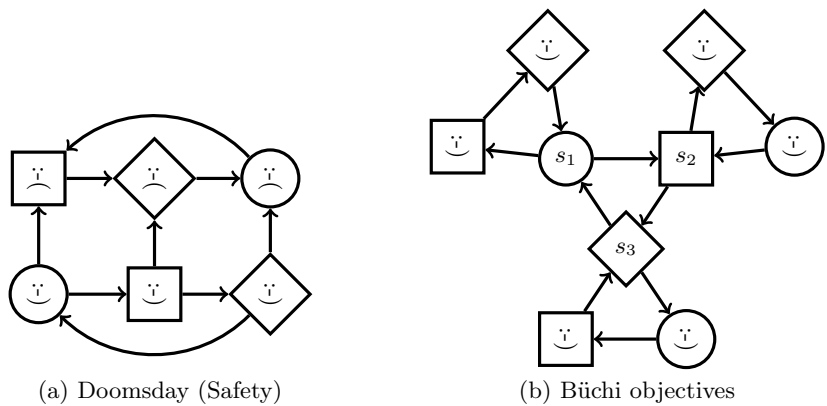

\begin{example} Fig. \ref{fig:ex1} gives two examples of games with safety and B\"uchi objectives respectively. Actions are in bijection with edges so they are not represented.

(Safety) Consider the 3-player game arena with perfect information of Fig. \ref{fig:example1} and
safety objectives. Unsafe states for each player are given by the respective nodes of
the upper part. Assume that the initial state is one of the safe states. This example
models a situation where three countries are in peace until one of the countries,
say country $i$, decides to attack country $j$. This attack will then necessarily be 
followed by a doomsday situation: country $j$ has a strategy to punish all other countries.
The doomsday equilibrium in this example is to play safe for all players.

(B\"uchi)  Consider the 3-player game arena with perfect information of Fig. \ref{fig:example2} with Büchi objectives for
each player: Player $i$ wants to visit infinitely often one of its ``happy'' states. The position of
the initial state does not matter. To make things more concrete, let us use this game to model
a protocol where 3 players want to share in each round a piece of information made of three parts: for all 
$i\in\{1,2,3\}$,  Player $i$ knows information $i\ mod\ 3 + 1$ and $i\ mod\ 3 +2$. 
Player $i$ can send or not
these informations to the other players. This is modeled by the fact that Player $i$ can decide
to visit the happy states of the other players, or move directly to
$s_{(i\ mod\ 3)+1}$. The objective of each player is to have an infinite
number of successful rounds where they get all information.

There are several doomsday equilibria. 
As a first one, let us consider the situation where for all $i\in\{1,2,3\}$, if Player $i$ is in state $s_i$, first it visits the happy states, and
when the play comes back in $s_i$, it moves to $s_{(i\ mod\ 3) + 1}$. This defines an infinite play that 
visits all the states infinitely often. Whenever some player deviates from this play, the other players retaliate by always choosing in the future to go to the next $s$ state instead of taking their respective loops. 
Clearly, if all players follow their respective strategies all happy states
are visited infinitely often. Now consider the strategy of Player $i$ against two strategies of the other players that makes
him lose. Clearly, the only way Player $i$ loses is when the two other players eventually never take their states, but then
all the players lose.

As a second one, consider the strategies where Player $2$ and Player $3$ always take their loops but Player $1$ never takes
his loop, and such that whenever the play deviates, Player $2$ and $3$ retialate by never taking their loops. For the same reasons as
before this strategy profile is a doomsday equilibrium.

Note that the first equilibrium requires one bit of memory for
each player, to remember if they visit their $s$ state for the first or second times. In the second equilibrium, only 
Player $2$ and $3$ needs a bit of memory. An exhaustive analysis shows that there is no memoryless doosmday equilibrium in this example.$\hfill \square$
\end{example}

%Consider the following specification of a mutual
%exclusion protocol:
%$$
%\bigwedge_{i=1}^n (\always \eventually a_i)\wedge \always
%(a_i\rightarrow \bigwedge_{j\neq i} \neg a_j)
%$$
%
%A doomsday equilibrium for this spec is defined by the strategies
%$\lambda_i$ which count up to $n$ and hold the token
%at rank $i$ and releases it at rank $i+1$. If some player has changed his strategy, it may create
%a conflict. To punish all the other players you just have to hold the
%token all the time after you got the first conflict.

%%% Local Variables: 
%%% mode: latex
%%% TeX-master: t
%%% End: 

\section{Complexity of DE for Perfect Information Games}\label{sec:perfect}

In this section, we prove the following results:

\begin{theorem}
    The problem of deciding the existence of a doomsday equilibrium in an $n$-player perfect information game arena and $n$ objectives $(\varphi_i)_{1\leq i\leq n}$ is:
    \begin{itemize}
        \item {\sc PTime-C} if the objectives $(\varphi_i)_{1\leq i\leq n}$ are either all B\"uchi, all co-B\"uchi or all reachability objectives,
        \item {\sc NP-hard}, {\sc coNP-hard} and in {\sc PSPace} if $(\varphi_i)_{1\leq i\leq n}$ are parity objectives,
        \item {\sc PSPace-C} if $(\varphi_i)_{1\leq i\leq n}$ are safety objectives.
    \end{itemize}
\end{theorem}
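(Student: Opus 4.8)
The plan is to reduce the existence of a doomsday equilibrium to a combination of cooperative reachability/Büchi/safety/parity subgames, which can then be solved with classical algorithms of the corresponding complexity. The key structural observation is a characterization: a doomsday equilibrium exists if and only if there is a play $\rho = s_0 s_1 \dots$ with $s_0 = s_{\sf init}$ such that (a) $\rho \in \bigcap_i \varphi_i$ (all players win along the main play), and (b) at every prefix $\rho(0..j)$, whatever the owner of $s_j$ is, say Player $i$, \emph{and also for every other Player $i$}, if the play leaves the main line by a deviation of the coalition $\{1,\dots,n\}\setminus\{i\}$, then from the resulting state Player $i$ alone can force $\bigcap_{k} \overline{\varphi_k}$ (doomsday). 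The subtlety is that condition (b) of the definition quantifies over \emph{all} $\rho' \in \outcome_i(\lambda_i)$, i.e. over plays where everyone except $i$ may deviate, including at positions not on the main line; so I need $\lambda_i$ to be a single strategy that (i) follows the main play as long as nobody has deviated and (ii) from \emph{any} off-play state enforces doomsday. Such a $\lambda_i$ exists iff every state reachable by a one-step deviation from the main line lies in $\strat{\{i\}}\,\bigl(\bigcap_k \overline{\varphi_k}\bigr)$ — call this set $D_i$ — and moreover once the play is in $D_i$ all successors chosen by others stay in $D_i$ (which is automatic since $D_i$ is the winning region of Player $i$ for a one-player-coalition objective, hence closed under adversarial moves up to $i$'s choice). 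Let $D = \bigcap_{i=1}^n D_i$ — actually $D_i$ depends on $i$ only through whose objective "retaliation" is attributed to, but the target $\bigcap_k\overline{\varphi_k}$ is the same for all $i$; what differs is \emph{who} controls. So define $\mathit{Good}$ = set of states $s$ such that for every $i$ and every successor $s'$ of $s$ reachable by any action (i.e. by a deviating move), $s' \in \strat{\{i\}}\bigl(\bigcap_k\overline{\varphi_k}\bigr)$ whenever $s \notin S_i$...

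Cleaner: I would first compute, for each $i$, the set $W_i = \strat{\{i\}}\bigl(\bigcap_{k=1}^n \overline{\varphi_k}\bigr)$, the states from which Player $i$ can \emph{single-handedly} force doomsday. Since $\bigcap_k\overline{\varphi_k}$ is, for Büchi objectives, a \emph{generalized coBüchi / conjunction of coBüchi} objective, for reachability objectives a \emph{conjunction of safety} = safety, for coBüchi a conjunction of Büchi = generalized Büchi, for safety a conjunction of reachability, and for parity a conjunction of parity (a Streett-like condition) — in each case solvable in PTIME for the first three cases, in PTIME for safety-type too, and in (at worst) the stated complexity for parity. Then I define the arena $G'$ obtained from $G$ by "trimming": a state $s$ is \emph{safe to pass through} if for every action $a$ with $\Delta(s,a)=s'$ that the owner of $s$ would \emph{not} take on the main play, and for the relevant retaliating player, $s' \in W_i$. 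Formally, a play $\rho$ witnesses a DE iff $\rho \in \bigcap_i\varphi_i$ and for every $j$, writing $i$ for the owner of $s_j$: for every action $a\in\Sigma$ with $\Delta(s_j,a)\neq s_{j+1}$ we need $\Delta(s_j,a)\in \bigcap_{k\neq i} W_k$ — wait, the deviation is by the \emph{coalition of the others}, and the retaliator is any Player who loses; the cleanest sufficient-and-necessary condition is that $\Delta(s_j,a) \in W_k$ for \emph{every} $k$ (everyone can retaliate) — no: only the players who lose must retaliate, but to be safe we need: for each $k$, if the play after this deviation can make $k$ lose, then $k$ can force doomsday from $\Delta(s_j,a)$; equivalently $\Delta(s_j,a)\in \varphi_k$-winning-region-for-$k$ OR $\Delta(s_j,a)\in W_k$. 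I will prove this equivalence carefully; it is the technical heart.

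Given the characterization, the algorithm is: (1) compute $W_k$ for all $k$; (2) compute for all $k$ the set $V_k$ of states from which Player $k$ alone wins $\varphi_k$; (3) restrict the arena to plays all of whose one-step off-deviations land in $\bigcap_k (V_k \cup W_k)$ at owner-appropriate positions, which is itself a safety restriction on the arena computable in polynomial time; (4) on the restricted arena, check whether Player "everybody cooperating" can enforce $\bigcap_i\varphi_i$ from $s_{\sf init}$, i.e. decide $s_{\sf init}\in\strat{\{1,\dots,n\}}\bigl(\bigcap_i\varphi_i\bigr)$ in the trimmed arena. Step (4) is a cooperative (one-coalition) game with a conjunction of Büchi (= generalized Büchi, PTIME), conjunction of reachability (PTIME), conjunction of coBüchi (= coBüchi after intersecting the target sets? no — conjunction of coBüchi is generalized... actually $\bigcap$ coBüchi$(T_k)$ = coBüchi$(\bigcap T_k)$, PTIME), conjunction of safety (= safety, linear), or conjunction of parity (Streett, solvable in the claimed bound). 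The PTIME upper bounds follow since all ingredients are PTIME for Büchi/coBüchi/reachability. For safety, $W_k = \strat{\{k\}}\bigl(\bigcap \text{reach}(\overline{T_k})\bigr)$ is a cooperative generalized-reachability game — these are PSPACE-complete (generalized reachability / "reach all these sets in some order" is PSPACE-hard), which is exactly where the PSPACE bound for safety enters. For parity, $W_k$ is a conjunctive-parity (Streett) objective under one coalition, solvable in PSPACE (and even the NP∩coNP-type complexity), giving the PSPACE upper bound; NP-hardness and coNP-hardness are by direct reductions.

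Lower bounds: for the PTIME-hardness of Büchi/coBüchi/reachability, reduce from (two-player, zero-sum) reachability/Büchi games by a standard gadget making the second player into a coalition whose deviation trivially triggers doomsday, so that DE-existence becomes equivalent to the original player winning. For PSPACE-hardness of safety, reduce from the generalized-reachability problem (or from QBF via the known PSPACE-hardness of cooperative generalized reachability), encoding the "conjunction of reachability sets" as the doomsday target $\bigcap_k \overline{T_k}$. For NP-hardness and coNP-hardness of parity, reduce from appropriate problems about conjunctions/disjunctions of parity conditions (e.g., from the NP-hardness of generalized parity / Streett emptiness and its coNP dual), again wiring the chosen target into the doomsday requirement.

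The main obstacle I anticipate is proving the \emph{exact} equivalence in the characterization — in particular showing that a \emph{single} retaliation strategy $\lambda_i$ can simultaneously handle every possible deviation history (the definition quantifies over all $\rho'\in\outcome_i(\lambda_i)$, not just over plays branching off the designated main play once), and that no "spurious" deviation — e.g. one that returns to the main line, or a deviation that only one losing player must punish while another stays winning — breaks the construction. This requires carefully arguing that $W_k$ is closed under the moves available to the non-$k$ players (so Player $k$'s retaliation remains valid no matter how the coalition continues to deviate), and that it suffices to control only the \emph{first} deviation, since after it the play is inside $\bigcap_k W_k$ and doomsday is already locked in. Once that lemma is in place, the complexity bounds reduce to citing the complexity of solving cooperative games with the relevant Boolean combinations of $\omega$-regular conditions, which is routine.
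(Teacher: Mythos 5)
There is a genuine gap, and it is in the heart of your upper-bound argument. The retaliation requirement in the definition is $\outcome_i(\lambda_i)\subseteq \varphi_i\cup\bigcap_{j}\overline{\varphi_j}$, so the correct ``retaliation region'' is the winning region of the \emph{disjunctive} objective, $R_i=\strat{i}\bigl(\varphi_i\cup\bigcap_j\overline{\varphi_j}\bigr)$, which is what the paper computes (for B\"uchi this is B\"uchi $\vee$ co-B\"uchi, i.e.\ a one-pair Streett condition solvable in {\sc PTime}; for parity it is handled as a Muller condition via Emerson--Lei in {\sc PSpace}; for safety via the Alur et al.\ result on $\always T_i\vee\bigwedge_j\eventually\overline{T_j}$). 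You instead replace $R_i$ by $V_i\cup W_i$, the union of the region where Player~$i$ alone wins $\varphi_i$ and the region where he alone forces doomsday $\bigcap_k\overline{\varphi_k}$. But the winning region of a disjunction is in general strictly larger than the union of the winning regions of its disjuncts: Player~$i$ may have a single strategy guaranteeing ``my objective or doomsday'' where which disjunct materializes depends on the coalition's behaviour (e.g.\ an adversary-owned state whose successors split into a state where $i$ can force his B\"uchi set and a state where $i$ can force doomsday lies in $R_i$ but in neither $V_i$ nor $W_i$). Hence your characterization is only sufficient, not necessary, and your algorithm would reject games that do admit a doomsday equilibrium; the equivalence you flag as ``the technical heart'' cannot be proved in the form you state it.

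A second, independent problem concerns the non-tail cases. For reachability, once the main play has visited some $T_j$, doomsday is no longer achievable on any continuation, so a player who has not yet reached his target must be able to force his own reachability outright from that point on; a statically computed ``force $\bigcap_k\overline{\varphi_k}$'' region (your $W_k$, a safety region for reachability objectives) ignores this prefix dependence. The paper deals with this via a separate argument (Proposition~\ref{lem:allalone}): the first state of the main play lying in $\bigcup_i T_i$ must belong to $\bigcap_j\strat{j}\eventually T_j$, and the retaliation regions are computed by a nested attractor/safety construction rather than as a union. (For safety the static region happens to be sound only because the main play is everywhere safe, which is exactly the argument the paper makes and your sketch omits.) Your lower-bound sketches (trivializing one player's objective for {\sc PTime}-hardness, generalized reachability for {\sc PSpace}-hardness, generalized parity for {\sc NP}/{\sc coNP}-hardness) are in the same spirit as the paper's reductions and are fine in outline, but the upper bounds need the disjunctive retaliation regions and the reachability-specific argument to go through.
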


\noindent In the sequel, game arena with perfect information
are just called game arena.

\paragraph{{\bf Tail objectives}}
We first present a generic algorithm that works for any tail objective and then analyze its complexity for the different cases. Then we establish the lower bounds.
Let us consider the following algorithm:
  \begin{itemize}
  	\item compute the retaliation region of each player: $R_i = \strat{i} (\varphi_i \cup \bigcap_{j=1}^{j=n} \overline{\varphi_j})$;
	\item check for the existence of a play within $\bigcap_{i=1}^{i=n} R_i$ that satisfies all the objectives $\varphi_i$.
  \end{itemize}

\noindent The correctness of this generic procedure is formalized in the following lemma:
  
\begin{lemma}\label{lem:retaliateregion}
Let $G=(S,{\cal P},s_{\sf init},\Sigma,\Delta)$ be an $n$-player game arena with $n$ tail objectives $(\varphi_i)_{1\leq i\leq n}$. Let $R_i = \strat{i} (\varphi_i \cup \bigcap_{j=1}^{j=n} \overline{\varphi_j})$
be the retaliation region for Player~$i$. There is a doomsday equilibrium in $G$ iff there exists an infinite play that 
$(1)$ belongs to $\bigcap_{i=1}^{i=n} \varphi_i$ and
$(2)$ stays within the set of states $\bigcap_{i=1}^{i=n} R_i$ and  
\end{lemma}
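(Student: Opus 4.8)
The plan is to prove both directions of the equivalence separately, using the tail-objective closure property as the key tool for stitching plays together.

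\textbf{($\Rightarrow$) From a doomsday equilibrium to a good play in the intersection of retaliation regions.}
Suppose $\Lambda=(\lambda_1,\dots,\lambda_n)$ is a doomsday equilibrium, and let $\rho=\outcome(\Lambda)$. By condition~(1) of the definition, $\rho\in\bigcap_i\varphi_i$, so~(1) of the lemma holds. It remains to show that every state along $\rho$ lies in every $R_i$, i.e. $\rho(k)\in R_i$ for all $k$ and all $i$. Fix $i$ and a position $k$. I would argue that, starting from $\rho(k)$, Player~$i$ can enforce $\varphi_i\cup\bigcap_j\overline{\varphi_j}$ by continuing to play according to $\lambda_i$ (suitably shifted to the suffix of plays starting at the prefix $\rho(0..k)$). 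Any play $\rho'$ consistent with this shifted strategy from $\rho(k)$ corresponds to a play $\rho(0..k-1)\cdot\rho'$ that is consistent with $\lambda_i$ from the start, hence lies in $\outcome_i(\lambda_i)$. By condition~(2) of the equilibrium, if $\rho(0..k-1)\cdot\rho'\notin\varphi_i$ then $\rho(0..k-1)\cdot\rho'\in\bigcap_j\overline{\varphi_j}$; now the closure property of tail objectives (applied to each $\varphi_j$ and its complement, which is also tail) lets me strip the finite prefix $\rho(0..k-1)$ and conclude $\rho'\in\varphi_i\cup\bigcap_j\overline{\varphi_j}$. Thus the shifted $\lambda_i$ witnesses $\rho(k)\in\strat{i}(\varphi_i\cup\bigcap_j\overline{\varphi_j})=R_i$.

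\textbf{($\Leftarrow$) From a good play in the intersection to a doomsday equilibrium.}
Suppose $\rho^\star$ is an infinite play with $\rho^\star\in\bigcap_i\varphi_i$ and $\visit(\rho^\star)\subseteq\bigcap_i R_i$. For each player $i$ and each state $s\in R_i$, fix a retaliation strategy $\mu_i^s$ from $s$ witnessing $s\in\strat{i}(\varphi_i\cup\bigcap_j\overline{\varphi_j})$. I would define $\lambda_i$ as follows: as long as the observed prefix is a prefix of $\rho^\star$, Player~$i$ plays the (unique) action that keeps the play on $\rho^\star$; as soon as the play leaves $\rho^\star$, say the first deviating state is $s$ (reached at some position $m$), Player~$i$ switches permanently to $\mu_i^s$ played from that point on. (If the play never leaves $\rho^\star$, $\lambda_i$ forever follows $\rho^\star$.) Then $\outcome(\Lambda)=\rho^\star\in\bigcap_i\varphi_i$, giving condition~(1). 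For condition~(2), fix $i$ and $\rho\in\outcome_i(\lambda_i)$ with $\rho\notin\varphi_i$. Since $\rho\notin\varphi_i$ but $\rho^\star\in\varphi_i$, the play $\rho$ must eventually deviate from $\rho^\star$ (if it never deviated it would equal $\rho^\star$). Let $s$ be the first state off $\rho^\star$; note $s$ is a $\Delta$-successor of some state $\rho^\star(m-1)\in R_i$, and I need $s\in R_i$ — here I must be slightly careful, since the deviation need not be caused by Player~$i$. The point is that after the deviation Player~$i$ follows $\mu_i^s$, so the suffix $\rho_{\geq m}$ of $\rho$ from position $m$ is consistent with $\mu_i^s$, hence $\rho_{\geq m}\in\varphi_i\cup\bigcap_j\overline{\varphi_j}$; but $\rho_{\geq m}\notin\varphi_i$ by the tail (closure) property, so $\rho_{\geq m}\in\bigcap_j\overline{\varphi_j}$, and applying the closure property again gives $\rho\in\bigcap_j\overline{\varphi_j}$, which is doomsday.

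\textbf{Main obstacle.}
The delicate point is the claim that the first state $s$ at which an outcome play $\rho\in\outcome_i(\lambda_i)$ leaves $\rho^\star$ actually belongs to $R_i$, so that the retaliation strategy $\mu_i^s$ is defined. A deviation from $\rho^\star$ at a state owned by a player other than Player~$i$ produces a successor state $s$ that is reachable in the arena but a priori need not lie in $R_i$. The resolution is to observe that along $\rho$, as long as we are on $\rho^\star$ we are inside $\bigcap_i R_i$ by hypothesis; the first off-$\rho^\star$ state $s$ is a successor of an on-$\rho^\star$ state owned by some player $p$. If $p=i$, then $\lambda_i$ would have kept the play on $\rho^\star$, so in fact $p\neq i$ and the deviating move is made by the environment (from Player~$i$'s viewpoint). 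Hence I should instead define the retaliation phase of $\lambda_i$ to trigger already at the on-$\rho^\star$ state just before the deviation: whenever the current prefix is a prefix of $\rho^\star$ ending in a state $t\in R_i$ owned by some player, Player~$i$ is "prepared" to retaliate, and the cleanest formulation is to let $\lambda_i$ coincide with a single strategy $\mu_i$ that from every state of $R_i$ enforces $\varphi_i\cup\bigcap_j\overline{\varphi_j}$ and that, additionally, agrees with the $\rho^\star$-following action at each state of $\rho^\star$ — this is possible precisely because $\rho^\star$ itself witnesses membership of each of its states in $R_i$, so following $\rho^\star$ is one admissible way of enforcing the retaliation objective. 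With $\lambda_i=\mu_i$ chosen this way, any $\rho\in\outcome_i(\lambda_i)$ is globally consistent with a strategy enforcing $\varphi_i\cup\bigcap_j\overline{\varphi_j}$ from $s_{\sf init}\in R_i$, so $\rho\in\varphi_i\cup\bigcap_j\overline{\varphi_j}$ directly, and condition~(2) follows immediately without any case analysis on who deviated. I expect writing this reformulation carefully — in particular checking that one can build $\mu_i$ that simultaneously retaliates from all of $R_i$ and follows $\rho^\star$ — to be the crux of the argument; it uses that $\strat{i}$ regions are "uniform" (a single strategy works from all states of the region, by a standard union-of-strategies argument on perfect-information games).
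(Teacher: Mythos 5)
Your left-to-right direction (from a doomsday equilibrium to a play satisfying $(1)$ and $(2)$) is correct and is essentially the paper's argument: shift $\lambda_i$ past the prefix $\rho(0..k)$, observe that outcomes of the shifted strategy prepended with that prefix are outcomes of $\lambda_i$, and use tailness of the $\varphi_j$ (and of their complements, unions and intersections) to strip the prefix and conclude $\rho(k)\in R_i$.

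The converse direction has a genuine gap, and it sits exactly at the point you flag as ``the crux'' and then leave open. Your final formulation postulates a single strategy $\mu_i$ that both enforces $\varphi_i\cup\bigcap_{j=1}^{j=n}\overline{\varphi_j}$ from every state of $R_i$ and follows $\rho^\star$ along its prefixes, and you justify its existence only by uniformity of $\strat{i}$-regions plus the remark that following $\rho^\star$ is ``one admissible way of enforcing the retaliation objective''. That does not prove the claim: the on-$\rho^\star$ behaviour is never the problem (that outcome is $\rho^\star\in\varphi_i$); the problem is what $\mu_i$ guarantees \emph{after} a deviation, and uniformity says nothing about whether a winning strategy can be modified to play the $\rho^\star$-actions on $\rho^\star$ and still win from $s_{\sf init}$. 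The missing step is the one the paper makes explicit: since $R_i=\strat{i}(\varphi_i\cup\bigcap_{j}\overline{\varphi_j})$ is Player~$i$'s winning region, only Player~$i$'s own moves can leave $R_i$; hence when the deviation occurs at an on-$\rho^\star$ state (owned by some $p\neq i$, as you correctly note), the play stays in $R_i$, by induction over the segment until Player~$i$ next gets the turn at some state $s\in R_i$, where he switches to a retaliation strategy; tailness then yields $\outcome_i(\lambda_i)\subseteq\varphi_i\cup\bigcap_{j}\overline{\varphi_j}$. Alternatively, your own earlier idea of triggering retaliation ``at the on-$\rho^\star$ state $t$ just before the deviation'' can be completed without this closure property: fix a winning strategy $\sigma_t$ for the retaliation objective from $t\in R_i$ and let $\lambda_i$, after the first deviation, play as $\sigma_t$ prescribes on the suffix starting at $t$; since $t\notin S_i$, the deviating move is just one of the adversarial behaviours over which $\sigma_t$ is quantified, so the suffix of the outcome from $t$ lies in $\varphi_i\cup\bigcap_{j}\overline{\varphi_j}$ and tailness finishes. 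Either of these short arguments closes the gap; as written, your proof asserts the existence of $\mu_i$ precisely where the work has to be done.
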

\begin{proof}
First, assume that there exists an infinite play $\rho$ such that $\rho \in \bigcap_i (\varphi_i \cap R_i^\omega)$. From $\rho$, and the retaliating strategies that exist in all states of $R_i$ for each player, we show the existence of DE $\Lambda=(\lambda_1,\lambda_2,\dots,\lambda_n)$. Player~$i$ plays strategy $\lambda_i$ as follows:
 he plays according to the choices made in $\rho$ as long as all the other players do so, and
as soon as the play deviates from $\rho$, Player~$i$ plays his retaliating strategy (when it is his turn to play).

First, let us show that if Player~$j$, for some $j \not=i$, deviates and the turn comes back to Player~$i$ in a state $s$ then $s \in R_i$. Assume that Player~$j$ deviates when he is in some $s'\in S_j$. As before there was no deviation, by definition of $\rho$, $s'$ belongs to $R_i$. But no matter what the adversary are doing in a state that belongs to $R_i$, the next state must be a state that belongs to $R_i$ (there is only the possibility to leave $R_i$ when Player~$i$ plays). So, by induction on the length of the segment of play that separates $s'$ and $s$, we can conclude that $s$ belongs to $R_i$. From $s$, Player~$i$ plays a retaliating
strategy and so all the outcomes from $s$ are in $\varphi_i\cup \bigcap_{j=1}^{j=n}\overline{\varphi_j}$, and since the objective are tails, the prefix up to $s$ is not important and we get
(from $s_{{\sf init}}$) $\outcome_i(\lambda_i)\subseteq \varphi_i\cup \bigcap_{j=1}^{j=n}\overline{\varphi_j}$. Therefore the second property of the definition of doomsday equilibria is satisfied. Hence $\Lambda$ is a DE.

Let us now consider the other direction. Assume that $\Lambda$ is a DE. Then let us show that $\rho=\outcome(\Lambda)$
satisfies properties $(1)$ and $(2)$. By definition of DE, we know that $\rho$ is winning for all the players, so $(1)$ is satisfied.
Again by definition of DE, $\outcome(\Lambda_i)\subseteq \varphi_i\cup \bigcap_{j=1}^{j=n} \overline{\varphi_j}$. Let $s$ be
a state of $\rho$ and $\pi$ the prefix of $\rho$ up to $s$. For all outcomes $\rho'$ of $\Lambda_i$ in $G_s$, 
we have $\pi\rho'\in \varphi_i\cup \bigcap_{j=1}^{j=n} \overline{\varphi_j}$, and since the objectives are tail, we get
$\rho'\in \varphi_i\cup \bigcap_{j=1}^{j=n} \overline{\varphi_j}$. Hence $s\in R_i$. Since this property holds
for all $i$, we get $s\in \bigcap_i R_i$, and $(2)$ is satisfied. \qed
\end{proof}

\noindent Accordingly, we obtain the following upper-bounds:

\begin{lemma}
The problem of deciding the existence of a doomsday equilibrium in an $n$-player game arena can be decided in {\sc PTime} for B\"uchi and co-B\"uchi objectives, and in {\sc PSpace}
 for parity objectives.
\end{lemma}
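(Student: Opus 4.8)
The plan is to run the two-step procedure validated by Lemma~\ref{lem:retaliateregion} and bound the cost of each step in the three cases. First I would, for each player $i$, compute the retaliation region $R_i=\strat{i}(\psi_i)$ where $\psi_i:=\varphi_i\cup\bigcap_{j=1}^{n}\overline{\varphi_j}$; this is exactly the winning-region computation of Player~$i$ in the two-player \emph{zero-sum} game on $G$ in which Player~$i$ owns $S_i$ and a single adversary (the coalition of all the other players) owns $S\setminus S_i$, with objective $\psi_i$ — determinacy of $\omega$-regular games makes $R_i$ well defined. Then, writing $W=\bigcap_i R_i$, I would decide step~(2): whether the graph underlying $G$, restricted to the vertex set $W$, contains an infinite path from $s_{\sf init}$ that lies in $\bigcap_i\varphi_i$; this is a one-player question, i.e.\ an $\omega$-automaton emptiness test. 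So everything reduces to analysing the shape of $\psi_i$ and of $\bigcap_i\varphi_i$.

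For B\"uchi objectives $\varphi_j={\sf B\ddot uchi}(T_j)$, I would note that $\overline{\varphi_j}={\sf coB\ddot uchi}(S\setminus T_j)$ and, since co-B\"uchi conditions are closed under intersection, $\bigcap_j\overline{\varphi_j}={\sf coB\ddot uchi}(U)$ with $U=S\setminus\bigcup_jT_j$; hence $\psi_i={\sf B\ddot uchi}(T_i)\cup{\sf coB\ddot uchi}(U)$, which is a parity condition with three priorities (priority $0$ on $T_i$, priority $2$ on $U\setminus T_i$, priority $1$ elsewhere). Parity games with a bounded number of priorities are solvable in polynomial time, so each $R_i$ and therefore $W$ is obtained in {\sc PTime}; and for step~(2), $\bigcap_i\varphi_i$ is a generalized B\"uchi condition, whose emptiness over the subgraph induced by $W$ (find a non-trivial SCC reachable from $s_{\sf init}$ that meets every $T_i$) is in {\sc PTime}. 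For co-B\"uchi objectives $\varphi_j={\sf coB\ddot uchi}(T_j)$, we get $\bigcap_j\overline{\varphi_j}=\bigcap_j{\sf B\ddot uchi}(S\setminus T_j)$, a generalized B\"uchi condition; I would apply the standard round-robin counter construction to turn it into a single B\"uchi condition at the price of an $n$-fold blow-up of the state space, observing that the co-B\"uchi disjunct, being a tail condition, lifts faithfully to the product, so that on the product arena $\psi_i$ is again (co-B\"uchi $\cup$ B\"uchi) $=$ a three-priority parity condition, solvable in {\sc PTime}; step~(2) then amounts to emptiness of the single co-B\"uchi condition $\bigcap_i\varphi_i={\sf coB\ddot uchi}(\bigcap_iT_i)$, trivially in {\sc PTime}. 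In both cases the overall running time is polynomial.

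For parity objectives $\varphi_j={\sf parity}(p_j)$, I would use that $\overline{\varphi_j}$ is again a parity condition (shift every priority by one), so $\psi_i$ is a positive Boolean combination of parity conditions, i.e.\ an $\omega$-regular (Emerson--Lei / Muller) condition presented succinctly by a Boolean formula over the priority values attached to states. Games with such succinctly presented conditions can be solved in polynomial space, so querying each state of $G$ yields $R_i$ — and hence $W$ — in {\sc PSpace}, the $n$ region computations sharing the same polynomial workspace. For step~(2), $\bigcap_i\varphi_i$ is a conjunction of parity conditions, i.e.\ a Streett condition, and one-player emptiness for Streett conditions is in {\sc PTime}. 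Composing a polynomial number of {\sc PSpace} subroutines keeps the whole procedure in {\sc PSpace}.

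The game-theoretic bookkeeping I expect to be routine: a deviating coalition still induces a zero-sum game, $\omega$-regular games are determined, restricting attention to the sub-arena $W$ is merely a safety restriction, and the round-robin product is textbook. The real work — and the step I expect to be the main obstacle — is getting the \emph{shape} of $\psi_i$ right: recognizing that for B\"uchi/co-B\"uchi it collapses, after at most a linear blow-up, to a fixed-index parity condition so that polynomial parity solving applies, and that for parity it is precisely the kind of succinctly presented $\omega$-regular condition for which a {\sc PSpace} game-solving result is available; and, in each case, checking that the tail property really does make the product and the $W$-restriction faithful — which is exactly the hypothesis that let us invoke Lemma~\ref{lem:retaliateregion} in the first place.
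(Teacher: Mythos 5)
Your proposal is correct and follows essentially the same route as the paper: compute the retaliation regions $R_i$ via Lemma~\ref{lem:retaliateregion} by solving, for each player, a zero-sum game whose objective is a disjunction of a B\"uchi and a co-B\"uchi condition (your three-priority parity formulation is just the one-pair Streett condition the paper uses, and your round-robin counter makes explicit the polynomial reduction the paper only sketches for co-B\"uchi), then do a polynomial-time one-player emptiness check inside $\bigcap_i R_i$. For parity, your appeal to {\sc PSpace} solvability of games with succinctly presented Boolean-combination (Muller/Emerson--Lei) conditions is exactly the paper's use of the Emerson--Lei algorithm.
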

\begin{proof}
By Lemma \ref{lem:retaliateregion} one first needs to compute the retaliation regions $R_i$ for all
$i\in\{1,\dots,n\}$. Once the sets $R_i$ have been computed, it is clear that the existence of a play winning  for all players is decidable in {\sc PTime} for all the three types of objectives.
For the B\"uchi and the co-B\"uchi cases, let us show how to compute the retaliation regions $R_i$. We start with B\"uchi and we assume that each player wants to visit
a set of states $T_i$ infinitely often. Computing the sets $R_i$ boils down to computing the set of states
$s$ from which Player $i$ has a strategy to enforce the objective (in LTL syntax) $\always\Diamond T_i\ \vee\ \bigwedge_{j=1}^{j=n} \Diamond \always \overline{T_j}$, which is equivalent to the formula
$\always\Diamond T_i\ \vee\ \Diamond \always \bigcap_{j=1}^{j=n} \overline{T_j}$. This is equivalent to
a disjunction of a B\"uchi and a co-B\"uchi objective, which is thus equivalent to a Streett objective
with one Streett pair and can be solved in PTime with a classical algorithm, e.g. \cite{DBLP:conf/lics/PitermanP06}. Similarly, for co-Büchi objectives, one can reduce the computation of the regions $R_i$ in polynomial time to the disjunction of a Büchi objective and a co-Büchi objective.

For the parity case, the winning objectives for the retaliation sets can be encoded compactly as Muller objectives
defined by a propositional formula using one proposition per state. Then they can be solved in 
{\sc PSpace} using the algorithm of Emerson and Lei presented in~\cite{EmersonL85}.\qed
\end{proof}

%We now consider {\em parity} games. Let $G=(S, s_{{\sf init}}, (\Sigma)_{1 \leq i \leq n}, \Delta)$ be an $n$-player game arena and $d \in \nat$.
%%
%In parity games, the objectives are given by {\em parity functions} $f_i : S \rightarrow \{0,1,\dots,d\}$, one for each player $i \in \{1,2,\dots,n\}$.
%%
%Given an infinite play $\rho=s_0 s_1 \dots s_n \dots$ in $G$, we denote by ${\sf Inf}(\rho)$ for the set of $s \in S$ that appear infinitely many times along $\rho$. Given a parity function $f$, a play $\rho$ satisfies the parity objective defined by $f$, written $\rho \models f$, if $\min \{ f(s) \mid s \in {\sf Inf}(\rho) \}$ is {\em even}.
%% 
%For parity objectives, doomsday equilibria can be reformulated as follows:
%a strategy profile $\Lambda=(\lambda_1,\lambda_2,\dots,\lambda_n)$ witnesses a doomsday equilibrium the game arena $G$ and parity objectives $(f_i)_{i=1,n2,\dots,n}$ if:
%\begin{enumerate}
%  \item for all $i$, $\outcome(\Lambda)\models f_i$; 
%  \item for all $i$, $\outcome(\lambda_i) \not\models f_i$ implies for all $j \not=i$, $\outcome(\lambda_i) \not\models f_j$
%\end{enumerate}
\noindent Let us now establish the lower bounds.

\begin{lemma}\label{lem:lowerboundstail}
The problem of deciding the existence of a DE in an $n$-player game arena is {\sc PTime-Hard} for B\"uchi and co-B\"uchi objectives, {\sc NP-Hard} and {\sc coNP-Hard} for parity objectives. All the hardness results hold even for a fixed  number of players.
\end{lemma}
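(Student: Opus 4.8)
\textbf{The {\sc PTime} lower bounds.} For B\"uchi (resp.\ co-B\"uchi) objectives the plan is to reduce from the problem of deciding the winner of a two-player zero-sum B\"uchi (resp.\ co-B\"uchi) game, which is {\sc PTime}-complete since it already subsumes reachability games (take the target absorbing). Given such a game $H$, presented as a game arena whose states are partitioned between Players~$1$ and~$2$, with target $F$ and initial state $v_0$, I would let $G$ be the same arena equipped with the two objectives $\varphi_1:={\sf B\ddot uchi}(F)$ (resp.\ ${\sf coB\ddot uchi}(F)$) and $\varphi_2:={\sf B\ddot uchi}(S)$ (resp.\ ${\sf coB\ddot uchi}(S)$). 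Here $\varphi_2=S^\omega$, so $\overline{\varphi_2}=\emptyset$; hence $R_2=S$ and, using $\varphi_1\cup(\overline{\varphi_1}\cap\emptyset)=\varphi_1$, the retaliation region $R_1$ of Lemma~\ref{lem:retaliateregion} is exactly the winning region $\strat{1}\varphi_1$ of Player~$1$ in $H$. By Lemma~\ref{lem:retaliateregion} a doomsday equilibrium exists in $G$ iff some play from $v_0$ stays inside $\strat{1}\varphi_1$ and belongs to $\varphi_1$; since a winning strategy for a (prefix-independent) B\"uchi/co-B\"uchi objective never leaves the winning region against any adversary, this is equivalent to $v_0\in\strat{1}\varphi_1$, i.e.\ to Player~$1$ winning $H$. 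As $G$ has only two players this already yields {\sc PTime}-hardness for a fixed number of players.

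\textbf{The parity lower bounds.} For {\sc NP}-hardness I would reduce from {\sc 3-Sat}: given a formula $\psi$ over $x_1,\dots,x_m$ with clauses $C_1,\dots,C_k$, I would build a game arena with a \emph{constant} number of players and parity priority functions of polynomial size such that, by Lemma~\ref{lem:retaliateregion}, a doomsday equilibrium exists iff $\psi$ is satisfiable. The outcome of any doomsday equilibrium is forced to be a lasso whose loop encodes a truth assignment $\nu$, ``announced'' bit by bit by one designated player and re-announced forever, and the objectives and arena are set up so that the two requirements of Lemma~\ref{lem:retaliateregion} --- the loop satisfies every $\varphi_i$, and from every loop state each player can retaliate --- can be met simultaneously exactly when $\nu$ satisfies all clauses. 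Since a finite arena cannot store $\nu$, clause satisfaction is verified by a prover/verifier sub-arena embedded in the retaliation regions: a verifying player may challenge any clause $C_j$, the announcing player must then point, in the following announcement round, to a literal of $C_j$ that has the right value under $\nu$, and the parity priorities make a missing or inconsistent answer losing for the prover. The $k$ clauses thus enter the construction through the arena and the priorities rather than through extra players, which is what keeps the number of players bounded. For {\sc coNP}-hardness I would reduce symmetrically from {\sc Unsat}: make the ``good play'' of part~(1) of Lemma~\ref{lem:retaliateregion} trivially available (a safe sink loop accepted by every $\varphi_i$ and contained in every $R_i$), so that a doomsday equilibrium exists iff $v_0\in\bigcap_i R_i$, and arrange one retaliation region so that $v_0\in R_i$ iff the deviating coalition cannot announce an assignment satisfying the input CNF --- which, once the coalition commits and a clause is challenged, again reduces to a polynomial-size parity check. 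Then a doomsday equilibrium exists iff the formula is unsatisfiable.

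\textbf{Where the work is.} The {\sc PTime} argument is routine once Lemma~\ref{lem:retaliateregion} is in hand. The technical heart is the parity case: one has to encode an arbitrary CNF using only a constant number of parity objectives --- so the clauses cannot be matched one-to-one with players --- and then establish, through Lemma~\ref{lem:retaliateregion}, the exact equivalence between the existence of a doomsday equilibrium and (un)satisfiability. The delicate point is the clause check inside the retaliation regions: because the arena cannot remember an unbounded assignment, the check must be delegated to an adversarial prover/verifier interaction, and one must verify that a dishonest (inconsistent) prover is defeated by the parity priorities on \emph{both} sides of each reduction; making that mechanism cooperate cleanly with the retaliation clause of Lemma~\ref{lem:retaliateregion} is the step I expect to be the main obstacle.
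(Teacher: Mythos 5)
Your {\sc PTime}-hardness argument for B\"uchi and co-B\"uchi objectives is correct and is essentially the paper's reduction: same arena, original B\"uchi objective for Player~1, trivial objective for Player~2, so that a doomsday equilibrium exists iff Player~1 wins the zero-sum game; routing the equivalence through Lemma~\ref{lem:retaliateregion} instead of arguing directly on the definition is an inessential difference.

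The parity part, however, has a genuine gap: you never give the reduction, you only describe a plan for one. The paper does not go back to {\sc 3-Sat} at all; it reduces from two-player zero-sum games whose objective is a \emph{disjunction} (for {\sc NP}-hardness) or a \emph{conjunction} (for {\sc coNP}-hardness) of two parity objectives, which are already known to be {\sc NP}-hard and {\sc coNP}-hard by Chatterjee--Henzinger--Piterman, and wraps such a game in a small gadget (an initial state where everyone is happy forever, a {\sf deviate} action entering a modified copy of the game, and absorbing ${\sf Bad}$ states with suitably chosen priorities) so that a doomsday equilibrium exists iff the distinguished player can retaliate, iff Player~$A$ wins the generalized parity game. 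Your route would instead have to re-establish that hardness from scratch inside the DE framework: you must exhibit a polynomial-size arena with a \emph{constant} number of parity priority functions in which the existence of a DE is equivalent to (un)satisfiability of an arbitrary CNF. That is exactly the step your sketch leaves open. In particular, it is not shown (i) why the outcome of a DE is ``forced to be a lasso whose loop encodes a truth assignment,'' (ii) how a fixed number of parity conditions on a memoryless arena can detect that the announcing player re-announces the \emph{same} unbounded assignment in every round (inconsistency of the form ``$x_i$ announced both true and false infinitely often'' is naturally a Streett/Rabin-type requirement, and taming it with a single parity function per player is precisely where the difficulty of generalized parity hardness lives), and (iii) that the clause-challenge mechanism interacts correctly with the retaliation condition of Lemma~\ref{lem:retaliateregion} in both directions of both reductions. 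You acknowledge this yourself as ``the main obstacle,'' but that obstacle is the entire content of the parity lower bounds, so as it stands the {\sc NP}- and {\sc coNP}-hardness claims are not proved. The quickest repair is to follow the paper: cite the hardness of generalized parity games and build the deviate/${\sf Bad}$ gadget so that one player's ability to retaliate encodes winning the given conjunction or disjunction of two parity objectives.
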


\begin{proof}
   The hardness for B\"uchi and co-B\"uchi objectives holds already for 2 players. We describe the reduction for B\"uchi and it is similar for co-B\"uchi. We reduce the problem of deciding the winner in a two-player zero-sum game arena $G$ with a B\"uchi objective (known as a {\sc PTime-Hard} problem \cite{Imm4}) to the existence of a DE for B\"uchi objectives with two players. Consider a copy $G'$ of the game arena $G$
   and the following two objectives: Player 1 has the same B\"uchi objective as Player $1$ in $G$, and Player $2$ has a trivial B\"uchi objective (i.e. all states are B\"uchi states). Then clearly there exists
   a DE in $G'$ iff Player 1 has a winning strategy in $G$. Details are given in appendix.

For parity games, we can reduce zero-sum two-player games with a conjunction of parity objectives (known to be {\sc coNP-Hard} \cite{ChatterjeeHP07}) to the existence of a DE in a three player game with parity objectives.  
Similarly, we can reduce the problem of deciding the winner in a two-player zero-sum game with a disjunction of parity objectives (known to be {\sc NP-Hard} \cite{ChatterjeeHP07}) to the existence of a DE in a two-player game with parity objectives. The main idea in the two cases is to construct a game arena where one of the players can retaliate iff Player 1 in the original two-player zero-sum game has a winning strategy.
Details are given in appendix. \qed
\end{proof}
 
 % \begin{corollary}
 As a corollary of this result, deciding the existence of a secure equilibrium in a
$2$-player game such that both players satisfy their parity objectives is  {\sc NP-Hard}.
 % \end{corollary}

%We prove in the following lemma that the existence of a doomsday equilibrium for parity objectives can be establish with an algorithm that uses polynomial space only.
%
%\begin{lemma}\label{lem:vneparity}
%The existence of a doomsday equilibrium in an $n$-player game with perfect information and parity winning conditions can be established with an algorithm that uses polynomial space.
%\end{lemma}
%\begin{proof}
%The algorithm for deciding the existence of a doomsday equilibrium for parity objective is similar to the one for safety objectives. The only difference is when computing the set $R_i$ from which Player~i can ensure that he has a strategy 
%$\outcome(\lambda_i) \not\models f_i$ implies for all $j \not=i$, $\outcome(\lambda_i) \not\models f_j$.  This condition can be encoded compactly as a Muller condition defined by a proposition formula using one proposition per state and solved in polynomial space using the algorithm of Emerson and Lei defined in~\cite{EmersonL85}.\qed
%\end{proof}

\paragraph{{\bf Reachability objectives}}

We now establish the complexity of deciding the existence of a doomsday equilibria
in an $n$-player game with reachability objectives. We first establish an important property for
reachability objectives: 

\begin{proposition}
\label{lem:allalone}
Let $G=(S, {\cal P},s_{{\sf init}}, \Sigma, \Delta)$ be a game arena, and $(T_i)_{1\leq i\leq n}$ be $n$ subsets of $S$.
Let $\Lambda$ be a doomsday equilibrium in $G$ for the reachability objectives $({\sf Reach}(T_i))_{1\leq i\leq n}$. Let $s$ the first state in $\outcome(\Lambda)$ such that $s\in \bigcup_i T_i$. Then every player
has a strategy from $s$, against all the other players, to reach his target set.
\end{proposition}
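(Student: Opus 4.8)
The plan is to prove the stronger statement that, for every player~$i$, the strategy $\Lambda_i$ of Player~$i$ in the equilibrium, shifted so that it is applied from $s$ onwards, is already a winning strategy for ${\sf Reach}(T_i)$ in $G_s$ against the coalition of all the other players. Write $\rho=\outcome(\Lambda)$ and let $\pi$ be the shortest prefix of $\rho$ whose last state lies in $\bigcup_j T_j$. Since $\Lambda$ is winning for all players (clause~(1) of a DE), $\rho$ does visit $\bigcup_j T_j$, so $\pi$ exists and $\last(\pi)=s$; fix an index $k$ with $s\in T_k$. For each $i$, define $\lambda_i^s$ on $\prefplays_i(G_s)$ by $\lambda_i^s(\pi')=\Lambda_i(\pi\odot\pi')$, where $\pi\odot\pi'$ denotes the object obtained by gluing $\pi$ and $\pi'$ at their common state $s$ (a play prefix, or an infinite play, accordingly). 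This is well defined because $\Lambda_i$ is total on $\prefplays_i(G)$.

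The heart of the argument is the claim that $\outcome_i(\lambda_i^s)\subseteq{\sf Reach}(T_i)$ for every $i$. Fix $\rho'\in\outcome_i(\lambda_i^s)$. First one checks, by unfolding the definitions, that $\pi\odot\rho'\in\outcome_i(\Lambda_i)$: along $\pi$ the play agrees with a prefix of $\outcome(\Lambda)$, which is consistent with $\Lambda_i$; and from $s$ on it is consistent with $\lambda_i^s$, hence, by the definition of $\lambda_i^s$, with $\Lambda_i$. If $s\in T_i$ then $\rho'$ starts in $T_i$ and there is nothing to prove, so assume $s\notin T_i$ and, towards a contradiction, that $\rho'$ never visits $T_i$. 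By minimality of $\pi$, no state of $\pi$ other than $s$ lies in $\bigcup_j T_j$, and $s\notin T_i$, so $\pi$ does not visit $T_i$ at all; hence $\pi\odot\rho'$ does not visit $T_i$, i.e. $\pi\odot\rho'\notin\varphi_i$. Clause~(2) of the definition of a DE, applied to $\pi\odot\rho'\in\outcome_i(\Lambda_i)$, then yields $\pi\odot\rho'\in\bigcap_{j=1}^{n}\overline{\varphi_j}$, so in particular $\pi\odot\rho'$ never visits $T_k$. But $s\in T_k$ and $s$ occurs on $\pi\odot\rho'$ — a contradiction. Therefore $\rho'$ visits $T_i$, which proves the claim; ranging over all $i$ gives the proposition.

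The only genuine subtlety is that reachability objectives are not tail objectives, so the prefix-independence reasoning behind Lemma~\ref{lem:retaliateregion} is not available here. Choosing $s$ to be the \emph{first} point of $\outcome(\Lambda)$ in $\bigcup_j T_j$ is exactly what rescues the argument: it guarantees that for any player $i$ with $s\notin T_i$ the prefix $\pi$ has not already satisfied $\varphi_i$, so a failure of $\rho'$ to reach $T_i$ is really a failure of the whole play $\pi\odot\rho'$, which lets the doomsday clause be invoked to contradict $s\in T_k$. The remaining steps — that glued plays are consistent with the relevant strategies — are routine verifications against the definitions of outcome and strategy profile.
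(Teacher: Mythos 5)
Your proof is correct, but it takes a different route from the paper's. The paper argues by contradiction via determinacy: if some Player~$i$ had no strategy from $s$ to reach $T_i$, then $s\notin T_i$ and the coalition of the other players could force avoiding $T_i$ from $s$; prefixing this with the equilibrium play up to $s$ gives an outcome of $\Lambda_i$ that violates $\varphi_i$ while satisfying $\varphi_k$ for the player $k$ with $s\in T_k$, contradicting the retaliation clause — so the winning strategy whose existence is asserted is an abstract consequence of determinacy, not exhibited. You instead avoid determinacy altogether and prove a stronger, constructive statement: the equilibrium strategy $\Lambda_i$ itself, shifted past the prefix $\pi$ ending at $s$, is already winning for ${\sf Reach}(T_i)$ in $G_s$. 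Your key step — any outcome of the shifted strategy avoiding $T_i$ glues with $\pi$ to an outcome of $\Lambda_i$ avoiding $T_i$ entirely, whence the doomsday clause forces all objectives to fail, contradicting $s\in T_k$ — makes explicit exactly the point the paper leaves implicit, namely that the firstness of $s$ guarantees the prefix has not yet touched any target set, which is what rescues the argument for the non-tail reachability objectives. Both proofs rest on the same two facts ($\pi$ contains no target state before $s$, and $s\in T_k$ for some $k$); yours buys an explicit witness strategy and no appeal to determinacy, the paper's buys brevity. Your verification that $\pi\odot\rho'\in\outcome_i(\Lambda_i)$, including the move at the junction state $s$, is the routine but necessary bookkeeping, and it is done correctly.
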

\begin{proof} 
W.l.o.g. we can assume that $s\in T_1$. If some player, say Player $2$, as no strategy from $s$ to reach his target set $T_2$, then
necessarily $s\not\in T_2$ and by determinancy the other players have a strategy from $s$ to make Player $2$ lose. This contradicts the
fact that $\Lambda$ is a doomsday equilibrium as it means that $\Lambda_2$ is not a retaliating strategy. \qed
% W.l.o.g. assume that $s \in T_1$, and assume, again w.l.o.g. and for the sake of obtaining a contradiction that Player~$2$ does not have a strategy against all other players to force ${\sf reach}(T_2)$ from $s$. Then, we know that there exist a strategy profile $\Lambda'_{-2}=(\lambda_1,\lambda_3,\dots,\lambda_n)$ for the other players such that when they play according to this strategy profile, then $T_2$  is never visited. Then, if after the prefix that reaches $s$, those players play according to $\Lambda'_{-2}$ then the outcome $\rho$ of the game will be such that $\rho \not\in {\sf reach}(T_2)$ and $\rho \in \bigcap_{j \not= 2} {\sf reach}(T_j)$ which violates the condition~2 of the definition of doomsday equilibrium, and so contradicts the fact that $\Lambda=(\lambda_1,\lambda_2,\dots,\lambda_n)$ witnesses such an equilibrium.\qed
\end{proof}

\begin{lemma}
The problem of deciding the existence of a doomsday equilibrium in an $n$-player game with reachability objectives is in {\sc PTime}.
\end{lemma}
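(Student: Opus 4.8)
Since reachability objectives are not tail, Lemma~\ref{lem:retaliateregion} does not apply directly; the plan is to use the same ``retaliation region'' idea but with a \emph{stronger} region valid once the first target has been seen, as suggested by Proposition~\ref{lem:allalone}. For each player $i$ let $W_i=\strat{i}\reach(T_i)$ (the states from which Player~$i$ can force her own target by herself) and $R_i=\strat{i}\bigl(\reach(T_i)\cup\bigcap_{j=1}^{j=n}\overline{\reach(T_j)}\bigr)$ (the states from which she can force: her own target, or that no target at all is ever visited); note $W_i\subseteq R_i$. Put $W=\bigcap_i W_i$ and $R=\bigcap_i R_i$. The statement to be proved is: \emph{$G$ has a doomsday equilibrium iff there is a finite path $s_{{\sf init}}=s_0,s_1,\dots,s_m$ in $G$ with $s_m\in W$ and $s_j\in R\setminus\bigcup_l T_l$ for all $j<m$} (in particular a length-$0$ path, so a DE exists whenever $s_{{\sf init}}\in W$). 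Given $W$ and $R$ this is a linear-time graph search; each $W_i$ is the winning region of a reachability game, and each $R_i$ the winning region of a game whose objective is a fixed $\omega$-regular condition described by a constant-size deterministic automaton (a Boolean combination of a reachability and a safety objective), so all $2n$ regions are computable in \textsc{PTime}. Hence the procedure runs in polynomial time once the equivalence is established.

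\textbf{From a DE to a path.}
Let $\Lambda=(\lambda_1,\dots,\lambda_n)$ be a DE and $\rho=\outcome(\Lambda)$. By condition~(1), $\rho$ visits every $T_i$; let $k$ be the least position with $\rho(k)\in\bigcup_l T_l$. Applying Proposition~\ref{lem:allalone} to $s=\rho(k)$ gives that every player can force her target from $\rho(k)$, i.e.\ $\rho(k)\in W_i$ for all $i$, hence $\rho(k)\in W$. Now fix $i$ and $m<k$: the prefix $\rho(0..m)$ is consistent with $\lambda_i$ and visits no target, and since $\outcome_i(\lambda_i)\subseteq\reach(T_i)\cup\bigcap_{j=1}^{j=n}\overline{\reach(T_j)}$, the behaviour of $\lambda_i$ on continuations of $\rho(0..m)$, read as a strategy in $G_{\rho(m)}$, forces $\reach(T_i)\cup\bigcap_{j=1}^{j=n}\overline{\reach(T_j)}$ (any compatible play either meets $T_i$ within that continuation, or meets no target at all). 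Thus $\rho(m)\in R_i$, and as $i$ was arbitrary, $\rho(m)\in R\setminus\bigcup_l T_l$. So $\rho(0..k)$ is a path of the required form.

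\textbf{From a path to a DE.}
Conversely, take such a path ending in $s\in W$; we build the outcome $\rho^{\star}$ of a DE. The core step is: \emph{from any state of $W$ one can produce a finite play visiting all targets and, for every $i$, staying inside $W_i$ until $T_i$ is first visited}. Build it by peeling off targets one at a time: let $U$ be the set of players whose target is not yet visited and keep the invariant ``current state $\in\bigcap_{l\in U}W_l$''; to reach some $T_r$ with $r\in U$, at a state owned by Player~$r$ play her attractor move towards $T_r$ (it lowers the attractor rank and stays in $W_r$), and at a state owned by any other player $j$ pick a successor in $W_j$ if $j\in U$ and an arbitrary one otherwise. Because each $W_l$ is a trap for the players other than $l$ (from a state of $W_l\setminus(S_l\cup T_l)$ \emph{all} successors lie in $W_l$), every such move stays in $\bigcap_{l\in U}W_l$, and as the rank strictly decreases $T_r$ is reached within $|S|$ steps; deleting from $U$ every target met on the way and iterating over $r$ hits all targets, after which we loop on any cycle. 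Prepending the given path to $s$ gives $\rho^{\star}$, whose first target is still in $W$ (before it $U=\{1,\dots,n\}$, so the invariant reads ``in $W$''). The DE strategies: follow $\rho^{\star}$ while nobody deviates; at the first deviation Player~$i$ switches to her retaliating strategy --- her $R_i$-strategy if no target has been visited yet (legitimate because up to the first target $\rho^{\star}$ stays in $R\subseteq R_i$ and avoids targets, so the deviation lands in $R_i$), and her $W_i$-strategy otherwise (legitimate because $\rho^{\star}$ stays in $W_i$ until $T_i$ is visited). A case split on where the deviation happens, using that ``reach $T_i$ or meet no target'' is exactly condition~(2), shows $\Lambda$ is a DE.

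\textbf{Main obstacle.}
The delicate point is the peeling construction: one must verify that the moves used to drive the play towards the current target never leave the intersection of the $W_l$'s that still matter, which relies on the trap property of the individual $W_l$ together with the fact that the ownership partition leaves us a free choice of successor at every state. The forward direction (through Proposition~\ref{lem:allalone}) and the accounting showing $W_i$ and $R_i$ are \textsc{PTime}-computable are routine.
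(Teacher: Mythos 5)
Your proof is correct and takes essentially the same approach as the paper: compute the per-player retaliation regions $R_i$ and the region $\bigcap_j\strat{j}\eventually T_j$ where everyone can force her own target, reduce the existence of a DE to a polynomial-time search for a path from $s_{\sf init}$ through $\bigcap_j R_j$ into that region (the converse direction resting on Proposition~\ref{lem:allalone}), and assemble the equilibrium by following the witness play and switching to a retaliating strategy upon deviation. The only differences are cosmetic: the paper insists the path end in a target state of $\bigcap_j\strat{j}\eventually T_j$ and obtains the continuation visiting all targets simply by letting each player play her own attractor strategy simultaneously from that state, whereas you allow any endpoint in $W$ and build the continuation by an explicit peeling argument using the trap property of the $W_l$'s --- both are valid.
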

\begin{proof}The algorithm consists in:

$(1)$ computing the sets $R_i$ from which player $i$ can retaliate, i.e. the set of states $s$ from which Player~$i$ has a strategy to force, against all other players, an outcome such that $\eventually T_i \lor ( \bigwedge_{j=1}^{j=n} \always \overline{T_j})$. This set can be obtained
          by first computing the set of states $\strat{i} \eventually T_i$ from which Player~$i$ can force to reach $T_i$. It is done in {\sc PTime} by solving a classical two-player reachability game.
Then the set of states where Player~$i$ has a strategy $\lambda_i$ such that $\outcome_i(\lambda_i) \models \always ((\bigcap_{j=1}^{j=n} \overline{T_j}) \lor \strat{i} \eventually T_i) \}$, that is to confine the plays in states that do not satisfy the reachability objectives of the adversaries or from where Player~$i$ can force its own reachability objective. 
Again this can be done in {\sc PTime} by solving a classical two-player safety game.

$(2)$ then, checking the existence of some $i\in\{1,\dots,n\}$ and some finite path $\pi$ starting from $s_{{\sf init}}$ and that stays within $\bigcap_{j=1}^{j=n} R_j$ before reaching a state $s$ such that $s \in T_i$ and $s \in \bigcap_{j=1}^{j=n}\strat{j} \eventually T_j$.

Let us now prove the correctness of our algorithm. From its output, we can construct the strategy
profile $\Lambda$ where each $\Lambda_j$ ($j=1,\dots,n$) is as follows: follow $\pi$ up to the point
where either another player deviates and then play the retaliating strategy available in $R_i$, or to the point
where $s$ is visited for the first time and then play according to a strategy (from $s$) that force a visit to $T_i$ no matter
how the other players are playing. Clearly, $\Lambda$ witnesses a DE. Indeed, if $s$ is
reached, then all players have a strategy to reach their target set (including Player $i$ since $s\in T_i$) . By playing so they will all
eventually reach it. Before reaching $s$, if some of them deviate, the other have a strategy to retaliate as $\pi$ stays in $\bigcap_{j=1}^{j=n} R_j$. The other direction follows from Proposition~\ref{lem:allalone}. \qed
\end{proof}

\begin{lemma}
The problem of deciding the existence of a DE in a $2$-player game with reachability objectives is {\sc PTime-Hard}.
\end{lemma}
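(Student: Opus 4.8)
The plan is to reduce from the classical {\sc PTime-Hard} problem of deciding whether Player~$1$ has a winning strategy from the initial state in a two-player zero-sum game arena equipped with a reachability objective \cite{Imm4}, using the same idea as in the proof of Lemma~\ref{lem:lowerboundstail} for B\"uchi objectives. Given such a game $G=(S,\{S_1,S_2\},s_{\sf init},\Sigma,\Delta)$ with target set $T\subseteq S$ for Player~$1$, I would build the $2$-player (non-zero-sum) instance $G'$ whose arena is exactly $G$, with objectives $\varphi_1={\sf reach}(T)$ for Player~$1$ and the trivial objective $\varphi_2={\sf reach}(S)=S^\omega$ for Player~$2$. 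This transformation is computable in logarithmic space, so it suffices to prove that $G'$ admits a doomsday equilibrium if and only if Player~$1$ wins $G$. The point of giving Player~$2$ the everywhere-true objective is that then $\overline{\varphi_2}=\emptyset$, hence the doomsday set $\overline{\varphi_1}\cap\overline{\varphi_2}$ is empty as well.

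For the first direction, suppose Player~$1$ has a winning strategy $\lambda_1$ in $G$, i.e.\ $\outcome_1(\lambda_1)\subseteq\varphi_1$. I claim that for any strategy $\lambda_2$ of Player~$2$, the profile $\Lambda=(\lambda_1,\lambda_2)$ is a doomsday equilibrium of $G'$: its outcome lies in $\outcome_1(\lambda_1)\subseteq\varphi_1$ and trivially in $\varphi_2$, so Property~$(1)$ holds; Property~$(2)$ for Player~$1$ holds vacuously since no $\rho\in\outcome_1(\lambda_1)$ violates $\varphi_1$; and Property~$(2)$ for Player~$2$ holds vacuously since no play violates $\varphi_2=S^\omega$. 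For the converse, suppose $\Lambda=(\lambda_1,\lambda_2)$ is a doomsday equilibrium of $G'$. Property~$(2)$ applied to Player~$1$ says that every $\rho\in\outcome_1(\lambda_1)$ with $\rho\notin\varphi_1$ would have to satisfy $\rho\in\overline{\varphi_1}\cap\overline{\varphi_2}=\emptyset$, which is impossible; hence $\outcome_1(\lambda_1)\subseteq\varphi_1$, i.e.\ $\lambda_1$ is a winning strategy for Player~$1$ in $G$. This establishes the equivalence and hence the hardness.

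I expect no real obstacle: the construction is trivial and the correctness argument is a short logical manipulation. The only point deserving care is the precise reading of the retaliation condition~$(2)$ — in particular, observing that a trivially satisfied objective for Player~$2$ simultaneously makes Player~$2$'s retaliation clause vacuous and, crucially, forces Player~$1$'s strategy in any doomsday equilibrium to be outright winning rather than merely ``retaliating'', because the doomsday target $\overline{\varphi_1}\cap\overline{\varphi_2}$ collapses to $\emptyset$. One should also make sure the source problem is stated in the exact model used here (turn-based, deterministic transition function), but reachability games of precisely this form are well known to be {\sc PTime}-complete.
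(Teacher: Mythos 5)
Your proof is correct and follows essentially the same route as the paper: a reduction from two-player zero-sum reachability (And-Or graph reachability, {\sc PTime-Hard} by \cite{Imm4}) obtained by giving one player a trivial reachability objective, so that a doomsday equilibrium exists iff the other player wins the zero-sum game. Your write-up just spells out the vacuity of the retaliation clauses, which the paper leaves implicit.
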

\begin{proof}
It is proved by an easy reduction from the And-Or graph reachability problem~\cite{Imm4}: if reachability
is trivial for one of the two players, the existence of a doomsday equilibrium is equivalent to
the existence of a winning strategy for the other player in a two-player zero sum reachability game.\qed
\end{proof}

\paragraph{{\bf Safety Objectives}}

%A {\em safety objective} is defined by sets of safe states $(\safe_i)_{i=1,n2,\dots,n} \subseteq S$, one for each player $i \in \{1,2,\dots,n\}$. Each set of states $\safe_i$ implicitly defines the set $W_i=\{ \rho \in \plays(G) \mid \forall j \geq 0 \cdot \rho(j) \in \safe_i \}$ of winning plays for Player~$i$. When a play $\rho$ satisfies a safety objective defined by the set of safe states $\safe$, we write $\rho \models \always \safe$.
%The notion of doomsday equilibrium for safety game can thus be rephrased as follows: a strategy
%profile $\Lambda=(\lambda_1,\lambda_2,\dots,\lambda_n)$ witnesses a doomsday equilibrium for the game arena $G$ and safety objectives $(\safe_i)_{i=1,n2,\dots,n}$ if:
%\begin{enumerate}
%  \item $\outcome(\Lambda)\models \bigwedge_i \always \safe_i$;
%  \item for all $i \in \{1,2,\dots,n\}$, $\outcome_i(\lambda_i)\models (\neg \always
%    \safe_i)\rightarrow (\bigwedge_{j\neq i} \neg \always \safe_j)$
%\end{enumerate}

% The next theorem states the exact complexity of deciding the existence of a doomsday equilibrium in a multi-player game with safety objectives.

We establish the complexity of deciding the existence of a doomsday equilibrium in an $n$-player game with perfect information and safety objectives.

\begin{lemma}[{\sc PSpace-Easyness}]
The existence of a doomsday equilibrium in an $n$-player game with safety objectives can be decided in {\sc PSpace}.
\end{lemma}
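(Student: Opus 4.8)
The plan is to replay the retaliation-region argument of Lemma~\ref{lem:retaliateregion}, the only new difficulty being that safety objectives are not tail objectives. Write $\varphi_i={\sf safe}(T_i)$. The natural retaliation objective for Player~$i$ becomes $\psi_i := {\sf safe}(T_i)\cup\bigcap_{j=1}^{n}{\sf reach}(\overline{T_j})$, i.e.\ ``never leave $T_i$, or eventually leave every $T_j$'', and we set $R_i:=\strat{i}\psi_i$. I would first prove the safety analogue of Lemma~\ref{lem:retaliateregion}: there is a doomsday equilibrium in $G$ iff there is an infinite play from $s_{\sf init}$ that stays inside $\bigcap_{i=1}^{n}(T_i\cap R_i)$.

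The two directions go as in Lemma~\ref{lem:retaliateregion}, with the tail property replaced by the following \emph{neutral-prefix} observation: if a finite prefix $\pi$ stays inside $\bigcap_{j}T_j$, then for every suffix $\rho'$ we have $\pi\cdot\rho'\models\psi_i$ iff $\rho'\models\psi_i$ (such a $\pi$ neither leaves $T_i$ nor contributes a visit to any $\overline{T_j}$). For the ``if'' direction, from a play $\pi^\ast$ staying in $\bigcap_i(T_i\cap R_i)$ I build the profile $\Lambda$ in which each Player~$i$ follows $\pi^\ast$ until the first deviation and then switches to a strategy witnessing $\psi_i$ from the current state; using the usual closure property of winning regions (from a state of $R_i$ not owned by Player~$i$, every successor is in $R_i$), the state reached right after a deviation by another player is still in $R_i$, so this strategy is available, and unfolding $\psi_i$ together with the neutral-prefix observation yields exactly the two cases of Condition~2 of a DE. For the ``only if'' direction, let $\rho=\outcome(\Lambda)$; Condition~1 gives $\rho\models\bigcap_i{\sf safe}(T_i)$, so every prefix of $\rho$ lies in $\bigcap_jT_j$; then for each state $s$ on $\rho$ and each $i$, applying Condition~2 to the residual strategy of Player~$i$ after the prefix of $\rho$ ending in $s$ (a strategy in $G_s$) and using the neutral-prefix observation gives $s\in R_i$, so $\rho$ is the required play.

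It remains to bound the complexity. Computing each $R_i=\strat{i}\psi_i$ is solving a two-player zero-sum game whose objective is the disjunction of a safety objective and a conjunction of the $n$ reachability objectives ${\sf reach}(\overline{T_j})$; this is a generalized-reachability game (with an extra safety disjunct), whose winning region is computable in \textsc{PSpace} --- e.g.\ by recording in an exponentially sized memory which of the targets $\overline{T_j}$ have already been visited and whether the play is still inside $T_i$, which turns $\psi_i$ into a simple (co-B\"uchi) condition on the implicitly represented product arena, solvable in polynomial time in its size, hence in \textsc{PSpace} by an alternating polynomial-time procedure on the implicit graph. Once all $R_i$ are known, deciding whether some infinite play from $s_{\sf init}$ stays in $\bigcap_i(T_i\cap R_i)$ is just detecting a reachable cycle inside this induced subgraph, which is in \textsc{PTime}. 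Hence the whole procedure runs in \textsc{PSpace}. I expect the main obstacle to be re-establishing Lemma~\ref{lem:retaliateregion} without the tail property --- i.e.\ getting the neutral-prefix argument right in both directions, in particular that a deviation cannot escape $R_i$ and that the ``good'' prefix along $\outcome(\Lambda)$ indeed carries no visit to any $\overline{T_j}$; the complexity part is then a routine adaptation of the generalized-reachability machinery. \qed
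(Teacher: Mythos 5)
Your characterization of DE via the retaliation regions $R_i=\strat{i}(\always T_i\vee\bigwedge_{j=1}^{n}\eventually\overline{T_j})$ and the existence of an infinite play inside $\bigcap_i(T_i\cap R_i)$ is exactly the paper's argument, and your ``neutral-prefix'' observation is the right way to make the non-tail bookkeeping explicit in both directions. One caveat on that part: the ``usual closure property of winning regions'' you invoke (every successor of an adversary-owned state of $R_i$ stays in $R_i$) is \emph{false} in general for this prefix-dependent objective --- e.g.\ a Player-$2$ state $s\notin T_2$ can lie in $R_1$ only because $s$ itself already witnesses $\eventually\overline{T_2}$, while its successor is losing for Player~$1$. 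The instance you actually need is fine, but only because the deviating state lies on the witness play and hence in $\bigcap_j T_j$, so the shift/neutral-prefix argument applies; you should justify the step that way rather than by the general closure principle.

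The genuine gap is in the complexity half, which is the content of the lemma. Your objective $\psi_i$ is not a generalized reachability objective (it is $\neg\eventually\overline{T_i}\vee\bigwedge_j\eventually\overline{T_j}$, a Boolean combination with a negated disjunct), so the known \textsc{PSpace} bound for conjunctions of reachability objectives does not apply off the shelf; and your substitute argument does not establish \textsc{PSpace}. Solving the co-B\"uchi game on the implicit product of size $|S|\cdot 2^{n+1}$ ``in polynomial time in its size'' gives an \emph{exponential-time} bound, and the claimed ``alternating polynomial-time procedure on the implicit graph'' is precisely what is missing: the naive alternating simulation of the product game runs for exponentially many rounds, which only yields alternating polynomial \emph{space}, i.e.\ \textsc{ExpTime}. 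To get \textsc{PSpace} you need an extra idea exploiting that the memory (the set of targets $\overline{T_j}$ already visited, plus the $T_i$-bit) is monotone and changes at most $n+1$ times, which yields a recursion of depth $O(n)$ whose levels are polynomial-size games solvable in polynomial time --- or, as the paper does, a direct appeal to the result of Alur, La~Torre and Madhusudan that two-player games whose objectives are Boolean combinations of $\eventually$-formulas are solvable in \textsc{PSpace}, which covers $\psi_i$ verbatim. As written, your proof of the upper bound does not go through.
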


\begin{proof}
Let us consider an $n$-player game arena $G=(S,{\cal P},s_{\sf init},\Sigma,\Delta)$ and $n$ safety objectives
${\sf safe}(T_1),\dots,{\sf safe}(T_n)$ for $T_1\subseteq S,\dots,T_n\subseteq S$. 
The algorithm is composed of the following two steps:

$(1)$ For each Player~$i$, compute the set of states $s \in S$ in the game such that Player~$i$ can retaliate whenever necessary, i.e. the set of states $s$ from where there exists a strategy $\lambda_i$ for Player~$i$ such that $\outcome_i(\lambda_i)$ satisfies $\neg (\always T_i) \rightarrow  \bigwedge_{j=1}^{j=n} \neg \always T_j$, 
                or equivalently 			$\neg (\eventually \overline{T_i}) \vee  \bigwedge_{j=1}^{j=n} \eventually  \overline{T_j}$. 		This can be done in {\sc PSpace} using a result by Alur et al. (Theorem 5.4 of ~\cite{AluLaT04}) on solving two-player games whose
                Player 1's objective is defined by Boolean combinations of LTL formulas that use only $\eventually$ and $\wedge$.  We denote by $R_i$
                the set of states in $G$ where Player~$i$ has a strategy to retaliate. 

 % Indeed, Theorem~4 in~\cite{AlurTM03} tells us
		% % that we can compute in {\sc PSpace} the set of states $s$ from which Player~$i$ has a strategy $\lambda_i$ such that
		% \begin{equation}
			% \label{eq:simtoorig}
 			% % \outcome(\lambda_i)\models \bigvee_{j \neq i} \always \safe_j \rightarrow (\always \eventually \safe_i).
		% \end{equation}
		% \noindent
		% % It is easy to see that the formulas of eq.~\ref{eq:orig} and eq.~\ref{eq:simtoorig} are equivalent on game arena where the violation of $\safe_i$ is made 			
		% % permanent, i.e. whenever the set $\safe_i$ is left by a prefix of play then it is never reentered. 
		% % It is easy to modify the original game arena $G$ so that this property is true: add one bit of information to the state space to record 
		% the violation and declare  unsafe all the states where this bit is set to $1$. 	
		
		%For all $s \in R_i \cap S_i$, we say that
                %$\sigma \in \Sigma_i$ is \emph{good} for Player $i$ if $\Delta(s,\sigma) \in R_i$. 

	% \item We compute for each Player~$i$ the set of states $R'_i$ from which Player~$i$ has a choice of action 
		% $\sigma \in \Sigma$ such that for any actions played by the other players, the next state is in $R_i$. 
		% This set can be computed in polynomial time (and so in polynomial space). 
		% For $s \in R'_i$, we denote by ${\sf good}_i(s) \subseteq \Sigma_i$ the set of actions of Player~$i$ with this property.

	$(2)$ then, verify whether there exists an infinite path in $\bigcap_{i=1}^{i=n}({\sf safe}(T_i) \cap R_i)$.

\noindent
Now, let us establish the correctness of this algorithm.
Assume that an infinite path exists in $\bigcap_{i=1}^{i=n}({\sf safe}(T_i) \cap R_i)$. The strategies $\lambda_i$ for each Player~$i$ are defined as follows:  play the moves that are prescribed as long as every other players do so, and as soon as the play deviates from the infinite path, play the retaliating strategy.

It is easy to see that the profile of strategies $\Lambda=(\lambda_1,\lambda_2,\dots,\lambda_n)$ is a DE. Indeed, 
the states are all safe for all players as long as they play their strategies. Moreover, as before deviation the play is within $\bigcap_{i=1}^{i=n} R_i$, if Player~$j$ deviates, we know that the state that is reached after deviation is still in $\bigcap_{j=1}^{j=n} R_j$ and therefore the other players can retaliate.

Second, assume that $\Lambda=(\lambda_1,\lambda_2,\dots,\lambda_n)$ is a DE in the $n$-player game $G$ for the safety objectives $({\sf safe}(T_i))_{1\leq i\leq n}$. 
Let $\rho = \outcome(\lambda_1,\lambda_2,\dots,\lambda_n)$. By definition of doomsday equilibrium, we know that all states appearing in $\rho$ satisfy all the safety objectives, i.e. $\rho \models \bigwedge_{i=1}^{i=n} \always T_i$. Let us show that the play also remains within $\bigcap_{i=1}^{i=n} R_i$. Let $s$ be a state of $\rho$, $i\in\{1,\dots,n\}$, and 
$\pi$ the finite prefix of $\rho$ up to $s$. By definition of DE we have $\outcome(\lambda_i)\models \always T_i\vee \bigwedge_{j=1}^{j=n}\eventually \overline{T_j}$. 
Therefore for all outcomes $\rho'$ of $\lambda_i$ in $G_s$, 
$\pi\rho'\models \always T_i\vee \bigwedge_{j=1}^{j=n}\eventually \overline{T_j}$. Moreover,
$\pi\models \bigwedge_{j=1}^{j=n} \always T_j$ since it is a prefix of $\rho$. Therefore
$\rho'\models \always T_i\vee \bigwedge_{j=1}^{j=n}\eventually \overline{T_j}$ and $s\in R_i$. Since
it holds for all $i\in\{1,\dots,n\}$, we get $s\in \bigcap_{i=1}^{i=n} R_i$. \qed

\end{proof}

\begin{lemma}[{\sc PSpace-Hardness}]\label{lem:hardness-safety-perfect}
The problem of deciding the existence of a doomsday equilibrium in an $n$-player game with safety objectives is {\sc PSpace-Hard}.
\end{lemma}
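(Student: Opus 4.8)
The plan is to reduce from the problem of deciding whether the protagonist wins a two-player zero-sum game equipped with a \emph{generalized reachability} objective, that is, an objective of the form $\bigwedge_{\ell=1}^{k}\eventually U_\ell$ where Player~$1$ must force every play to visit each of the target sets $U_1,\dots,U_k$; deciding the winner of such a game is {\sc PSpace}-complete~\cite{AluLaT04}. This is the natural source problem, since the matching upper bound of the previous lemma already went through the {\sc PSpace} procedure of~\cite{AluLaT04} for games with $\{\eventually,\wedge\}$-objectives.

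Given such a game $H=(V,V_1,V_2,E,v_0,U_1,\dots,U_k)$, where $V_1$ (resp.\ $V_2$) are the vertices controlled by Player~$1$ (resp.\ Player~$2$) and every vertex has at least one successor, I would build a $(k{+}2)$-player safety game $G$ with players $P$ (the ``reacher''), $A$ (the ``adversary'') and $C_1,\dots,C_k$ (``witnesses''). The arena of $G$ is a copy of $H$ --- $V_1$-vertices owned by $P$, $V_2$-vertices owned by $A$ --- preceded by a ``gauntlet'' $e\to c_1\to c_2\to\cdots\to c_k\to v_0$, with $e$ owned by $A$ and $c_\ell$ owned by $C_\ell$; there are in addition two sink states $m$ and $D$, each with a self-loop, and an initial state $s_{\init}$ owned by $A$ with one move to $m$ and one move to $e$; finally, from $e$ and from each $c_\ell$ there is an extra ``panic'' edge leading directly to $D$. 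Writing $\Gamma=\{e,c_1,\dots,c_k\}\cup V$, the safety targets are $T_P=T_A=S\setminus(\Gamma\cup\{D\})$ and $T_{C_\ell}=S\setminus(U_\ell\cup\{D\})$, so that $D$ is unsafe for everybody, the whole of $\Gamma$ is unsafe for $P$ and for $A$, and the copy of $U_\ell$ is unsafe for $C_\ell$. I would then prove: \emph{$G$ has a doomsday equilibrium iff Player~$1$ wins $H$}. The construction is manifestly polynomial.

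For correctness I would invoke the characterisation established in the proof of the previous lemma: $G$ has a doomsday equilibrium iff there is an infinite play visiting only states that are safe for all players and that lie in $\bigcap_i R_i$, where $R_i=\strat{i}({\sf safe}(T_i)\cup\bigcap_j\overline{{\sf safe}(T_j)})$ is the retaliation region of Player~$i$. Since $\Gamma\cup\{D\}$ is a trap disjoint from $T_P$, every play entering it is losing for $P$, so the unique play safe for all players is $\rho^{\star}=s_{\init}\,m^{\omega}$, and $m\in\bigcap_i R_i$ is immediate; hence $G$ has a doomsday equilibrium iff $s_{\init}\in\bigcap_iR_i$. Now $A$ owns $s_{\init}$ and its move to $m$ keeps everybody safe forever, so $s_{\init}\in R_A$, and for $i\neq A$ we have $s_{\init}\in R_i$ iff Player~$i$ can retaliate from $e$. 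The heart of the argument is then three facts about retaliation from $e$: (i)~$A$ retaliates by the panic move to $D$, which lies outside every $T_j$; (ii)~each $C_\ell$ retaliates, because whatever the other players do along the gauntlet the play either already reaches $D$ (so all the $\eventually\overline{T_j}$ hold) or reaches $c_\ell$, whence $C_\ell$ forces $D$; (iii)~from $e$ the retaliation objective of $P$ collapses to $\bigwedge_{\ell}\eventually(U_\ell\cup\{D\})$ --- the $\always T_P$ disjunct is dead at $e$ and $\eventually\overline{T_P}$, $\eventually\overline{T_A}$ hold trivially there --- and since $P$ owns no state from which $D$ is reachable, a coalition trying to defeat $P$ will never panic, which makes the subgame from $v_0$ a faithful copy of $H$; by determinacy $P$ retaliates from $e$ iff Player~$1$ can force all the $U_\ell$'s to be visited, i.e.\ iff Player~$1$ wins $H$. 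Combining (i)--(iii), $s_{\init}\in\bigcap_iR_i$ iff Player~$1$ wins $H$, which is the reduction.

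The step I expect to be the main obstacle is designing the gadget so that two conflicting demands are met at the single state $e$: each auxiliary player ($A$ and the $C_\ell$'s) must be able to retaliate \emph{unconditionally}, which they do by forcing the everyone-loses sink $D$ --- this is precisely why each $C_\ell$ is given a forced turn on the way into the copy of $H$, the role of the gauntlet --- while $P$, who is purposely denied any access to $D$, must be able to retaliate \emph{exactly when} Player~$1$ wins $H$. Two subsidiary points also need care: verifying that $\rho^{\star}$ is the only all-safe play, so that the gadget cannot by itself create a spurious equilibrium (this is what making $\Gamma\cup\{D\}$ a $P$-unsafe trap buys); and carrying out, for item~(iii), the elementary but somewhat fiddly translation between retaliation strategies in $G$ and strategies and counter-strategies in $H$, using the fact that reaching $D$ always \emph{helps} whichever player is currently retaliating, so that a coalition bent on defeating $P$ has no incentive to leave the $H$-copy.
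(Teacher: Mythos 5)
Your reduction is correct and essentially the paper's: both reduce from two-player generalized (multi-)reachability games (PSPACE-complete), embed a copy of that game into a multi-player safety game with a globally safe idle region and an everyone-loses sink, make the targets $U_\ell$ the unsafe sets of auxiliary players so that the reacher's retaliation requirement becomes exactly the multi-reachability objective, and let every other player retaliate unconditionally by forcing the doom sink. The remaining differences are cosmetic: you use $k+2$ players and a pre-copy gauntlet to give the witnesses their panic moves (the paper uses $k+1$ players and gives them deviate-to-Bad moves inside the copy, at the target states they own), and you route correctness through the $R_i$-characterization from the PSPACE-easiness lemma rather than arguing directly about the strategy profile.
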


\begin{proof}
We present a reduction from the problem of deciding the winner in a zero-sum two-player game with a conjunction of $k$ reachability
objectives (aka generalized reachability games), which is  a {\sc PSpace-C} problem~\cite{AlurTM03}. The idea of
the reduction is to construct a non-zero sum $(k+1)$-player game where one of the players has a retaliating strategy iff there is a winning
strategy in the generalized reachability game.\qed
\end{proof}

\section{Complexity of DE for Imperfect Information Games}

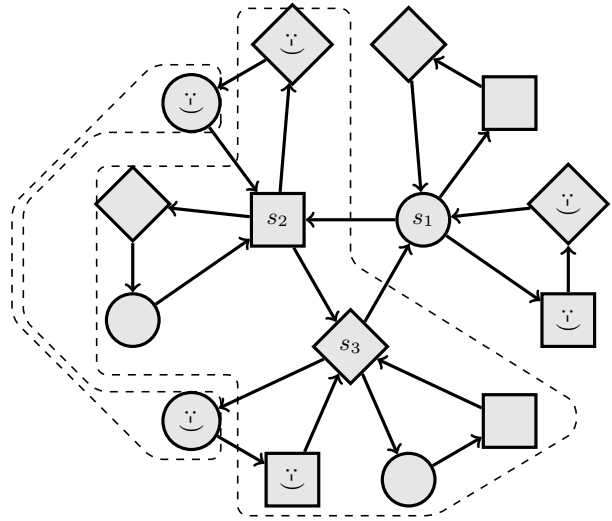
\begin{figure}[t]
\begin{center}
\begin{tikzpicture}[very thick, scale=0.75]
\tikzstyle{every state}=[fill=gray!20!white]

\tikzset{P3/.style={diamond,draw,very thick,minimum size=10mm,fill=gray!20!white}}
\tikzset{P2/.style={rectangle,draw,very thick,minimum size=7mm,fill=gray!20!white}}
\tikzset{P1/.style={circle,draw,very thick,minimum size=7mm,fill=gray!20!white}}
\tikzset{P1init/.style={initial,circle,draw,very thick,minimum size=7mm,fill=gray!20!white}}

% \node[P1](g1) at (0,0) {$s_1$} ;
% \node[P2](g2) at (2,0) {$s_2$} ;

% \node[P3](g3) at (1,-1.5) {$s_3$} ;

% \node[P2](b1) at (-1.3,-4.2) {:-)} ;
% \node[P1](b2) at (3.3,-4.2) {:-)} ;
% \node[P2](b3) at (0.3,-3.5) {} ;
% \node[P1](b4) at (1.7,-3.5) {} ;

% \node[P2](gg3) at (-3.3,-0.4) {:-)} ;
% \node[P3](g4) at (-1,3.3) {:-)} ;
% \node[P2](gg5) at (-2,0.3) {} ;
% \node[P3](gg6) at (-1.2,1.5) {} ;

% \node[P1](g5) at (5.3,-0.4) {:-)} ;
% \node[P3](g6) at (3,3.3) {:-)} ;
% \node[P1](gb5) at (4,0.3) {} ;
% \node[P3](gb6) at (3.2,1.5) {} ;

\node[P2](g12) at (3.86, -1.03) {\rotatebox{270}{:-)}} ;
\node[P3](g13) at (3.86, 1.03) {\rotatebox{270}{:-)}} ;
\node[P2](b12) at (2.82, 2.82) {} ;
\node[P3](b13) at (1.03, 3.86) {} ;

\node[P1](c1) at (1.29, 0.75) {$s_1$} ;

\node[P3](g23) at (-1.03, 3.86) {\rotatebox{270}{:-)}} ;
\node[P1](g21) at (-2.82, 2.82) {\rotatebox{270}{:-)}} ;
\node[P3](b23) at (-3.86, 1.03) {} ;
\node[P1](b21) at (-3.86, -1.03) {} ;

\node[P2](c2) at (-1.29, 0.75) {$s_2$} ;

\node[P1](g31) at (-2.82, -2.82) {\rotatebox{270}{:-)}} ;
\node[P2](g32) at (-1.03, -3.86) {\rotatebox{270}{:-)}} ;
\node[P1](b31) at (1.03, -3.86) {} ;
\node[P2](b32) at (2.82, -2.82) {} ;

\node[P3](c3) at (0, -1.5) {$s_3$} ;

\draw[semithick, dashed, rounded corners] (0,0) -- (0,4.5) -- (-2, 4.5) -- (-2,1.7) -- (-4.5, 1.7) -- (-4.5, -2) -- (-2, -2) -- (-2, -4.5) -- (1.5,-4.5) -- (4, -3) -- (4, -2.5) -- cycle;

\draw[semithick, dashed, rounded corners] (-2.3,2.3) -- (-2.3, 3.5) -- (-3.5, 3.5) -- (-6, 1) -- (-6, -1) -- (-3.5, -3.5) -- (-2.3, -3.5) -- (-2.3, -2.3) -- (-4.5, -2.3) -- (-5.8, -1) -- (-5.8,1) -- (-4.5, 2.3) -- (-2.3, 2.3) -- cycle;

% (0,0) --
  % ++(1, 0) -- ++(0,-1) -- ++(1, 0)  -- ++(1,0) -- ++(0,-1) --
  % ++(0,-1) -- ++(-1,0) -- ++(0,-1) -- ++(-1,0) -- ++(0, 1) --
  % ++(-1,0) -- ++(0, 1) -- ++(0, 1) --  cycle;

\draw [->] (c1) to (c2);
\draw [->] (c2) to (c3);
\draw [->] (c3) to (c1);

\draw [->] (c1) to (g12);
\draw [->] (g12) to (g13);
\draw [->] (g13) to (c1);

\draw [->] (c1) to (b12);
\draw [->] (b12) to (b13);
\draw [->] (b13) to (c1);

\draw [->] (c2) to (g23);
\draw [->] (g23) to (g21);
\draw [->] (g21) to (c2);

\draw [->] (c2) to (b23);
\draw [->] (b23) to (b21);
\draw [->] (b21) to (c2);

\draw [->] (c3) to (g31);
\draw [->] (g31) to (g32);
\draw [->] (g32) to (c3);

\draw [->] (c3) to (b31);
\draw [->] (b31) to (b32);
\draw [->] (b32) to (c3);

% \draw [->] (b3) to (g3);
% \draw [->] (b4) to (b3);
% \draw [->] (g3) to (b4);

% \draw [->] (g2) to (g6);
% \draw [->] (g6) to (g5);
% \draw [->] (g5) to (g2);

% \draw [->] (g2) to (gb6);
% \draw [->] (gb6) to (gb5);
% \draw [->] (gb5) to (g2);

% \draw [->] (g1) to (gg3);
% \draw [->] (gg3) to (g4);
% \draw [->] (g4) to (g1);

% \draw [->] (g1) to (gg5);
% \draw [->] (gg5) to (gg6);
% \draw [->] (gg6) to (g1);

% \path (sinit') edge [loop above] node {$(\#,\#,\#)$} (sinit');

% \draw[line width=.5mm,color=black] (11,4.7) node[sttstates] {$(1,0,0)$};

\end{tikzpicture}
\end{center}
\caption{\label{fig:example-imperfect} Game arena with imperfect information and Büchi objectives. Only undistinguishable states of Player $1$ (circle) are depicted. Observations are symmetric for the other players.}
\end{figure}

In this section, we define $n$-player game arenas with imperfect information. We adapt to this context the notions of observation, observation of a play, observation-based strategies, and we study the notion of doomsday equilibria when players are restricted to play observation-based strategies.

\paragraph{{\bf Game arena with imperfect information}} An $n$-player game arena with {\em imperfect
information} is a tuple $G = (S,{\cal P}, s_{{\sf init}}, \Sigma, \Delta, (O_i)_{1\leq i\leq n})$ such that
$(S,{\cal P}, s_{{\sf init}}, \Sigma, \Delta)$ is a game arena (of perfect information) and 
for all $i$, $1 \leq i \leq n$, $O_i\subseteq 2^S$ is a {\em partition} of $S$.  Each block
in $O_i$ is called an {\em observation} of Player~$i$. %We assume that for all $i\in\{1,\dots,n\}$, 
%for all $o\in O_i$, either $o\subseteq S_i$ or $o\cap S_i=\emptyset$, i.e. each player knows when it
%is his turn to play. 
We assume that %
  	% \begin{itemize}
		% % \item $S$ is the state space of the game, $(S_1,S_2,\dots,S_n)$ is a partition of $S$ where $S_i$, $1 \leq i \leq n$, is the set of states controlled by Player~$i$.
		% \item $s_{{\sf init}} \in S_1$ is the initial state of the game (it is controlled by Player 1), 
		% \item $\Sigma$ is the set of actions of the game, 
		% % % % % \item $\Delta : S \times \Sigma \rightarrow S$ is the transition function of the game and it respects the following property: for all $s \in S_i$, for all $\sigma \in \Sigma$, if $i<n$ then $\Delta(s,\sigma) \in S_{i+1}$, and if $i=n$ then $\Delta(s,\sigma) \in S_{1}$. So, the players play according to a fix order\footnote{This restriction is not necessary to obtain the results presented in this section but it makes some of our notations lighter.}.
		% % \item for all $i$, $1 \leq i \leq n$, $O_i=\{o^i_1,o^i_2,\dots,o^i_k\}$ is a {\em partition} of $S$, each block in this partition is called an {\em observation} of Player~$i$.
	% \end{itemize}
% \noindent
% Note that as in our game, 
the players play in a predefined order\footnote{This restriction is not necessary to obtain the results presented in this section (e.g. Theorem \ref{thm:imperfect}) but it makes some of our notations lighter.}: for all $i\in\{1,\dots,n\}$, all $q\in S_i$ and all $\sigma\in \Sigma$, $\Delta(q,\sigma)\in S_{(i\ mod\ n) + 1}$.

% % Given a prefix of a play $\pi$, the length of this prefix determines the player that will play the next action. This player is denoted by ${\sf turn}(\pi)$ and by
% ${\sf turn}(\pi)=(| \pi | \mod n) +1$.

\paragraph{{\bf Observations}}
For all $i\in\{1, \dots,n\}$, we denote by $O_i(s)\subseteq S$ the block in $O_i$ that
contains $s$, that is the observation that Player~$i$ has when he is in state $s$. 
We say that two states $s,s'$ are {\em undistinguishable} for
Player~$i$ if $O_i(s) = O_i(s')$. This defines an equivalence relation on states 
that we denote by $\sim_i$. 
The notions of plays and prefixes of plays are slight variations from the perfect information setting: a play in $G$ is a sequence $\rho=s_0,\sigma_0,s_1,\sigma_1,\dots \in (S \cdot \Sigma)^{\omega}$ such that $s_0=s_{{\sf init}}$, and for all $j \geq 0$, we have $s_{j+1}=\Delta(s_j,\sigma_j)$. A prefix of play is a sequence $\pi = s_0,\sigma_0,s_1,\sigma_1,\dots,s_k \in (S \cdot \Sigma)^{*} \cdot S$ that can be extended into a play. As in the perfect information setting, we use the notations $\plays(G)$ and $\prefplays(G)$ to denote the set of plays in $G$ and its set of prefixes, and $\prefplays_i(G)$ for the set of prefixes that end in a state that belongs to Player~$i$. While actions are introduced explicitly in our notion of play and prefix of play, their visibility is limited by the notion of observation. The {\em observation} of a play $\rho=s_0,\sigma_0,s_1,\sigma_1,\dots$ by Player~$i$ is the infinite sequence written $\Obs_i(\rho) \in ({O_i} \times (\Sigma \cup \{ \tau \})^{\omega}$ such that for all $j \geq 0$, $\Obs_i(\rho)(j)=(O_i(s_j),\tau)$ if $s_j \not\in S_i$, and $\Obs_i(\rho)(j)=(O_i(s_j),\sigma_j)$ if $s_j \in S_i$. Thus, only actions played by Player~$i$ are visible along the play, and the actions played by the other players are replaced by $\tau$. The observation $\Obs_i(\pi)$ of a prefix
$\pi$ is defined similarly. %
% % % % Accordingly, the observation of a prefix is as follows: let $\pi \in s_0,\sigma_0,s_1,\sigma_1,\dots,s_k \in \prefplays(G)$,  then $\Obs_i(\pi) \in ({O_i} \times (\Sigma \cup \{ \tau \})^{*} \cdot O_i$ such that for all $j$, $0 \leq j < k$, $\Obs_i(\rho)(j)=(O_i(s_j),\tau)$ if $s_j \not\in S_i$, and $\Obs_i(\rho)(j)=(O_i(s_j),\sigma_j)$ if $s_j \in S_i$, and $o_k=\Obs_i(s_k)$.
Given an infinite sequence of observations $\eta \in (O_i \times ( \Sigma \cup \{ \tau \} ))^{\omega}$ for Player~$i$, we denote by $\gamma_i(\eta)$ the set of plays in $G$ that are compatible with $\eta$, i.e. $\gamma_i(\eta)= \{ \rho \in \plays(G) \mid \Obs_i(\rho)=\eta \}$. The functions $\gamma_i$ are extended to prefixes of sequences of observations naturally.

\paragraph{{\bf Observation-based strategies and doomsday equilibria}} 
A strategy $\lambda_i$ of Player $i$ is {\em observation-based} if for all prefixes of plays $\pi_1,\pi_2 \in \prefplays_i(G)$ such that $\Obs_i(\pi_1)=\Obs_i(\pi_2)$,  it holds that $\lambda_i(\pi_1)=\lambda_i(\pi_2)$, i.e. while playing with an observation-based strategy, Player~$i$ plays the same action after undistinguishable prefixes.
A strategy profile $\Lambda$ is observation-based if each $\Lambda_i$ is observation-based.
Winning objectives, strategy outcomes and winning strategies are defined as in the
perfect information setting. We also define the notion of outcome
relative to a prefix of a play. Given an observation-based strategy $\lambda_i$ for 
Player $i$, and a prefix $\pi = s_0,\sigma_0,\dots, s_k\in \prefplays_i(G)$, the strategy 
$\lambda_i^\pi$ is defined for all prefixes $\pi'\in \prefplays_i(G_{s_k})$ where $G_{s_k}$ is the
game arena $G$ with initial state $s_k$, by $\lambda_i^{\pi}(\pi') = \lambda_i(\pi\cdot \pi')$. 
The set of outcomes of the strategy $\lambda_i$ \emph{relative to} $\pi$ is defined by $\outcome_i(\pi,\lambda_i) = \pi\cdot \outcome_i(\lambda^\pi_i)$.

% \{ \rho\in\outcome_i(\lambda_i)\ |\ \rho = \pi, \sigma_k,s_{k+1},\sigma_{k+1},\dots \}$. 

% \marginpar{JF: Those winning conditions should have been defined already}
% We consider three types of objectives for games of imperfect information:
  % \begin{itemize}
  	% % % \item	{\em Reachability objectives}: the objectives are given by $(R_1,R_2,\dots,R_n)$ where each $R_i \subseteq S$ implicitly defines the set of winning plays $$W_i=\{ \rho \in \plays(G) \mid \exists j \geq 0 \cdot \rho(j)=(s,\sigma) \land s \in R_i \}.$$
	% % % \item {\em Safety objectives}: the objectives are given by $({\sf Safe}_1,{\sf Safe}_2,\dots,{\sf Safe}_n)$ where each ${\sf Safe}_i \subseteq S$ implicitly defines the set of winning plays $$W_i=\{ \rho \in \plays(G) \mid \forall j \geq 0 \cdot \rho(j)=(s,\sigma) \land s \in {\sf Safe}_i \}.$$
	% % % \item {\em Parity objectives}: the objectives are given by $(p_1,p_2,\dots,p_n)$ where each $p_i : S \rightarrow \{0,1,\dots,d\}$ implicitly defines the set of winning plays $$W_i=\{ \rho \in \plays(G) \mid \min \{ j \mid \inf(\rho)\cap p_i^{-1}(j)\not=\emptyset\} \mbox{~is even~}  \}.$$
  % \end{itemize}

% \paragraph{{\bf Doomsday equilibrium and imperfect information}} 
The notion of doomsday equilibrium is defined as for games with
perfect information but with the additional requirements that {\em only} observation-based strategies can be used by the players. Given an $n$-player game arena with imperfect information $G$
and $n$ winning objectives $(\varphi_i)_{1 \leq i \leq n}$ (defined as in the perfect information setting), we
want to solve the problem of deciding the existence of an {\em observation-based strategy profile} $\Lambda$ which is a doomsday equilibrium in $G$ for $(\varphi_i)_{1 \leq i \leq n}$.

\begin{example} Fig. \ref{fig:example-imperfect} depicts a variant of 
the example in the perfect information setting, with imperfect information. In this example let us describe the
situation for Player $1$. It is symmetric for the other players. Assume that 
when Player $2$ or Player $3$ send their information to Player $1$ (modeled by
a visit to his happy states), Player $1$ cannot distinguish which of Player $2$ or
$3$ has sent the information, e.g. because of the usage of a cryptographic primitive. 
Nevertheless, let us show that there exists doomsday equilibrium. Assume that the three
players agree on the following protocol: Player $1$ and $2$ send their information but
not Player $3$.

Let us show that this sequence witnesses a doomsday equilibrium and argue that
this is the case for Player $1$. From the point of view of Player $1$, if all players
follow this profile of strategies then the outcome is winning for Player $1$. Now, let us consider
two types of deviation. First, 
assume that Player $2$ does not send his information (i.e. does not visit the happy states). In that case Player $1$ will observe the deviation
and can retaliate by not sending his own information. Therefore all the players are losing. Second, assume that Player $2$ does not send his information but Player $3$ does. In this case
it is easy to verify that Player $1$ cannot observe the deviation and so according to his strategy
will continue to send his information. This is not problematic because all the plays that are
compatible with Player $1$'s observations are such that: $(i)$ they are winning for Player $1$ (note that it would be also acceptable that all the sequence are either winning for Player $1$ or losing for all the other players), and $(ii)$ Player $1$ is always in position to retaliate along this sequence of observations. In our solution below these two properties are central and 
will be called {\em doomsday compatible} and {\em good for retaliation}. $\hfill \square$
\end{example}

\paragraph{{\bf Generic Algorithm}} We present a generic algorithm to test the existence
of an observation-based doomsday equilibrium in a game of imperfect information.
To present this solution, we need two additional notions: sequences of observations which are {\em doomsday compatible} and prefixes which are {\em good for retaliation}. These two notions are defined as follows. In a game arena $G=(S,{\cal P}, s_{{\sf init}}, \Sigma, \Delta, (O_i)_{1\leq i\leq n})$ with imperfect information and winning objectives $(\varphi_i)_{1 \leq i \leq n}$,
  \begin{itemize}
  	\item a sequence of observations $\eta \in (O_i  \times (\Sigma \cup \{ \tau \}))^{\omega}$ is {\em doomsday compatible} (for Player~$i$) if $\gamma_i(\eta)\subseteq \varphi_i \cup \bigcap_{j=1}^{j=n} \overline{\varphi_j}$, i.e. all plays that are compatible with $\eta$ are either winning for Player~$i$, or not winning for any other player,
	\item a prefix $\kappa \in (O_i \times (\Sigma \cup \{ \tau \}))^{*} \cdot O_i$ of a sequence of observations is {\em good for retaliation} (for Player~$i$) if there exists an observation-based strategy $\lambda_i^R$ such that for all prefixes $\pi \in \gamma_i(\kappa)$ compatible with $\kappa$, 
          $\outcome(\pi,\lambda^R_i) \subseteq \varphi_i \cup \bigcap_{j=1}^{j=n} \overline{\varphi_j}$. 
  \end{itemize}

The next lemma shows that the notions of sequences of observations that are doomsday compatible and 
good for retaliation prefixes are important for studying the existence of doomsday equilibria for 
imperfect information games.

\begin{lemma}
\label{general-construct}
Let $G$ be an $n$-player game arena with imperfect information and winning objectives $\varphi_i$, $1 \leq i \leq n$. 
There exists a doomsday equilibrium in G if and only if there exists a play $\rho$ in $G$ such that:
  \begin{itemize}
  	\item[$(F_1)$] $\rho \in \bigcap_{i=1}^{i=n} \varphi_i$, i.e. $\rho$ is winning for all the players,
	\item[$(F_2)$] for all Player~$i$, $1 \leq i \leq n$, for all prefixes $\kappa$ of $\Obs_i(\rho)$, $\kappa$ is good for retaliation for Player~$i$,
	\item[$(F_3)$] for all Player~$i$, $1 \leq i \leq n$, $\Obs_i(\rho)$ is doomsday compatible for Player~$i$.
  \end{itemize}
\end{lemma}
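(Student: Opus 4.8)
The plan is to prove both implications, the same way Lemma~\ref{lem:retaliateregion} is proved in the perfect-information case, with the observation structure playing the role of the retaliation regions $R_i$.

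\textbf{Necessity.} Assume $\Lambda=(\lambda_1,\dots,\lambda_n)$ is a doomsday equilibrium and set $\rho=\outcome(\Lambda)$. Then $(F_1)$ is precisely the first defining property of a DE. For $(F_2)$ and $(F_3)$ the key observation is that whenever a play (or a prefix of play) has the same $\Obs_i$-observation as $\rho$ (respectively, as a prefix $\kappa$ of $\Obs_i(\rho)$), Player~$i$'s moves along it are completely determined: at every Player-$i$ position the action is visible in the observation, hence equals the action played by $\lambda_i$ at the corresponding position of $\rho$, and since $\lambda_i$ is observation-based it prescribes that same action on this play/prefix too. Hence $\gamma_i(\Obs_i(\rho))\subseteq \outcome_i(\lambda_i)$, so the doomsday property of $\Lambda$ gives $\gamma_i(\Obs_i(\rho))\subseteq \varphi_i\cup \bigcap_{j=1}^{j=n}\overline{\varphi_j}$, which is $(F_3)$; and for any prefix $\kappa$ of $\Obs_i(\rho)$, taking $\lambda_i$ itself as the witness strategy $\lambda_i^R$, every $\pi\in\gamma_i(\kappa)$ is consistent with $\lambda_i$, so $\outcome_i(\pi,\lambda_i)\subseteq \outcome_i(\lambda_i)\subseteq \varphi_i\cup \bigcap_{j=1}^{j=n}\overline{\varphi_j}$, i.e.\ $\kappa$ is good for retaliation, which is $(F_2)$.

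\textbf{Sufficiency.} Assume a play $\rho$ satisfying $(F_1)$--$(F_3)$, and for each player $i$ and each prefix $\kappa$ of $\Obs_i(\rho)$ fix a retaliation strategy $\lambda_i^R[\kappa]$ as provided by $(F_2)$. Define $\lambda_i$ to play, at a prefix $\pi$: the move prescribed by $\rho$ at the position of observation length $|\Obs_i(\pi)|$ if $\Obs_i(\pi)$ is a prefix of $\Obs_i(\rho)$ (``cooperation mode''), and otherwise $\lambda_i^R[\kappa](\pi)$, where $\kappa$ is the longest prefix of $\Obs_i(\pi)$ that is a prefix of $\Obs_i(\rho)$ (``retaliation mode''); this is observation-based. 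If all players play $\Lambda$, a straightforward induction on the length of the play shows that no player ever leaves cooperation mode, so $\outcome(\Lambda)=\rho$ and $(F_1)$ gives the first DE property. For the second DE property, fix $i$ and a play $\rho'\in \outcome_i(\lambda_i)$: if $\Obs_i(\rho')=\Obs_i(\rho)$ then $\rho'\in \gamma_i(\Obs_i(\rho))$ and $(F_3)$ gives $\rho'\in\varphi_i\cup \bigcap_{j=1}^{j=n}\overline{\varphi_j}$; otherwise let $\kappa$ be the longest common prefix of $\Obs_i(\rho')$ and $\Obs_i(\rho)$, note that once the observation of the play reaches $\kappa$ Player~$i$ is (and remains) in retaliation mode with strategy $\lambda_i^R[\kappa]$, so that $\rho'$ is an outcome of $\lambda_i^R[\kappa]$ relative to an appropriate prefix in $\gamma_i(\kappa)$, whence $\rho'\in\varphi_i\cup\bigcap_{j=1}^{j=n}\overline{\varphi_j}$ by good-for-retaliation.

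The delicate point, where I expect the main work, is this last step of the sufficiency direction: identifying exactly the prefix from which $\lambda_i^R[\kappa]$ is applied and verifying that $\rho'$ is indeed an outcome of it relative to that prefix. Because of imperfect information a deviation may be observed only with delay, and in fact it may be Player~$i$'s own cooperation-mode move, played while he is unknowingly already off $\rho$, that first reveals the divergence. One then exploits the fixed turn order and observation-based play to argue that Player~$i$ never causes the \emph{first} divergence of a play consistent with $\lambda_i$, tracing back to the hidden divergence that actually occurred (caused by another player at an earlier belief that $(F_2)$ certifies as retaliation-winning), so that the retaliation strategy can be plugged in at the correct prefix; carrying out this case analysis cleanly is the crux.
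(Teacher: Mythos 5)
Your necessity direction is correct, and in fact a little cleaner than the paper's argument by contradiction: the key inclusion $\gamma_i(\Obs_i(\rho))\subseteq \outcome_i(\lambda_i)$ (valid because $\lambda_i$ is observation-based and the actions of Player~$i$ are visible in his own observations) and the choice of $\lambda_i$ itself as the retaliation witness for every prefix of $\Obs_i(\rho)$ do give $(F_2)$ and $(F_3)$ directly. The genuine gap is in the sufficiency direction, exactly at the step you defer. Take a play $\rho'\in\outcome_i(\lambda_i)$ with $\Obs_i(\rho')\neq\Obs_i(\rho)$, let $\kappa$ be the longest common prefix of the two observation sequences and $\pi_0$ the prefix of $\rho'$ with $\Obs_i(\pi_0)=\kappa$. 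Your claim that $\rho'\in\outcome(\pi_0,\lambda_i^R[\kappa])$ fails precisely when $\last(\pi_0)\in S_i$: since $\Obs_i(\pi_0)=\kappa$ is still a prefix of $\Obs_i(\rho)$, your strategy is in cooperation mode at $\pi_0$ and plays the $\rho$-action there, which need not coincide with $\lambda_i^R[\kappa](\pi_0)$; and this very move can be the one that produces the first visible divergence, because $\last(\pi_0)$ may be observationally equivalent to, yet different from, the corresponding state of $\rho$ (an earlier invisible deviation by another player). In that case $\rho'$ is not an outcome of $\lambda_i^R[\kappa]$ relative to any prefix in $\gamma_i(\kappa)$, and $(F_2)$ cannot be invoked for $\kappa$; it cannot be invoked for the longer observed history either, since that history is no longer a prefix of $\Obs_i(\rho)$. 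So condition~2 of the doomsday equilibrium is not established for such plays.

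The repair you sketch does not close this hole: tracing back to the hidden deviation and ``plugging in'' the retaliation strategy at that earlier prefix suffers from the same mismatch, because between the hidden deviation and its detection Player~$i$ keeps playing cooperation moves, not the moves of the witness attached to that earlier prefix, so $\rho'$ is not an outcome of that witness relative to that prefix either. What is needed (and what the paper supplies, albeit tersely) is an additional propagation argument: starting from the last observation prefix that is still a prefix of $\Obs_i(\rho)$ (good for retaliation by $(F_2)$), goodness for retaliation is shown to be preserved step by step along the history actually observed --- through the deviating observation and the intermediate moves up to the first moment Player~$i$ gets to play after detecting the deviation --- so that the observation history at that very moment is itself good for retaliation, and the strategy switched to is the witness for that history, started right there. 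Without this preservation step (or an argument that the witness for $\kappa$ can always be chosen so that it plays the cooperation action at $\gamma_i(\kappa)$, which is not obvious), your case analysis cannot be completed; note that justifying preservation across Player~$i$'s own cooperative move is the delicate part and deserves an explicit proof rather than the one-line appeal you (and, frankly, the paper) make to it.
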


\begin{proof}
First, assume that conditions $(F_1), (F_2)$ and $(F_3)$ hold and show that there exists a DE in $G$.
We construct a DE $(\lambda_1, \dots, \lambda_n)$ as follows.
For each player~$i$, the strategy $\lambda_i$ plays according to the (observation of the) 
path $\rho$ in ${\cal G}$, as long as the previous observations follow $\rho$. 
If an observation is unexpected for Player~$i$ (i.e., differs from the sequence
in $\rho$), then $\lambda_i$ switches to an observation-based retaliating strategy $\lambda_i^R$ (we will show that such a strategy exists as a consequence of $(F_2)$).
This is a well-defined profile and a DE because:
$(1)$ all strategies are observation-based, and the outcome of the profile is 
the path $\rho$ that satisfies all objectives;
$(2)$ if no deviation from the observation of $\rho$ is detected by Player~$i$, then by condition $(F_3)$ we know that if the outcome does not satisfy $\varphi_i$, then it does not satisfies $\varphi_j$, for all $1 \leq j \leq n$, 
$(3)$ if a deviation from the observation of $\rho$ is detected by Player~$i$, then
the sequence of observations of Player $i$ so far can be decomposed as
$\kappa = \kappa_1(o_1,\sigma_1)\dots(o_m,\sigma_nm$ where $(o_1,\sigma_1)$ is the first deviation
of the observation of $\rho$, and $(o_m,\sigma_m)$ is the first time it is Player $i$'s turn to play
after this deviation (so possibly $m=1$).  By condition $(F_2)$, we know that $\kappa_1$
is good for retaliation. Clearly, $\kappa_1(o_1,\sigma_1)\dots (o_\ell,\sigma_\ell)$ is retaliation
compatible as well for all $\ell\in\{1,\dots,m\}$ since retaliation goodness is preserved 
by player $j$'s actions for all $j$. Therefore $\kappa$ is good for retaliation and by definition
of retaliation goodness there exists an observation-based
retaliation strategy $\lambda_i^R$ for Player $i$ which ensures that that  regardless of the strategies
of the opponents in coalition, if the outcome does not satisfy $\varphi_i$,
then for all $j\in\{1,\dots,n\}$, it does not satisfy $\varphi_j$ either.
%
%Hence, $(\lambda_1, \dots, \lambda_n)$ is a doomsday equilibrium.

Second, assume that there exists a DE $(\lambda_1, \dots, \lambda_n)$ in $G$,
and show that $(F_1),(F_2)$ and $(F_3)$ hold.
Let $\rho$ be the outcome of the profile $(\lambda_1, \dots, \lambda_n)$. Then
$\rho$ satisfies $(F_1)$ by definition of DE. 
Let us show that it also satisfies $(F_3)$.
By contradiction, if $\obs_i(\rho)$ is not doomsday compatible for Player $i$, then
by definition, there is a path $\rho'$ in $\plays(G)$ that is compatible with the observations
and actions of player~$i$ in $\rho$ (i.e., $\obs_i(\rho) = \obs_i(\rho')$),
but $\rho'$ does not satisfy $\varphi_i$, while it satisfies $\varphi_j$ for 
some $j \neq i$.  Then, given the strategy $\lambda_i$ from the profile, the other players in coalition
can choose actions to construct the path~$\rho'$~(since $\rho$ and $\rho'$ are
observationally equivalent for player~$i$, the observation-based strategy~$\lambda_i$ is going
to play the same actions as in $\rho$). This would show that the profile
is not a DE, establishing a contradiction. Hence $\obs_i(\rho)$ 
is doomsday compatible for Player $i$ for all $i=1,\dots,n$ and $(F_3)$ holds.
Let us show that $\rho$ also satisfies $(F_2)$. Assume that this not true.
Assume that $\kappa$ is a prefix of $\obs_i(\rho)$ such that 
$\kappa$ is not good for retaliation for Player $i$ for some $i$. 
By definition it means that the other players can 
make a coalition and enforce an outcome $\rho'$, from any prefix of play compatible with $\kappa$, that
is winning for one of players of the coalition, say Player~$j$, $j\not=i$, and losing for Player~$i$. 
This contradicts the fact that $\lambda_i$ belongs to a DE.\qed
\end{proof}

\begin{theorem}\label{thm:imperfect}
The problem of deciding the existence of a doomsday equilibrium in an
$n$-player game arena with imperfect information and $n$ objectives 
is {\sc ExpTime-C} for objectives that are either
all reachability, all safety, all B\"uchi, all co-B\"uchi or all parity objectives.
\end{theorem}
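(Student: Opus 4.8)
The plan is to decide the existence of a doomsday equilibrium through the characterisation of Lemma~\ref{general-construct}: one searches for a play $\rho$ satisfying the three conditions $(F_1)$ (all objectives hold), $(F_2)$ (every prefix of $\Obs_i(\rho)$ is good for retaliation for Player~$i$) and $(F_3)$ ($\Obs_i(\rho)$ is doomsday compatible for Player~$i$). Throughout, write $\psi_i := \varphi_i \cup \bigcap_{j=1}^{n}\overline{\varphi_j}$ for the retaliation objective of Player~$i$; it is $\omega$-regular, and across the five classes it has a convenient shape --- a disjunction of a B\"uchi and a co-B\"uchi condition (hence a $3$-priority parity condition) for B\"uchi objectives, a disjunction of a co-B\"uchi and a generalised B\"uchi condition for co-B\"uchi objectives, a disjunction of a reachability and a safety condition for reachability objectives, a disjunction of a safety and a generalised-reachability condition for safety objectives, and a disjunction of a parity and a Streett condition for parity objectives --- so in each case $\psi_i$ is recognised by $G$ equipped with an acceptance condition of polynomial size.

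For the {\sc ExpTime} upper bound I would, for each Player~$i$, build two deterministic ``monitor'' automata over the observation alphabet $O_i\times(\Sigma\cup\{\tau\})$ of Player~$i$. The first one, $\mathcal{M}_i$, should recognise exactly the observation sequences all of whose prefixes are good for retaliation: by definition a prefix $\kappa$ is good for retaliation for Player~$i$ iff, from the knowledge Player~$i$ has about the current state after $\kappa$, she can --- playing alone, observation-based --- enforce $\psi_i$ against the perfectly-informed coalition of all the other players. This is a two-player imperfect-information game with an $\omega$-regular objective, solvable in time single-exponential in the size of the input by the standard subset-construction algorithm for $\omega$-regular games of imperfect information (exploiting the polynomial-size shape of $\psi_i$ recorded above); $\mathcal{M}_i$ is then the deterministic knowledge automaton carrying the safety condition ``stay in the computed winning region''. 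The second monitor, $\mathcal{N}_i$, should recognise the doomsday-compatible sequences $\eta$, i.e.\ those with $\gamma_i(\eta)\subseteq\psi_i$: its complement is the projection onto Player~$i$'s observations of the $\omega$-regular language $\overline{\psi_i}\cap\plays(G)$, which is recognised by $G$ with a polynomial-size (complemented) acceptance condition; this projection is a nondeterministic automaton with $|S|$ states, so complementing it yields a deterministic $\mathcal{N}_i$ of single-exponential size. Finally I would form the product $\mathcal{P}$ of $G$ with all the $\mathcal{M}_i$ and all the $\mathcal{N}_i$ (each composed with the deterministic map sending a play $\rho$ to $\Obs_i(\rho)$) and with a deterministic automaton for $\bigcap_{i=1}^{n}\varphi_i$, and test whether $\mathcal{P}$ has an infinite run from $s_{{\sf init}}$ that stays in the safe region of every $\mathcal{M}_i$, is accepted by every $\mathcal{N}_i$, and satisfies $\bigcap_i\varphi_i$. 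By Lemma~\ref{general-construct} such a run exists iff $G$ has a doomsday equilibrium; since $\mathcal{P}$ is of single-exponential size and the last step is a one-player $\omega$-automaton non-emptiness test (a conjunction of parity/Streett conditions, solvable in time polynomial in $|\mathcal{P}|$), the whole procedure runs in {\sc ExpTime}.

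For the matching {\sc ExpTime} lower bound I would reduce from deciding the winner of a two-player imperfect-information game in which Player~1 has imperfect information, Player~2 has perfect information, and Player~1's objective is of the relevant class --- a classical {\sc ExpTime}-hard problem for reachability, safety, B\"uchi, co-B\"uchi and parity. Given such a game $G$ with objective $\varphi$ for Player~1, let $G'$ be the two-player imperfect-information game that is a copy of $G$ with $\varphi_1 := \varphi$ and $\varphi_2 := S^\omega$ the trivial objective. Then $\psi_1 = \varphi_1$ (since $\overline{\varphi_2}=\emptyset$) and $\psi_2 = S^\omega$, so $(F_1),(F_2),(F_3)$ are vacuous for Player~2, and by Lemma~\ref{general-construct} the game $G'$ has a doomsday equilibrium iff there is an observation-based strategy $\lambda_1$ of Player~1 with $\outcome_1(\lambda_1)\subseteq\varphi_1$ --- which is exactly $(F_2)$ at the initial (empty) prefix, and conversely any outcome $\rho$ of such a winning $\lambda_1$ satisfies $(F_1)$ (as $\rho\in\varphi_1$ and $\varphi_2$ is trivial), $(F_2)$ (the residual strategies of $\lambda_1$ retaliate), and $(F_3)$ (any play observationally equivalent to $\rho$ for Player~1 is again an outcome of the observation-based $\lambda_1$, hence in $\varphi_1$). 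Thus $G'$ has a doomsday equilibrium iff Player~1 wins $G$, giving {\sc ExpTime}-hardness; the round-robin ordering of players is readily enforced in $G'$, or can be dropped, cf.\ the footnote above.

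The step I expect to be the main obstacle is the upper bound, and in particular realising each of the three ingredients as a single-exponential object: the good-for-retaliation monitors $\mathcal{M}_i$ require solving an imperfect-information game for the objective $\psi_i$, for which one must exploit the polynomial-size shape of $\psi_i$ so that the game remaining after the subset construction is tractable; the doomsday-compatible monitors $\mathcal{N}_i$ involve a projection followed by a complementation, which one must carry out while keeping the acceptance condition of polynomial description (rather than first unfolding it into an exponential-size B\"uchi automaton) so as not to incur a second exponential. Once the monitors are in place, assembling the product $\mathcal{P}$ and running the one-player non-emptiness test on it is routine, as is checking (via Lemma~\ref{general-construct}) that the test is sound and complete.
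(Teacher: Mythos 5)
Your overall architecture is the same as the paper's: use Lemma~\ref{general-construct}, build for each Player~$i$ a monitor for the good-for-retaliation prefixes and a deterministic ($\omega$-)automaton for the doomsday-compatible observation sequences, take a synchronized product with $G$, and reduce to a one-player emptiness test of polynomial complexity in the exponential-size product; your hardness reduction (a two-player imperfect-information game with a trivial objective for the perfectly-informed player, so that a doomsday equilibrium exists iff Player~1 has an observation-based winning strategy) is exactly the intended use of the cited {\sc ExpTime}-hardness results and is correct.

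The genuine gap is in the construction of the good-for-retaliation monitors $\mathcal{M}_i$. You claim that the game in which Player~$i$ must enforce $\psi_i=\varphi_i\cup\bigcap_j\overline{\varphi_j}$ against the coalition can be solved by ``the standard subset-construction algorithm,'' and that $\mathcal{M}_i$ is the knowledge automaton with the safety condition ``stay in the winning region.'' But $\psi_i$ is \emph{not observable} by Player~$i$ (it refers to the other players' target sets, which Player~$i$ cannot distinguish), and the knowledge-based subset construction is sound only for objectives expressed on observations; for non-observable B\"uchi, co-B\"uchi and parity conditions the knowledge set is not a sufficient statistic for solving the game, and for reachability/safety one must additionally record, per compatible prefix, which targets have already been visited (these records can differ among prefixes with the same current state). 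This is precisely the technical core of the paper's proof: it first builds, by Safra-like determinization of a universal automaton over Player~$i$'s observation alphabet (with extra bits for the non-tail cases), a deterministic Streett automaton $D_i$ for the doomsday-compatible sequences, and then obtains the good-for-retaliation automaton $C_i$ by solving a \emph{perfect-information} Streett game played on $D_i$ in which the antagonist chooses observations and the protagonist chooses actions; a prefix is good for retaliation iff its run in $D_i$ reaches the winning region. Note that your $\mathcal{N}_i$ is essentially this $D_i$, so the missing step is to derive $\mathcal{M}_i$ from $\mathcal{N}_i$ in this way rather than from a plain knowledge construction; as written, the step you yourself flag as ``the main obstacle'' is not carried out, and the procedure you describe for it would fail.
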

\begin{proof}
By Lemma~\ref{general-construct}, we know that we can decide the existence of a doomsday equilibrium by checking the existence of a play $\rho$ in $G$ that respects the conditions $(F_1), (F_2)$, and $(F_3)$.
It can be shown (see Appendix), for all $i\in\{1,\dots,n\}$, 
that the set of good for retaliation prefixes for Player $i$ is definable by a finite-state automaton
$C_i$, and the set of observation sequences that are doomsday compatible for Player $i$ is definable by
a Streett automaton $D_i$. 

From the automata $(D_i)_{1 \leq i \leq n}$ and $(C_i)_{1 \leq i \leq n}$, we construct using a synchronized product a finite transition system $T$ and check for the existence of a path in $T$ that satisfy the winning objectives for each player in $G$, the Streett acceptance conditions of the $(D_i)_{1 \leq i \leq n}$, and whose all prefixes are accepted by the automata $(C_i)_{1 \leq i \leq n}$. The size of $T$ is exponential in $G$ and the acceptance condition is a conjunction of Streett and safety objectives. The existence of such a path can be established in polynomial time in the size of $T$, so in exponential time in the size of $G$. The {\sc ExpTime}-hardness 
is a consequence of the {\sc ExpTime}-hardness of two-player games of imperfect information for all the considered objectives \cite{DBLP:conf/fsttcs/BerwangerD08,RaskinCDH07}.\qed
\end{proof}
%%% Local Variables: 
%%% mode: latex
%%% TeX-master: t
%%% End: 

\section{Conclusion}

We defined the notion of doomsday threatening equilibria both for perfect and imperfect information $n$ player games with omega-regular objectives. This notion generalizes 
to $n$ player games the winning secure equilibria \cite{CHJ06}. Applications in the analysis of security protocols are envisioned and will be pursued as future works.

We have settled the exact complexity in games of perfect information for almost all omega-regular objectives with complexities ranging from {\sc PTime} to {\sc PSpace}, the only small gap that remains is for parity objectives where we have a {\sc PSpace} algorithm and both {\sc NP} and {\sc coNP}-hardness. Surprisingly, the existence of doomsday threatening equilibria in $n$ player games with imperfect information is decidable and more precisely {\sc ExpTime-C} for all the objectives that we have considered.

In a long version of this paper~\cite{longversion}, we provide a solution in {\sc 2ExpTime} for deciding the existence of a doomsday threatening equilibrium in a game whose objectives are given as LTL formula (this solution is optimal as it is easy to show that the classical LTL realizability problem can be reduced to the DE existence problem). We also provide a Safraless solution~\cite{KupVar05a} suitable to efficient implementation. 
\vspace{-3mm}

%%% Local Variables: 
%%% mode: latex
%%% TeX-master: t
%%% End: 

% \input{examples}

% \bibliographystyle{plain}
% \bibliography{bib1}

\makeatletter
\newcounter{suitebib}
\let\@save@bibitem\@bibitem
\def\@bibitem#1{\@save@bibitem{#1}
  \global\setcounter{suitebib}{\value{\@listctr}}}
\makeatother
\putbib[biblio]
\end{bibunit}

\newpage
\begin{bibunit}[abbrv]

\appendix
\section{Complexity of Doomsday Equilibria for Perfect Information Games}

\subsection{Proof of Lemma \ref{lem:lowerboundstail}}

The proof of Lemma \ref{lem:lowerboundstail} is decomposed into several lemmas.

\begin{lemma}
The problem of deciding the existence of a doomsday equilibrium in a $2$-player game arena is {\sc PTime}-hard both for B\"uchi and co-B\"uchi winning objectives.
\end{lemma}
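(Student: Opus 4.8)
The statement to prove is that deciding the existence of a doomsday equilibrium in a $2$-player game arena is \textsc{PTime}-hard for both B\"uchi and co-B\"uchi objectives. The natural approach is a logspace (or polynomial-time) reduction from a known \textsc{PTime}-complete problem. Since solving two-player zero-sum B\"uchi games is \textsc{PTime}-complete (cited as \cite{Imm4} above), the plan is to reduce "Does Player~1 win the zero-sum B\"uchi game $G$?" to "Does a doomsday equilibrium exist in some $2$-player non-zero-sum game $G'$?". As sketched in the main text (proof of Lemma~\ref{lem:lowerboundstail}), the idea is to keep the same arena, state partition, and transitions ($G' := G$), give Player~1 the same B\"uchi objective as in $G$, and give Player~2 the \emph{trivial} B\"uchi objective (all states are B\"uchi states, so $\varphi_2 = S^\omega$).

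\textbf{Key steps.} First I would spell out the reduction precisely and check it is computable in logspace: $G'$ has the same underlying graph, so the only change is the target sets ($T_1$ as in $G$, $T_2 = S$), which is trivial to write down. Second, I would invoke Lemma~\ref{lem:retaliateregion}: a doomsday equilibrium exists iff there is an infinite play that (1) lies in $\varphi_1 \cap \varphi_2 = \varphi_1$ (since $\varphi_2 = S^\omega$), and (2) stays inside $R_1 \cap R_2$ where $R_i = \strat{i}(\varphi_i \cup (\overline{\varphi_1}\cap\overline{\varphi_2}))$. Now $\overline{\varphi_2} = \emptyset$, so $\overline{\varphi_1}\cap\overline{\varphi_2} = \emptyset$, giving $R_1 = \strat{1}\varphi_1$ (the winning region of Player~1 in the zero-sum B\"uchi game $G$) and $R_2 = \strat{2}(\varphi_2 \cup \emptyset) = \strat{2}(S^\omega) = S$. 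Hence $R_1 \cap R_2 = R_1 = \strat{1}\varphi_1$. So the condition becomes: there is a play from $s_{\sf init}$ satisfying $\varphi_1$ and never leaving $\strat{1}\varphi_1$. Third, I would argue this holds iff $s_{\sf init} \in \strat{1}\varphi_1$, i.e., iff Player~1 wins the zero-sum B\"uchi game: if $s_{\sf init}\in\strat{1}\varphi_1$, a Player~1 winning strategy (against any Player~2 behavior, in particular against a cooperating Player~2) produces such a play — and by the standard fact that the winning region is a trap for the opponent / closed under Player~1's strategy, the play never leaves $\strat{1}\varphi_1$ and satisfies $\varphi_1$; conversely if such a play exists it visits $\strat{1}\varphi_1$ (indeed starts there), so $s_{\sf init}\in\strat{1}\varphi_1$. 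The co-B\"uchi case is identical: replace "B\"uchi" by "co-B\"uchi" throughout, using that solving zero-sum co-B\"uchi games is also \textsc{PTime}-complete and that $R_2 = S$ still holds because Player~2's trivial co-B\"uchi objective (target $= S$) is all of $S^\omega$.

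\textbf{Main obstacle.} There is no deep difficulty here; the one point requiring care is the subtle direction of property (2) in the forward argument — one must be sure that a play consistent with a Player~1 winning strategy genuinely stays within $\strat{1}\varphi_1$. This is the classical observation that the attractor/winning region for B\"uchi (resp. co-B\"uchi) objectives is closed under the winning player's moves and cannot be escaped by the opponent, so any winning strategy can be taken to never leave it. I would state this explicitly (or cite it from standard game-theory references already in the bibliography) rather than re-prove it. The other mild subtlety is confirming that the DE definition's retaliation clause is vacuous for Player~2 (since $\varphi_2$ is always satisfied, $\rho \notin \varphi_2$ never happens) and is exactly "Player~1 can guarantee $\varphi_1$" for Player~1 (since $\overline{\varphi_1}\cap\overline{\varphi_2}$ is empty, retaliation for Player~1 degenerates to winning $\varphi_1$) — but this is precisely what the appeal to Lemma~\ref{lem:retaliateregion} packages up cleanly.
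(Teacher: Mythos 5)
Your reduction is exactly the paper's: keep the arena, give Player~1 the original B\"uchi (resp.\ co-B\"uchi) objective and Player~2 the trivial one, and observe that a doomsday equilibrium exists iff Player~1 wins the zero-sum game; the only cosmetic difference is that you verify correctness by instantiating Lemma~\ref{lem:retaliateregion} (computing $R_1=\strat{1}\varphi_1$, $R_2=S$), whereas the paper argues in one line directly from the definition, noting that since $\overline{\varphi_2}=\emptyset$ the doomsday set is empty and Player~1's retaliation clause degenerates to having a winning strategy. Both arguments are sound, so your proposal is correct and essentially matches the paper's proof.
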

\begin{proof}
We explain the result for B\"uchi objectives (the proof for co-B\"uchi objectives is similar). 
To establish this result, we show how to reduce the problem of deciding the winner in a two-player zero-sum game with a B\"uchi objective (for Player~1), a {\sc PTime-C} problem \cite{Imm4}, can be reduced to the existence of a doomsday equilibrium in a two-player game arena with B\"uchi objectives. 
Let $G$ be the two-player game, $S$ its set of states, and $T$ the set of states that Player~1 wants to visit infinitely often. We reduce the problem of deciding the existence of such a strategy to the existence of a doomsday equilibrium in the same game arena,
where the objective of Player~1 is the original B\"uchi objective, i.e. ${\sf B\ddot uchi}(T)$, and the objective of Player~2 is trivial: ${\sf B\ddot uchi}(S)$. Clearly, as Player~2 will
always satisfy his objective, Player~1 must have a winning strategy for ${\sf B\ddot uchi}(T)$ if a doomsday equilibrium exists (and vice versa) otherwise condition~2 would be violated.   \qed
\end{proof}

We now turn to the proof of lower bounds for parity objectives. We build our proof for these lower bounds on the hardness of generalized parity games~\cite{ChatterjeeHP07}: in a two-player (called Player~$A$ and Player~$B$) zero-sum game,
 %with perfect information
and an objective given by:

  \begin{itemize}
  	\item the {\em disjunction} of two parity objectives, it is {\sc NP-Hard} to decide if Player~$A$ has a winning strategy, 
	\item the {\em conjunction} of two parity objectives, it is {\sc coNP-Hard} to decide if Player~$A$ has a winning strategy.
  \end{itemize}
\noindent
We next show that these decision problems can be reduced to the problem of the existence of a doomsday equilibrium in an $n$ player game with parity objectives. 

\begin{figure*}[!ht]
\centering

\begin{tikzpicture}[scale=0.8]
\tikzstyle{every state}=[fill=gray!20!white]

\tikzstyle{sttstates} = [rounded corners, fill=gray!10!white,fill opacity=0.5]

\tikzset{P3/.style={fill=gray!20!white,diamond,draw,very thick,minimum size=6mm}}
\tikzset{P2/.style={fill=gray!20!white,rectangle,draw,very thick,minimum size=10mm}}
\tikzset{P1/.style={fill=gray!20!white,circle,draw,very thick,minimum size=10mm}}

\draw[line width=.5mm,color=black] (-3,4) node[P3](sinit'){$s_{init}'$};

\draw[line width=.5mm,color=black] (0,4) node[P1](sinit){$s_{init}$};

\draw[line width=.5mm,color=black] (3,4) node[P2](s1){$s_1$};

\draw[line width=.5mm,color=black] (6,4) node[P3](s2){$(s_2,3)$};

\draw[line width=.5mm,color=black] (9,4) node[P1](s3){$s_2$};

\draw[line width=.5mm,color=black] (0,7) node[state](sink){${\sf Bad}_1$};

\draw[line width=.5mm,color=black] (4.5,7) node[state](sink'){${\sf Bad}_{1,2}$};

\path[rounded corners, draw=black!50, dashed]
            (-1.2,2) rectangle (9.7,6); 

\draw (5,2.2) node[](label){Modified Copy of $G$};

\path (sinit') edge [loop above] node {$\sigma\in\Sigma$} (sinit');

\path (sinit') edge [->, above] node {${\sf deviate}$} (sinit);

% \path (s1) edge [loop above] node {$(\textsf{stay},\sigma_2^k)$} (s1);

\path (sinit) edge [->,above] node {$\sigma\in\Sigma$} (s1);

\path (s1) edge [->,above] node {$\sigma\in\Sigma$} (s2);

\path (s2) edge [->,above] node {$\sigma\in\Sigma$} (s3);

\path (sinit) edge [->,left] node {${\sf deviate}$} (sink);

\path (sink) edge [loop left] node {$\Sigma'$} (sink);

\path (s1) edge [->,left] node {${\sf deviate}$} (sink');

\path (s2) edge [->,right] node {${\sf deviate}$} (sink');

\path (sink') edge [loop right] node {$\Sigma'$} (sink');

\draw[line width=.5mm,color=black] (-3,3) node[sttstates] {$(0,0,0)$};
\draw[line width=.5mm,color=black] (1.6,2.3) node[sttstates] {$(1, p_1+1,\ p_2+1)$};
\draw[line width=.5mm,color=black] (1.3,6.9) node[sttstates] {$(1,0,0)$};
\draw[line width=.5mm,color=black] (3.1,6.9) node[sttstates] {$(1,1,1)$};

\end{tikzpicture}
  \caption{Structure of the reduction from generalized parity game with a conjunction of two parity objectives to the existence of a doomsday equilibrium with parity objectives.}
  \label{fig:reduc-conj-parity}
\end{figure*}
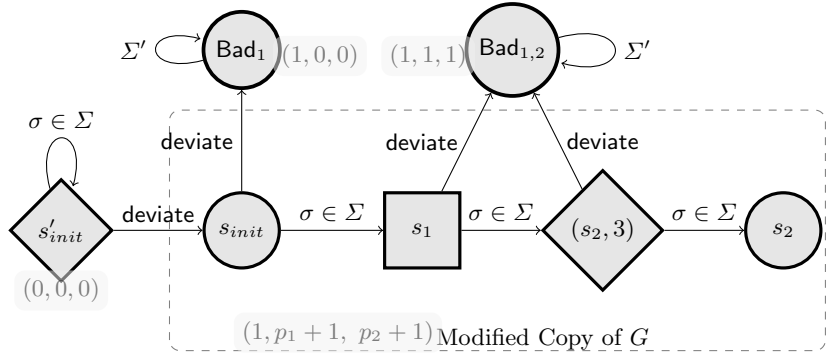

\begin{lemma}[{\sc coNP-Hardness}]
The problem of deciding the existence of a doomsday equilibrium in a $3$-player game arena with parity objectives
is {\sc coNP-hard}.
\end{lemma}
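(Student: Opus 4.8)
The plan is to reduce from the problem of deciding whether Player~$A$ wins a two-player zero-sum game $G$ whose objective is the \emph{conjunction} ${\sf parity}(p_1)\cap{\sf parity}(p_2)$ of two parity objectives, a {\sc coNP}-hard problem~\cite{ChatterjeeHP07}. From such a $G$, with its player-$A$ states and player-$B$ states, I would build a $3$-player game arena $G'$ with parity objectives (whose structure is the one in Fig.~\ref{fig:reduc-conj-parity}) such that $G'$ admits a doomsday equilibrium if and only if Player~$A$ wins $G$.

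For the construction, $G'$ embeds a lightly modified copy of $G$ in which Player~$1$ owns the player-$A$ states, Players~$2$ and~$3$ together own the player-$B$ states, Player~$1$ is given a constant odd priority (so $\varphi_1$ is unsatisfiable inside the copy), Player~$2$ is given the priority function $p_1$ shifted by one, and Player~$3$ the priority function $p_2$ shifted by one; using that the complement of a parity objective is again a parity objective, this makes $\overline{\varphi_2}={\sf parity}(p_1)$ and $\overline{\varphi_3}={\sf parity}(p_2)$ inside the copy, hence $\bigcap_{j=1}^{3}\overline{\varphi_j}={\sf parity}(p_1)\cap{\sf parity}(p_2)$ on plays that stay in the copy. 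I add a fresh initial state $s'_{init}$, owned by Player~$3$, carrying an even priority for all players, with a self-loop and an edge into the initial state of the copy, so that the play looping forever in $s'_{init}$ is won by everyone. Finally I add two absorbing states: ${\sf Bad}_1$, reachable only by a move of Player~$1$ out of $s_{init}$, on which Player~$1$ loses and Players~$2,3$ win; and ${\sf Bad}_{1,2}$, reachable by an early move of Player~$2$ or~$3$ in the copy, on which all three players lose. The placement of the edge to ${\sf Bad}_{1,2}$ guarantees that, along any play that enters the copy, Players~$2$ and~$3$ have the opportunity to force a doomsday before anything else happens.

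For correctness, since parity is a tail objective, Lemma~\ref{lem:retaliateregion} reduces the existence of a doomsday equilibrium in $G'$ to the existence of an infinite play that is won by all three players and stays inside $R_1\cap R_2\cap R_3$. First I would note that the only play won by all three players is the self-loop at $s'_{init}$, because any play that leaves $s'_{init}$ either ends in ${\sf Bad}_1$ or ${\sf Bad}_{1,2}$ or stays forever in the copy, and in each case Player~$1$ loses. So $G'$ has a doomsday equilibrium iff $s'_{init}\in R_1\cap R_2\cap R_3$. Now $s'_{init}\in R_3$ since Player~$3$ loops, and $s'_{init}\in R_2$ unconditionally, since against the coalition $\{1,3\}$ either Player~$1$ goes to ${\sf Bad}_1$ (where Player~$2$ wins) or the play enters the copy, where Player~$2$ can move to ${\sf Bad}_{1,2}$ and force a doomsday. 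The decisive case is $R_1$: against the coalition $\{2,3\}$, if they loop at $s'_{init}$ then Player~$1$ wins, so they must enter the copy, and Player~$1$ cannot escape to ${\sf Bad}_1$ (there Players~$2,3$ win while Player~$1$ loses, which defeats the retaliation requirement). Inside the copy $\varphi_1$ is unsatisfiable, so Player~$1$ retaliates iff it can keep the play to plays of the copy satisfying ${\sf parity}(p_1)\cap{\sf parity}(p_2)$, where the coalition escaping to ${\sf Bad}_{1,2}$ only helps Player~$1$; since Player~$1$ plays the player-$A$ moves and the coalition the player-$B$ moves, this is possible iff Player~$A$ wins $G$. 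Conversely, if Player~$A$ does not win $G$, the coalition enforces $\overline{{\sf parity}(p_1)}\cup\overline{{\sf parity}(p_2)}$ while staying in the copy: if ${\sf parity}(p_1)$ fails then Player~$2$ wins, otherwise ${\sf parity}(p_2)$ fails and Player~$3$ wins, so the outcome is neither in $\varphi_1$ nor a doomsday, and $s'_{init}\notin R_1$. This yields the equivalence, hence {\sc coNP}-hardness.

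The step I expect to be the main obstacle is the design of the gadget so that four things hold at once: the retaliation regions of Players~$2$ and~$3$ at $s'_{init}$ are non-trivial yet unconditionally satisfied, Player~$1$ can never retaliate ``for free'' through a bad state, no spurious play is won by all three players, and the restriction of the modified copy to Player~$1$ against the coalition $\{2,3\}$ faithfully simulates $G$. The priority shifts (the complement-of-parity trick) together with the restricted reachability of ${\sf Bad}_1$ and ${\sf Bad}_{1,2}$ are precisely what should make all four requirements hold simultaneously.
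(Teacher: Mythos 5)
Your reduction is correct and is essentially the paper's own: same source problem (zero-sum games with a conjunction of two parity objectives, {\sc coNP}-hard by~\cite{ChatterjeeHP07}), same gadget (an initial state owned by Player~3 with an all-even self-loop, a copy of $G$ in which Player~1 has constant odd priority while Players~2 and~3 carry the complemented priorities $p_1+1$ and $p_2+1$, plus the two absorbing bad states), and the same key equivalence that a doomsday equilibrium exists iff Player~1 can retaliate inside the copy, i.e.\ iff Player~$A$ wins the conjunction in $G$. The only differences are cosmetic: the paper gives Players~2 and~3 a perpetual ``deviate'' escape by interleaving extra Player-3 states after every Player-2 move (rather than your one-shot opportunity at the entry of the copy with split ownership of the $B$-states), and it argues the equivalence directly rather than through Lemma~\ref{lem:retaliateregion}; both variants are sound.
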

\begin{proof}

Let $G=(S, \{S_A,S_B\}, s_{{\sf init}},\Sigma, \Delta)$ be a two-player game  
and a conjunction of two parity objectives defined by the functions $p_1$ and $p_2$ that Player~$A$ wants to enforce, i.e. the objective of Player~$A$ is to ensure an outcome that satisfies the two parity objectives, while the objective of Player~$B$ is to ensure an outcome that violates at least one of the two parity objectives. W.l.o.g., we assume that $s_{{\sf init}} \in S_A$ and the turns of $A$ and $B$
alternate. 

From $G$, we construct a 3-player game arena $G'=(S', \{S'_1,S'_2,S'_3\},s'_{{\sf init}},\Sigma', \Delta')$ (depicted in Fig. \ref{fig:reduc-conj-parity}), with:
  \begin{itemize}
	\item the set of states $S'=\{ s'_{{\sf init}}, {\sf Bad}_1, {\sf Bad}_{1,2,3} \} \cup S_A \cup S_B  \cup (S_A \times \{3\})$, this set is partitioned as follows: $S_1=S_A \cup \{{\sf Bad}_1\}$, $S_2=S_B$, $S_3=(S_A \times \{3\}) \cup \{s'_{{\sf init}},{\sf Bad}_{1,2,3}\}$.
	\item the initial state is $s'_{{\sf init}}$,
	\item the alphabet of actions is $\Sigma'=\Sigma \cup \{ {\sf deviate} \}$,
   	\item and the transitions of the game $G'$ are defined as follows: 
		\begin{itemize}
			\item For the state $s'_{{\sf init}}$, for all $\sigma \in \Sigma$, $\Delta'(s'_{{\sf init}},\sigma)=s'_{{\sf init}}$, and $\Delta'(s'_{{\sf init}},\sigma)=s_{{\sf init}}$; i.e., the play stay in $s'_{{\sf init}}$, unless Player~3 plays {\sf deviate} in which case the play goes to $s_{{\sf init}}$ that is the copy of the initial state of the game arena $G$.
			\item For all states $s \in S_A$, for all $\sigma \in \Sigma$, $\Delta'(s,\sigma)=\Delta(s,\sigma)$, and $\Delta'(s,{\sf deviate})={\sf Bad}_1$, so the transition function on the copy of $G$ behaves from states owned by Player~1 as in the original game and it sends the game to ${\sf Bad}_1$ if Player~1 plays the action {\sf deviate}.
			\item For all states $s \in S_B$, for all $\sigma \in \Sigma$, $\Delta'(s,\sigma)=(\Delta(s,\sigma),3)$ and $\Delta'(s,{\sf deviate})={\sf Bad}_{1,2,3}$; i.e., if Player~2 plays an action from the game $G$, the effect is to send the game to the Player~3 copy of the same state as in the original game, if he deviates, the game reaches the sink state ${\sf Bad}_{1,2,3}$.
			\item For all states $s \in S_A \times \{3\}$, for all $\sigma \in \Sigma$, $\Delta'((s,3),\sigma)=s$ and $\Delta'((s,3),{\sf deviate})={\sf Bad}_{1,2,3}$. I.e. if Player~3 plays an action $\sigma \in \Sigma$, he gives back the turn to Player~1, otherwise he sends the game to ${\sf Bad}_{1,2,3}$.
			\item The states ${\sf Bad}_1$ and  ${\sf Bad}_{1,2,3}$ are absorbing. 
		\end{itemize}

%			\item the game loops in $s_{{\sf init}}$ as long as the all the players propose to play the action ${\sf \#}$. As soon as one of the three players deviates, then the game moves to the copy $G$, i.e. to its initial state $s_{{\sf init}}$. 
%			\item in all states $q$ of the copy of $G$, as long as Player~$1$ plays an action $\sigma_A \in \Sigma$, and Player~$2$ and Player~$3$ synchronize on a common action $\sigma_B \in \Sigma$, then the next state is defined as in the game $G$: $q'=\Delta'(q,(\sigma_A,\sigma_B,\sigma_B))=\Delta(q,(\sigma_A,\sigma_B))$.
%			\item in all states $q$ of the copy of $G$, if Player~$1$ plays an action $\sigma_A \in \Sigma$ and the two other players do not play a common action or if they both play the action $\#$ then the game evolves to the state $s'_{{\sf sink}}$ (which is never left),
%			\item in all states $q$ of the copy of $G$, if Player~$1$ plays the action $\#$, the game reaches the state $s_{{\sf sink}}$ (which is never left). 
%		\end{itemize}

	\item The parity functions $(p'_i)_{i=1,2,3}$ for the three players are defined to satisfy the following condition:
		\begin{itemize}
			\item first, $p'_i(s'_{{\sf init}})$ is even for all $i=1,2,3$ (so if the game stays there for ever, all the players satisfy their objectives). 
			\item second, in ${\sf Bad}_{1}$ the parity functions return an even number for Player~$2$ and Player~$3$ but an odd number for Player~$1$, this ensures that Player~$1$ should never play the action ${\sf deviate}$ when the game is in the copy of $G$, 
			\item third, in ${\sf Bad}_{1,2,3}$ the parity functions are odd for all the Players. So whenever Player~2 and 3 play ${\sf deviate}$ all players loose,
			\item finally, in the copy of $G$, the parity function is always odd for Player~$1$, and
                          for all states $q\in S_A\cup S_B\cup (S_A\times \{3\})$, 
                          $p'_2(q)=p_1(s)+1$ and $p'_3(q)=p_2(s)+1$, 
                          where $s = q$ if $q\in S_A\cup S_B$, and $s$ is such that $q = (s,3)$ if $q\in S_A\times \{3\}$. 
		\end{itemize} 
  \end{itemize}
\noindent
This concludes the reduction. 

Clearly, since Player $A$ and $B$ always alternate their moves, in the copy of $G$, any play
will eventually reach a state of Player~2 and a state of Player~3, so that they are always
 able to retaliate by playing the action {\sf deviate}.

A doomsday equilibrium exists in $G'$ iff Player~1 is also able to retaliate when the game enter the copy of $G$. 
But clearly, it is possible if and only if he has a strategy to ensure $\overline{{\sf parity}(p'_1)}$ and
$\overline{{\sf parity}(p'_2)}$, or equivalently iff he has a strategy to ensure ${\sf parity}(p_1)$ and 
${\sf parity}(p_2)$, iff Player~$A$ has a winning strategy in the game $G$ for the conjunction of parity objectives
$p_1$ and $p_2$.\qed

 \end{proof}

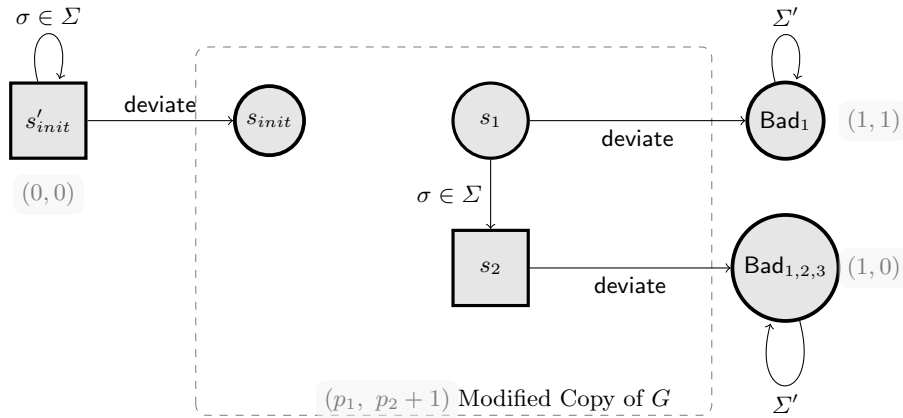
\begin{figure*}[!ht]
\centering

\begin{tikzpicture}[scale=0.98]
\tikzstyle{every state}=[fill=gray!20!white]
\tikzstyle{sttstates} = [rounded corners, fill=gray!10!white,fill opacity=0.5]

\tikzset{P3/.style={fill=gray!20!white,diamond,draw,very thick,minimum size=6mm}}
\tikzset{P2/.style={fill=gray!20!white,rectangle,draw,very thick,minimum size=10mm}}
\tikzset{P1/.style={fill=gray!20!white,circle,draw,very thick,minimum size=10mm}}
\tikzset{P1init/.style={fill=gray!20!white,initial,circle,draw,very thick,minimum size=6mm}}
\tikzset{P1final/.style={fill=gray!20!white,accepting,circle,draw,very thick,minimum size=6mm}}

\draw[line width=.5mm,color=black] (-3,4) node[P2](sinit'){$s_{init}'$};

\draw[line width=.5mm,color=black] (0,4) node[state](sinit){$s_{init}$};

\draw[line width=.5mm,color=black] (3,4) node[P1](s1){$s_1$};

\draw[line width=.5mm,color=black] (3,2) node[P2](s2){$s_2$};

\draw[line width=.5mm,color=black] (7,4) node[state](sink){${{\sf Bad}_1}$};

\draw[line width=.5mm,color=black] (7,2) node[state](sink'){${\sf Bad}_{1,2,3}$};

\path[rounded corners, draw=black!50, dashed]
            (-1,0) rectangle (6,5); 

\draw (4,0.2) node[](label){Modified Copy of $G$};

\path (sinit') edge [loop above] node {$\sigma\in\Sigma$} (sinit');

\path (sinit') edge [->, above] node {${\sf deviate}$} (sinit);

% \path (s1) edge [loop above] node {$(\textsf{stay},\sigma_2^k)$} (s1);

\path (s1) edge [->,left] node {$\sigma\in\Sigma$} (s2);

\path (s1) edge [->,below] node {${\sf deviate}$} (sink);

\path (sink) edge [loop above] node {$\Sigma'$} (sink);

\path (s2) edge [->,below] node {${\sf deviate}$} (sink');

\path (sink') edge [loop below] node {$\Sigma'$} (sink');

\draw[line width=.5mm,color=black] (-3,3) node[sttstates] {$(0,0)$};
\draw[line width=.5mm,color=black] (1.6,0.2) node[sttstates] {$(p_1,\ p_2+1)$};
\draw[line width=.5mm,color=black] (8.2,4) node[sttstates] {$(1,1)$};
\draw[line width=.5mm,color=black] (8.2,2) node[sttstates] {$(1,0)$};

\end{tikzpicture}
  \caption{Structure of the reduction from generalized parity game with a disjunction of two parity objectives to doomsday equilibrium with parity objectives.}
  \label{fig:reduc-disj-parity}
\end{figure*}

\begin{lemma}[{\sc NP-hardness}]
The problem of deciding the existence of a doomsday equilibrium in a $2$-player game arena with  parity objectives is {\sc NP-Hard}.
\end{lemma}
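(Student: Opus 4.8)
The plan is to adapt the preceding ({\sc coNP}-hardness) reduction to the disjunctive case. Recall from~\cite{ChatterjeeHP07} that it is {\sc NP-Hard} to decide, given a two-player zero-sum game $G=(S,\{S_A,S_B\},s_{{\sf init}},\Sigma,\Delta)$ with alternating turns (w.l.o.g.\ $s_{{\sf init}}\in S_A$) and two priority functions $p_1,p_2$, whether Player~$A$ has a strategy enforcing ${\sf parity}(p_1)\lor{\sf parity}(p_2)$. I would reduce this problem to the existence of a doomsday equilibrium in a $2$-player game with parity objectives, using the arena $G'$ sketched in Fig.~\ref{fig:reduc-disj-parity}. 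It consists of a modified copy of $G$ in which the $S_A$-states are owned by Player~$1$ and the $S_B$-states by Player~$2$; a fresh initial state $s'_{{\sf init}}$ owned by Player~$2$ with a self-loop on every $\sigma\in\Sigma$ and a ${\sf deviate}$-edge leading to the copy of $s_{{\sf init}}$; and two absorbing states ${\sf Bad}_1$ and ${\sf Bad}_{1,2}$, reached respectively when Player~$1$ plays ${\sf deviate}$ from an $S_A$-state and when Player~$2$ plays ${\sf deviate}$ from an $S_B$-state. The priorities are set so that: $s'_{{\sf init}}$ is winning for both players; inside the copy Player~$1$ keeps $p_1$ while Player~$2$ receives $p_2+1$, so that Player~$2$ wins in the copy exactly when ${\sf parity}(p_2)$ is \emph{violated}; ${\sf Bad}_1$ is losing for Player~$1$ but winning for Player~$2$; and ${\sf Bad}_{1,2}$ is losing for both. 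The construction is plainly polynomial.

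I would then apply the generic characterization of Lemma~\ref{lem:retaliateregion} (parity objectives are tail objectives), computing the two retaliation regions $R_i=\strat{i}\bigl(\varphi_i\cup(\overline{\varphi_1}\cap\overline{\varphi_2})\bigr)$. The region $R_2$ is the whole arena: Player~$2$ retaliates by looping forever in $s'_{{\sf init}}$ (she wins), and from anywhere in the copy she reaches the doomsday state ${\sf Bad}_{1,2}$ within one of her moves. For $R_1$, the key observation is that with priorities $p_1$ and $p_2+1$ one has, on plays that remain in the copy, $\varphi_1\cup(\overline{\varphi_1}\cap\overline{\varphi_2})={\sf parity}(p_1)\cup{\sf parity}(p_2)$, since $\overline{{\sf parity}(p_2+1)}={\sf parity}(p_2)$; furthermore a retaliating Player~$1$ never plays ${\sf deviate}$ (landing in ${\sf Bad}_1$ is the unique worst case, where she loses while Player~$2$ wins), whereas a ${\sf deviate}$ by Player~$2$ into ${\sf Bad}_{1,2}$ is harmless since it yields a doomsday. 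Hence the copy of $s_{{\sf init}}$ belongs to $R_1$ iff Player~$1$ has a strategy in the copy — equivalently, Player~$A$ in $G$ — enforcing ${\sf parity}(p_1)\lor{\sf parity}(p_2)$, and since Player~$1$ has no move before the copy is entered, $s'_{{\sf init}}\in R_1$ iff the copy of $s_{{\sf init}}$ is in $R_1$. Putting this together with Lemma~\ref{lem:retaliateregion}: if Player~$A$ wins $G$, the play that loops in $s'_{{\sf init}}$ lies in $\bigcap_i\varphi_i$ and stays within $R_1\cap R_2\ni s'_{{\sf init}}$, so a doomsday equilibrium exists; if Player~$A$ does not win $G$, then $s'_{{\sf init}}\notin R_1$, so no play can stay within $\bigcap_iR_i$ and, by the lemma, no doomsday equilibrium exists.

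The only real obstacle is the parity bookkeeping in the second step: one must check precisely that $\varphi_1\cup(\overline{\varphi_1}\cap\overline{\varphi_2})$ restricted to the copy equals ${\sf parity}(p_1)\lor{\sf parity}(p_2)$ — this is exactly where the $+1$ shift of $p_2$ is needed — that ${\sf Bad}_{1,2}$ lies in this set while ${\sf Bad}_1$ does not, and therefore that any strategy of Player~$1$ witnessing that the copy of $s_{{\sf init}}$ is in $R_1$ may be assumed not to use ${\sf deviate}$, so that it is literally a winning strategy of Player~$A$ in $G$, and conversely. The remaining ingredients — well-definedness of the induced strategy profile and the use of tail-invariance to ignore the $s'_{{\sf init}}$-prefix — are routine and mirror the {\sc coNP}-hardness reduction and the proof of Lemma~\ref{lem:retaliateregion}.
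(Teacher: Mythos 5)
Your reduction is exactly the paper's: the same arena $G'$ (self-looping initial state owned by Player~2 with a ${\sf deviate}$ edge into a copy of $G$, plus absorbing ${\sf Bad}_1$ and ${\sf Bad}_{1,2}$) with the same priority assignment ($p_1$ and $p_2+1$ in the copy, ${\sf Bad}_1$ losing only for Player~1, ${\sf Bad}_{1,2}$ losing for both), and the correctness argument — Player~2 can always retaliate, while Player~1 can retaliate iff Player~$A$ wins the disjunction of parities — matches the paper's, merely made explicit via the retaliation-region characterization of Lemma~\ref{lem:retaliateregion}. The proposal is correct and takes essentially the same approach as the paper.
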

\begin{proof}
For this part, we need to show how to reduce the problem of deciding if Player~$A$ has a winning strategy in a two-player zero-sum game whose objective is defined by the 
disjunction of two parity objectives. We only sketch the construction as it is based on the main ideas used in the {\sc coNP-hardness} result. Let $G=(S, s_{{\sf init}}, \Sigma, \Delta)$ be a two-player game 
with a {\em disjunction} of two parity objectives defined by the functions $p_1$ and $p_2$. The objective of Player~$A$ is to ensure an outcome that satisfies at least one of the two parity objectives (while the objective of Player~$B$ is to ensure an outcome that violates both parity objectives.)

From $G$, we construct a two-player game $G'$ with parity objectives $(p'_i)_{i=1,2}$ (see Fig. \ref{fig:reduc-disj-parity}). The game arena $G'$ contains a copy of $G$ plus three states $s'_{{\sf init}}$ (the initial state), ${\sf Bad}_{1}$ and ${\sf Bad}_{1,2}$. The alphabet of actions is $\Sigma \cup \{ {\sf deviate} \}$. 
The partition of the state space is as follows: $S_1=S_A$ and $S_2=S_B \cup \{ s'_{{\sf init}}\} \cup  \{ {\sf Bad}_{1},{\sf Bad}_{1,2}\}$.
The transitions are as follows: if Player~2 plays $\sigma \in \Sigma$ in $s'_{{\sf init}}$ then the game stays there, if he plays ${\sf deviate}$ then the game enters the copy of $G$. There the transition function for $\sigma \in \Sigma$ is defined as in $G$, and if Player~1 plays ${\sf deviate}$ then the game goes to ${\sf Bad}_1$, and if Player~2 plays ${\sf deviate}$ then the games goes to ${\sf Bad}_{1,2}$.  
The parity functions $(p'_i)_{i=1,2}$ are defined as follows: $p'_1$ returns an even number in $s'_{{\sf init}}$, is equal to $p_1$ in the copy of $G$, returns an odd number in ${\sf Bad}_{1}$ and ${\sf Bad}_{1,2}$. The function $p'_2$ returns an even number in $s'_{{\sf init}}$, is equal to $p_2+1$ in the copy of $G$ (so $p'_2$ is the complement of $p_2$), returns an even number in ${\sf Bad}_1$ and an odd number in ${\sf Bad}_{1,2}$. This definition of $p'_1$ and $p'_2$ ensures that the two players meet their parity objectives when the game always stay in $s'_{{\sf init}}$ and when the game enters the copy of $G$, Player~2 can always retaliate while Player~1 can retaliate if and only if Player~$A$ has a winning strategy in the original game. 
\qed
 \end{proof}

\subsection{Proof of Lemma \ref{lem:hardness-safety-perfect}}

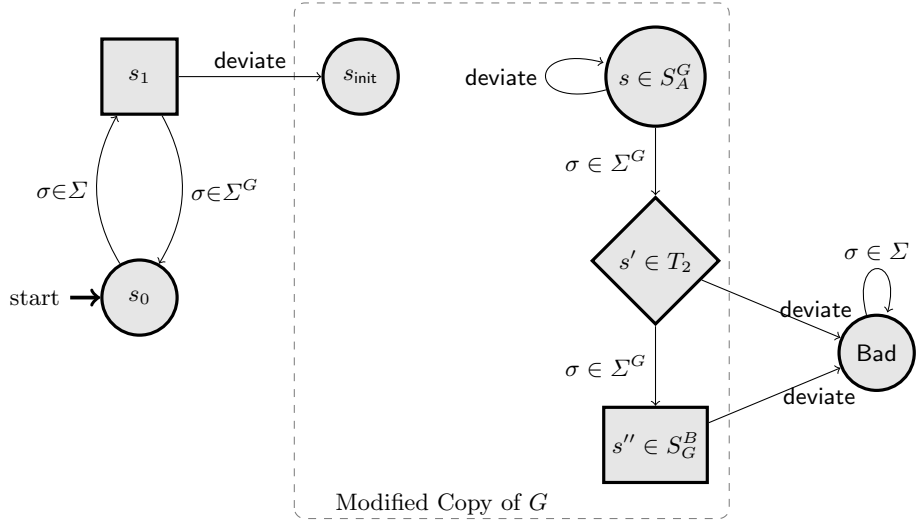
\begin{figure*}[!ht]
\centering

\begin{tikzpicture}[scale=0.98]
\tikzstyle{every state}=[fill=gray!20!white]
\tikzstyle{sttstates} = [rounded corners, fill=gray!10!white,fill opacity=0.5]

\tikzset{P3/.style={fill=gray!20!white,diamond,draw,very thick,minimum size=6mm}}
\tikzset{P2/.style={fill=gray!20!white,rectangle,draw,very thick,minimum size=10mm}}
\tikzset{P1/.style={fill=gray!20!white,circle,draw,very thick,minimum size=10mm}}
\tikzset{P1init/.style={fill=gray!20!white,initial,circle,draw,very thick,minimum size=10mm}}
\tikzset{P1final/.style={fill=gray!20!white,accepting,circle,draw,very thick,minimum size=6mm}}

\draw[line width=.5mm,color=black] (-3,2) node[P1init](s0){$s_0$};

\draw[line width=.5mm,color=black] (-3,5) node[P2](s1){$s_1$};

\draw[line width=.5mm,color=black] (0,5) node[P1](sinit){$s_{\sf init}$};

\draw[line width=.5mm,color=black] (4,5) node[P1](s){$s\in S_A^G$};

\draw[line width=.5mm,color=black] (4,2.5) node[P3](s'){$s'\in T_2$};

\draw[line width=.5mm,color=black] (4,0) node[P2](s''){$s''\in S_G^B$};

\draw[line width=.5mm,color=black] (7,1.25) node[P1](bad){${\sf Bad}$};

\path[rounded corners, draw=black!50, dashed]
            (-0.9,-1) rectangle (5,6); 

\draw (1.1,-0.8) node[](label){Modified Copy of $G$};

\path (s1) edge [->,bend left,right] node {$\sigma{\in}\Sigma^G$} (s0);

\path (s1) edge [->, above] node {${\sf deviate}$} (sinit);

\path (s0) edge [->,bend left, left] node {$\sigma{\in}\Sigma$} (s1);

\path (s) edge [->,left] node {$\sigma\in\Sigma^G$} (s');

\path (s') edge [->,left] node {$\sigma\in\Sigma^G$} (s'');

% \path (s1) edge [->,below] node {$\begin{array}{l}(\_,\overline{\sigma}) \text{ where } \overline{\sigma}\neq \sigma_2^k\\ \text{for any }\sigma_2\in \Sigma\end{array}$} (bad);

\path (s') edge [->, right] node {${\sf deviate}$} (bad);

\path (s'') edge [->, right] node {${\sf deviate}$} (bad);

\path (s) edge [loop left] node {${\sf deviate}$} (s);

\path (bad) edge [loop above] node {$\sigma\in\Sigma$} (bad);

% \draw[line width=.5mm,color=black] (-1.3,3.3) node[sttstates] {$1$};
% \draw[line width=.5mm,color=black] (0.3,1.3) node[sttstates] {$2$};
% \draw[line width=.5mm,color=black] (9,2.6) node[sttstates] {$3$};

\end{tikzpicture}
  \caption{Structure of the reduction from multi-reachability game to doomsday equilibrium with safety objectives. Round nodes
    denote Player $0$'s states and rectangular nodes denote Player $1$'s states. }
  \label{fig:reduc-multi-reach}
\end{figure*}

\begin{proof}
We reduce the two-player {\em multi-reachability} problem to our problem, {\sc PSpace-Hardness} follows.
Let $G=(S^G, \{S^G_A,S^G_B\}, s^G_{{\sf init}}, \Sigma^G, \Delta^G)$ be a two-player (Player~$A$ and Player~$B$) game arena. Let ${\cal T}=\{ T_1,T_2,\dots,T_k \}$ be a family of 
subsets of $S^G$ supposed to be pairwise disjoint (w.l.o.g.). Also wlog we assume that $T_i\subseteq S^G_B$ for all $i\in \{1,\dots,k\}$. 
 In a multi-reachability game, the objective of Player~$A$ is to visit each $T$ in ${\cal T}$, while Player~$B$ tries to avoid at least one of the subsets in ${\cal T}$. So, multi-reachability games are two-player zero sum games where the winning plays for Player~$A$ are 
$$\{ \rho=s_0 s_1 \dots s_n \dots \in \plays(G) \mid \forall i \cdot 1 \leq i \leq k \cdot \exists j \geq 0 \cdot s_j \in T_i \}.$$ 
It has been shown that the multi-reachability problem for two-player games is {\sc PSpace-C}~\cite{AlurTM03,Fijalknov10}. 

From $G=(S^G, \{S^G_A,S^G_B\}, s^G_{{\sf init}}, \Sigma^G, \Delta^G)$ and ${\cal T}=\{ T_1,T_2,\dots,T_k \}$ that define a multi-reachability game, we construct a game arena $G'=(S,\{ S_0,S_1,\dots,S_k\},\\$ $s_0,\Sigma,\Delta)$ with $k+1$ players and a set of $k+1$ safety objectives $\safe_0,\dots,\safe_{k}$ such that Player~$A$ wins the multi-reachability objective defined by $G$ and ${\cal T}$ iff there exists a doomsday equilibrium in $G'$ for the safety objectives $\safe_0,\dots,\safe_{k}$.  

The structure of the reduction is depicted in Fig.~\ref{fig:reduc-multi-reach}. The state space of $G'$ is composed of three parts: an initial part on the left, a modified copy of $G$, and a part on the right.
The set $S$ of states is  $\{s_0,s_1\} \cup S_A^G \cup S_B^G \cup \{ {\sf Bad} \}$. This set of states is partitioned as follows: $S_0 = \{s_0\} \cup S^G_A \cup \{ {\sf Bad} \}$, 
$S_1=S_B^G \setminus \bigcup_{i=2}^{i=k} T_i$ and for all $i$, $2 \leq i \leq k$, $S_i=T_i$.

\noindent
The sets of safety objectives are defined as follows: $\safe_{0}={\sf safe}(\{s_0,s_1\})$, and for all $i \in \{1,2,\dots,k\}$, $\safe_i={\sf safe}(S \setminus (\{{\sf Bad}\} \cup T_i))$.
The alphabet of actions is $\Sigma=\Sigma^G \cup \{ {\sf deviate} \}$, and the transition function is defined as follows:
  \begin{itemize}
  	\item $\Delta(s_0,\sigma)=s_1$, for all $\sigma \in \Sigma$,
	\item $\Delta(s_1,\sigma)=\left \{ \begin{array}{ll}
							s_0 & \mbox{if~} \sigma\in \Sigma^G \\
							s^G_{{\sf init}} & \mbox{if~} \sigma={\sf deviate}
						 \end{array} \right.$
	\item for all $s \in S_A^G \cup S_B^G$:
		\begin{itemize}
			\item for all $\sigma \in \Sigma^G$, $\Delta(s,\sigma)=\Delta^G(s,\sigma)$
			\item for the letter {\sf deviate}: for all states $s \in S_A^G$, $\Delta(s,{\sf deviate})=s$, and for all $s \in S_B^G$, $\Delta(s,{\sf deviate})={\sf Bad}$
		\end{itemize}
              \item ${\sf Bad}$ is a sink state.
  \end{itemize} 
\noindent
Now, let us justify this construction. First, assume that, in the two-player game arena $G=(S^G, \{S^G_1,S^G_2\}, s^G_{{\sf init}}, \Sigma^G, \Delta^G)$ with the multi-reachability
objective given by ${\cal T}=\{ T_1,T_2,\dots,T_k \}$, Player~$A$ has a winning strategy. In that case, we show that there exists a doomsday equilibrium in the game $G'$ for the safety objectives $(\safe_i)_{0 \leq i \leq k}$. 
To establish the existence of a doomsday equilibrium, we consider the strategy profile $\Lambda=(\lambda_0,\lambda_1,\dots,\lambda_{k})$ whose strategies respect the following conditions:
  \begin{itemize}
  	\item If all the players follows the strategy profile $\Lambda$, the outcome of the game is $(s_0 \cdot s_1)^{\omega}$, i.e. Player~1 avoids to play ${\sf deviate}$ in $s_1$.
	\item Whenever player~1 plays {\sf deviate} in $s_1$, then the game enters the sub game of $G'$ corresponding to $G$, and the game thus enters an unsafe state for Player~$0$ (as $s_{{\sf init}}$ is not part of $\safe_{0}$). From there, Player~$0$ must retaliate by forcing a visit to each set in ${\cal T}=\{ T_1,T_2,\dots,T_k \}$ to make sure that all the other players lose. By hypothesis, in $G$, Player~$A$ has a winning strategy for the multi-reachability objective, so we know that if the other players play letters that are in $\Sigma^G$ then all sets in ${\cal T}$ will eventually be visited when Player~$0$ plays according to the winning strategy of Player~$A$ in $G$. On the other hand, if the letter ${\sf deviate}$ is played then the game goes to the state ${\sf Bad}$ where all the safety objectives are violated. So, we have established that Player~$0$ can retaliate if he plays as Player~$A$ in the copy of $G$. Now, let us consider all the other players. According to the definition of the transition function, Player~$i$ has the option to retaliate whenever he enters its unsafe set $T_i$ by choosing the action ${\sf deviate}$ and so force a visit to ${\sf Bad}$. So, all other players have also the ability to retaliate whenever they enter their unsafe region.	  
  \end{itemize}
 \noindent
So, we have established that $(\lambda_0,\lambda_1,\dots,\lambda_{k})$ witnesses a doomsday equilibrium in $G'$.

Now, let us consider the other direction. Let $(\lambda_0,\lambda_1,\dots,\lambda_{k})$ be a profile of strategies which witnesses a doomsday equilibrium for $G'$ and the safety objectives given by 
the subsets of plays $(\safe_i)_{i=0,..,k}$. In that case, if we consider a prefix of play that enters for the first time the state $s_{{\sf init}}$, we know by definition of doomsday equilibrium that Player~$0$ has a strategy to retaliate against any strategies of the adversaries. If all the other players chooses their letters in $\Sigma^G$ then it should be the case that the play visits all the sets in ${\cal T}$. So, this clearly means that Player~$A$ has a winning strategy in $G$ for the multi-reachability objective defined by ${\cal T}$, this strategy simply follows the strategy $\lambda_{0}$ in the copy of $G$.
\qed
\end{proof}

\section{Complexity of Doomsday Equilibria for Imperfect Information Games}

We now present automata construction to recognize sequences of observations that are doomsday compatible and prefixes that are good for retaliation.

\begin{lemma}
\label{good-retaliation}
Given an $n$-player game $G$ with imperfect information and a set of reachability, safety or parity objectives $(\varphi_i)_{1 \leq i \leq n}$, we can construct for each Player~$i$, in exponential time,  a deterministic Streett
automaton $D_i$ whose language is exactly the set of sequences of observations $\eta \in (O_i \times (\Sigma \cup \{ \tau \}))^{\omega}$ that are {\em doomsday compatible} for Player~$i$, i.e. 
$$
L(D_i)=\{ \eta \in (O_i \times (\Sigma \cup \{ \tau\} ))^{\omega} \mid \forall \rho \in \gamma_i(\eta) \cdot \rho \in \varphi_i \cup \bigcap_{j\not=i} \overline{\varphi_j} \}.
$$ 
\noindent
For each $D_i$, the size of its set of states is bounded by ${\bf O}(2^{nk \log k})$ and the number of Streett pairs is bounded by ${\bf O}(nk^2)$ where $k$ is the number of states in $G$. 
\end{lemma}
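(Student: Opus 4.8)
The plan is to characterise the doomsday-compatible sequences of Player~$i$ as the \emph{complement} of a nondeterministically recognisable (``there is a bad witness play'') language, and then obtain $D_i$ by determinising and complementing an automaton for that language. Fix Player~$i$ and write $\psi = \varphi_i \cup \bigcap_{j=1}^{n}\overline{\varphi_j}$. By definition $\eta \in (O_i\times(\Sigma\cup\{\tau\}))^{\omega}$ is doomsday compatible iff $\gamma_i(\eta)\subseteq\psi$; equivalently, $\eta$ is \emph{not} doomsday compatible iff there is a play $\rho\in\gamma_i(\eta)$ with $\rho\notin\varphi_i$ and $\rho\in\varphi_j$ for some $j\neq i$ (since $\overline{\psi}=\overline{\varphi_i}\cap\bigcup_{j\neq i}\varphi_j$). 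So it is enough to build a nondeterministic $\omega$-automaton $N_i$ over the observation alphabet $\Gamma_i=O_i\times(\Sigma\cup\{\tau\})$ whose language is exactly this set of ``bad'' sequences, and to set $D_i$ to be the dual of a deterministic equivalent of $N_i$.

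First I would build $N_i$ as a ``witness reconstruction'' automaton. Its state is a state $s$ of $G$ (the current state of the guessed witness play), together with a guessed index $j\in\{1,\dots,n\}\setminus\{i\}$ fixed at the start and a bounded amount of bookkeeping depending on the objective class. Starting from $s_{\sf init}$, on reading a letter $(o,a)$ from state $s$ it rejects if $O_i(s)\neq o$; if $s\in S_i$ then $a=\sigma\in\Sigma$ and the successor is the forced state $\Delta(s,\sigma)$; if $s\notin S_i$ then $a=\tau$, and $N_i$ nondeterministically guesses the hidden action $\sigma\in\Sigma$ and moves to $\Delta(s,\sigma)$. Its acceptance condition checks, on the reconstructed play, that $\rho\notin\varphi_i$ and $\rho\in\varphi_j$: for reachability objectives this is ``$T_j$ is visited and $T_i$ is never visited'', a B\"uchi condition realised by a one-bit flag plus dead-ending, giving $O(nk)$ states; for safety it is ``$T_i$ is eventually left and $T_j$ is never left'', again a B\"uchi condition over $O(nk)$ states; for parity it is ``$\min p_j(\inf(\rho))$ even and $\min p_i(\inf(\rho))$ odd'', i.e.\ a conjunction of a parity and a complemented-parity condition, which (parity conditions being Streett) is a Streett condition with $O(d)\le O(k)$ pairs over $O(nk)$ states. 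Correctness is by the two obvious directions: any accepting run of $N_i$ exhibits a concrete $\rho\in\gamma_i(\eta)$ with the required properties, and conversely any such $\rho$ yields an accepting run by guessing the opponents' hidden moves along $\rho$ (the observation components automatically match by the definition of $\Obs_i$); a sequence $\eta$ with $\gamma_i(\eta)=\emptyset$ has no run at all, consistently with its being vacuously doomsday compatible.

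Second I would determinise $N_i$: Safra/Piterman determinisation in the reachability and safety cases (where $N_i$ is a nondeterministic B\"uchi automaton), and the corresponding Streett determinisation in the parity case, obtaining in each case a deterministic Rabin automaton $R_i$ with $L(R_i)=L(N_i)$. Finally $D_i$ is obtained from $R_i$ by dualising the acceptance condition (deterministic Rabin $\leadsto$ deterministic Streett), so that $L(D_i)=\overline{L(N_i)}$ is precisely the set of sequences of observations doomsday compatible for Player~$i$. For the size bound: in the reachability and safety cases $N_i$ has $O(nk)$ states and a B\"uchi condition, so its determinisation has $2^{O(nk\log(nk))}={\bf O}(2^{nk\log k})$ states and $O(nk)$ Rabin/Streett pairs; in the parity case determinising the $O(nk)$-state Streett automaton with $O(k)$ pairs yields ${\bf O}(2^{nk\log k})$ states and ${\bf O}(nk^2)$ pairs, which dominates and gives the stated bounds.

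The main obstacle is the parity case. For reachability and safety the objective automata are already deterministic and tiny, so $N_i$ and its determinisation are routine; the work is concentrated in (i)~packaging the parity acceptance of $N_i$ as a Streett/Rabin condition whose number of pairs stays $O(k)$ rather than, say, $O(nd)$ (this is why the index $j$ is guessed rather than handled by a single global Muller/Rabin condition over all $j$), and (ii)~carrying out the determinisation of a nondeterministic Streett automaton so that the blow-up respects the ${\bf O}(2^{nk\log k})$ state bound and the ${\bf O}(nk^2)$ bound on pairs. I would therefore spend most of the detailed proof on the parity construction and cite the appropriate Streett determinisation procedure for the size accounting, treating reachability and safety as the easy specialisations.
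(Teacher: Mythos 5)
Your construction is correct, and it uses the same ingredients as the paper's proof with one structural difference in where the per-opponent decomposition happens. The paper builds the play-reconstruction automaton on the state space of $G$ exactly as you do, but reads it \emph{universally} as an automaton for the doomsday-compatible language itself, splits the condition into the $n-1$ conjuncts $\varphi_i\cup\overline{\varphi_j}$, and for each $j$ separately dualizes it (universal one-pair Streett $\to$ nondeterministic Rabin, which is precisely your ``witness of non-compatibility against $j$'' automaton), determinizes with a Safra/Piterman procedure, dualizes back to a deterministic Streett automaton, and finally takes the product over $j\neq i$. You instead fold the disjunction over $j$ into the nondeterminism (guessing $j$ up front), determinize once, and dualize once; this is an equivalent and slightly more streamlined route, and your handling of the case $\gamma_i(\eta)=\emptyset$ (vacuous acceptance) matches the universal semantics the paper uses. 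The only point where the bookkeeping diverges is the size accounting in the parity case: determinizing a nondeterministic Streett automaton with $m$ states and $h$ pairs costs $2^{O(mh\log (mh))}$ states, so your single determinization of an $O(nk)$-state, $O(k)$-pair automaton gives $2^{O(nk^2\log k)}$ states with $O(nk^2)$ pairs, not ${\bf O}(2^{nk\log k})$ states as you assert (the paper's own product of per-$j$ determinizations has the same feature, so its stated state bound for parity is equally generous); since the automaton remains singly exponential and constructible in exponential time, this does not affect the ExpTime upper bound drawn from the lemma. For safety and reachability your bounds do match: $2^{O(nk\log(nk))}=2^{O(nk\log k)}$ because $n\le k$, with $O(nk)$ pairs as in the paper.
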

\begin{proof}
Let $G=(S,(S_i)_{1\leq i\leq n}, s_{{\sf init}}, \Sigma, \Delta, (O_i)_{1\leq i\leq n})$, and let us show the constructions for Player~$i$, $1 \leq i \leq n$.
We treat the three types of winning conditions as follows.

We start with safety objectives. Assume that the safety objectives are defined implicitly by the following tuple of sets of safe states: $(T_1,T_2,\dots,T_n)$, i.e. $\varphi_i = {\sf safe}(T_i)$. First, we construct the automaton 
$$A=(Q^A,q^A_{{\sf init}},(O_i \times (\Sigma \cup \{ \tau \}),\delta^A)$$ over the alphabet $O_i \times (\Sigma \cup \{ \tau \})$ as follows:
  \begin{itemize}
  	\item $Q^A=S$, i.e. the states of $A$ are the states of the game structure $G$,
	\item $q^A_{{\sf init}}=s_{{\sf init}}$,
	\item $(q,(o,\sigma),q') \in \delta^A$ if $q\in o$ and there exists $\sigma'\in\Sigma$ such that $\Delta(q, \sigma')=q'$ and such that $\sigma = \tau$ if $q\not\in S_i$, and $\sigma = \sigma'$ if $q\in S_i$.

          % either $(i)$ $\sigma \not= \tau$, $q \in S_i$, $q \in o$, and $\Delta(\sigma,q)=q'$, or $(ii)$ $\sigma=\tau$, $q \not\in S_i$, $q \in o$, and $\Delta(q,\sigma)=q'$,
	% \item $F = T_i$ and $B^A_j =S\setminus T_j$, for all $j\not=i$.
  \end{itemize}
\noindent
The acceptance condition of $A$ is \emph{universal} and expressed with LTL syntax: 
\begin{center}
A word $w$ is accepted by $A$ iff \emph{all} runs $\rho$ of $A$ on $w$ satisfy $\rho \models \Box T_i \lor \bigwedge_{j\not=i} \Diamond \overline{T_j}$.
\end{center}
\noindent
 Clearly, the language defined by $A$ is exactly the set of sequences of observations $\eta \in (O_i\times (\Sigma \cup \{ \tau \}))^{\omega}$ that are {\em doomsday compatible} for Player~$i$, this is because the automaton $A$ checks (using universal nondeterminism) that all plays that are compatible with a sequence of observations are doomsday compatible.

Let us show that we can construct a deterministic Streett automaton $D_i$ that accepts the language of $A$ and whose size is such that: $(i)$ its number of states is at most ${\bf O}(2^{(nk \log k)})$ and $(ii)$ its number of Streett pairs is at most ${\bf O}(nk)$. We obtain $D$ with the following series of constructions:
  \begin{itemize}
  	\item First, note that we can equivalently see $A$ as the intersection of the languages of $n-1$ universal automata $A_j$ with the acceptance condition $\Box T_i \lor \Diamond \overline{T_j}$, $j \not=i$, $1 \leq j \leq n$.
	\item Each $A_j$ can be modified so that a violation of $T_i$ is made permanent and a visit to $\overline{T_j}$ is recorded. 
          For this, we use a state space which is equal to $Q^A \times \{0,1\} \times \{0,1\}$, the first bit records a visit to $\overline{T_i}$
          and the second a visit to $\overline{T_j}$. We denote this automaton by $A'_j$, and its acceptance condition is now
          $\Box \Diamond (Q^A\times \{0,1\}\times \{0\}) \rightarrow \Box \Diamond (Q^A\times \{0\}\times \{0,1\})$. Clearly, this is a universal Streett automaton with a single Streett pair.
	\item $A'_j$, which is a universal Streett automaton, can be complemented (by duality) by interpreting it as a nondeterministic Rabin automaton (with one Rabin pair). This nondeterministic Rabin automaton can be made deterministic using a Safra like procedure, and according to~\cite{CZL09} we obtain a deterministic Rabin automaton with ${\bf O}(2^{k \log k})$ states and ${\bf O}(k)$ Rabin pairs. Let us call this automaton $A''_j$.
 	\item Now, $A''_j$ can be complemented by considering its Rabin pairs as Streett pairs (by dualization of the acceptance condition): we obtain a deterministic Streett automaton with ${\bf O}(k)$ Streett pairs for each $A_j$.
	\item Now, we need to take the intersection of the $n-1$ deterministic automata $A''_j$ (interpreted as Streett automata). Using a classical synchronized product we obtain a single deterministic Streett automaton $D_i$ of size with ${\bf O}(2^{nk \log k})$ states and ${\bf O}(nk)$ Streett pairs. This finishes our proof for safety objectives.
\end{itemize}

Let us now consider reachability objectives. Therefore we now assume the states in $T_1,\dots,T_n$ to be target states for each player respectively, i.e. 
$\varphi_i = {\sf reach}(T_i)$. 
The construction is in the same spirit as the construction for safety. 
Let $A=(Q^A,q^A_{{\sf init}},O_i \times (\Sigma \cup \{ \tau \}),\delta^A)$ be the automaton over $(O_i \times (\Sigma \cup \{ \tau \})$
constructed from $G$ as for safety, with the following (universal) acceptance condition;
\begin{center}
 A word $w$ is accepted by $A$ iff all runs $\rho$ of $A$ on $w$ satisfy $\rho \models  ( \bigvee_{j\not=i} \Diamond T_j ) {\rightarrow} \Diamond T_i$.
 \end{center}
 \noindent
 Clearly, the language defined by $A$ is exactly the set of sequences of observations $\eta \in ((\Sigma \cup \{ \tau \}) \times O_i)^{\omega}$ that are {\em doomsday compatible} for Player~$i$ (w.r.t. the reachability objectives). Let us show that we can construct a deterministic Streett automaton $D_i$ that accepts the language of $A$ and whose size is such that: $(i)$ its number of states is at most ${\bf O}(2^{(nk \log k)})$ and $(ii)$ its number of Streett pairs is at most ${\bf O}(nk)$.
We obtain $D_i$ with the following series of constructions:
  \begin{itemize}
  	\item First, the acceptance condition can be rewritten as $\bigwedge_{j\not=i} (\Diamond T_j  \rightarrow \Diamond T_i)$. Then clearly if $A_j$ is a copy of $A$
          with acceptance condition $\Diamond T_j  \rightarrow \Diamond T_i$ then $L(A)=\bigcap_{j\not=i} L(A_j)$.
	\item For each $A_j$, we construct a universal Streett automaton with one Streett pair by memorizing the visits to $T_i$ and $T_j$ and considering the acceptance condition $\Box \Diamond T_j \rightarrow \Box \Diamond T_i$. So, we get a universal automaton with a single Streett pair.
	\item Then we follow exactly the last three steps (3 to 5) of the construction for safety.
  \end{itemize}	

Finally, let us consider parity objectives. The construction is similar to the other cases. Specifically, we can take as acceptance condition for $A$ the universal condition
$\bigwedge_{j \not= i} ({\sf parity}_i \lor \overline{{\sf parity}_j})$, and treat each condition ${\sf parity}_i \lor \overline{{\sf parity}_j}$ separately. 
We dualize the acceptance condition of $A$, into the nondeterministic condition $\overline{{\sf parity}_i} \land {\sf parity}_j$. This acceptance condition can be equivalently expressed as a Streett
condition with at most ${\bf O}(k)$ Streett pairs. This automaton accepts exactly the set of observation sequences that are not doomsday compatible for Player~$i$ against Player~$j$. Now, using optimal procedure for determinization, we can obtain a deterministic Rabin automaton, with ${\sf O}(k^2)$ pairs that accepts the same language \cite{journals/lmcs/Piterman07}. Now, by interpreting the pairs of the acceptance condition as Streett pairs instead of Rabin pairs, we obtain a deterministic Streett automaton $A_j$ that accepts the set of observations sequences that are doomsday compatible for Player~$i$ against Player~$j$. Now, it suffices to take the product of the $n-1$ deterministic Streett automata $A_j$ to obtain the desired automaton $A$, its size is at most ${\bf O}(2^{nk \log k})$ with at most ${\bf O}(n k^2)$ Streett pairs.  \qed
\end{proof}

\begin{lemma}
\label{retaliation-compatible}
Given an $n$-player game arena $G$ with imperfect information and a set of reachability, safety or parity objectives $(\varphi_i)_{1 \leq i \leq n}$, for each Player~$i$, we can
construct a finite-state automaton $C_i$ that accepts exactly the prefixes of observation sequences that are good for retaliation for Player~$i$. 
\end{lemma}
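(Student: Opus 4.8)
The plan is to characterise ``good for retaliation'' as membership in the surely-winning region of a two-player zero-sum imperfect-information game, and then to compute that region by the standard knowledge-based subset construction, reading off $C_i$ as a deterministic belief automaton. Concretely, I would first argue that a prefix $\kappa$ is good for retaliation for Player~$i$ exactly when, in the two-player zero-sum game $H_i$ played on $G$ where Player~$i$ uses observation-based strategies against the coalition of all the other players (with perfect information) for the objective $\Psi_i := \varphi_i \cup \bigcap_{j=1}^{n}\overline{\varphi_j}$, Player~$i$ surely wins from every play prefix $\pi \in \gamma_i(\kappa)$. Since the objectives are not tail in general, ``surely winning from $\pi$'' genuinely depends on $\pi$ and not only on $\last(\pi)$; the right information is recovered by taking the synchronised product of $G$ with a deterministic automaton $\mathcal{D}_i$ recognising the $\omega$-regular set $\Psi_i$, chosen with a prefix-independent (parity/Rabin/Streett) acceptance condition --- at the cost of polynomially many extra states for reachability, safety, B\"uchi and co-B\"uchi objectives, and of no extra state at all for parity objectives, where $\Psi_i$ already depends only on the states visited infinitely often. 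In this product $\widehat{G}_i$, whose objective is the acceptance condition read on the $\mathcal{D}_i$-component, surely winning from a prefix depends only on the \emph{belief} it reaches, i.e. on the set $\widehat{B}(\kappa)$ of product states $(\last(\pi),q)$ over $\pi\in\gamma_i(\kappa)$ (with $q$ the $\mathcal{D}_i$-state after reading $\pi$), which is a deterministic function of $\kappa$. Hence $\kappa$ is good for retaliation iff Player~$i$ surely wins $\widehat{G}_i$ from the belief $\widehat{B}(\kappa)$, i.e. iff there is a single observation-based strategy of Player~$i$ forcing $\Psi_i$ from every state of $\widehat{B}(\kappa)$ simultaneously.

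Next I would run the knowledge-based subset construction on $\widehat{G}_i$ relative to Player~$i$'s observations: its states are the reachable knowledge cells (sets of product states contained in one observation of Player~$i$), its transitions perform the deterministic knowledge update, and Player~$i$'s objective becomes the \emph{universal lift} of $\mathcal{D}_i$'s acceptance condition, namely ``all product states still consistent with the current knowledge have an accepting run''. This is a finite perfect-information $\omega$-regular game, of size at most exponential in $|S|$; solving it with a classical algorithm yields its winning region for Player~$i$, and projecting cells back to subsets of the product state space gives the set $W_i$ of beliefs from which Player~$i$ surely wins $\widehat{G}_i$. Obtaining the sharp complexity needed for Theorem~\ref{thm:imperfect} requires, as in Lemma~\ref{good-retaliation}, an objective-specific analysis that keeps this condition Streett with few pairs; for Lemma~\ref{retaliation-compatible} itself only finiteness and effectiveness of the construction matter.

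Finally I set $C_i$ to be the deterministic automaton over $O_i\times(\Sigma\cup\{\tau\})$ whose states are the reachable beliefs $\widehat{B}(\kappa)$, whose transition function is the knowledge update, and whose accepting states are exactly $W_i$; it has at most exponentially many states in $|S|$. Its correctness is the equivalence proved in the first step: a knowledge-based winning strategy in $\widehat{G}_i$ from $\widehat{B}(\kappa)$ induces a single observation-based strategy $\lambda_i^R$ of Player~$i$ in $G$ with $\outcome(\pi,\lambda_i^R)\subseteq\Psi_i$ for every $\pi\in\gamma_i(\kappa)$, and conversely any such $\lambda_i^R$ induces a strategy winning $\widehat{G}_i$ from $\widehat{B}(\kappa)$; both directions are routine ``unfold the subset construction'' arguments exploiting that all prefixes compatible with $\kappa$ share the same observation and that the $\mathcal{D}_i$-acceptance is prefix-independent. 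For the use in Theorem~\ref{thm:imperfect} one then notes that ``every prefix of an observation sequence $\eta$ is good for retaliation'' is equivalent to the deterministic run of $C_i$ on $\eta$ never leaving $W_i$, i.e. a safety condition on $C_i$.

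The main obstacle is precisely the non-observability of the objectives: one cannot run the subset construction directly on $G$ with objective $\Psi_i$ and must route through the product with $\mathcal{D}_i$ and the universal lift of a prefix-independent acceptance condition --- the same subtlety that forces the determinisation chain in Lemma~\ref{good-retaliation} --- while simultaneously keeping that condition tame enough (Streett with polynomially many pairs) for the ambient {\sc ExpTime} bound to survive. The remaining ingredients --- the exact knowledge-update rule on $\widehat{G}_i$, the proof that knowledge-based strategies suffice there, and the two strategy transfers between $G$ and the subset game --- are standard but need to be written out carefully.
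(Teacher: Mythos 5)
Your semantic characterisation is fine: a prefix $\kappa$ is good for retaliation iff, writing $\Psi_i=\varphi_i\cup\bigcap_{j}\overline{\varphi_j}$, Player~$i$ has a single observation-based strategy enforcing $\Psi_i$ simultaneously from all prefixes in $\gamma_i(\kappa)$, and after a product with a deterministic automaton for $\Psi_i$ this property depends only on the belief reached. The gap is in the algorithmic step: you propose to decide this by solving a perfect-information $\omega$-regular game on the knowledge-based subset construction, with a ``universal lift'' of the acceptance condition evaluated on the sequence of knowledge cells. For the liveness parts of $\Psi_i$ (the whole condition for parity objectives, and already the disjunct $\bigwedge_{j}\eventually\overline{T_j}$ arising for safety objectives) this is unsound, because ``every play compatible with the observations satisfies the condition'' is \emph{not} a function of the belief sequence. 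Concretely: take a blind player with no choices, states $q_0,a,b$ and B\"uchi set $\{a\}$; with transitions $q_0\to a$, $q_0\to b$, $a\to a$, $b\to b$ the tube contains the losing play through $b$, whereas with $q_0\to a$, $q_0\to b$, $a\to b$, $b\to a$ every compatible play is winning --- yet both games induce the identical knowledge game (beliefs $\{q_0\},\{a,b\},\{a,b\},\dots$), so no acceptance condition on belief sequences can separate them. Your remark that ``the proof that knowledge-based strategies suffice there'' is standard is exactly the point that fails: plain beliefs do not carry enough memory to certify a universal liveness property of the uncertainty tube.

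This is why the paper routes differently: in Lemma~\ref{good-retaliation} it builds a \emph{universal} automaton over Player~$i$'s observation-action sequences (which internalises the quantification over all compatible plays) and determinises it, Safra-style, into a deterministic Streett automaton $D_i$; the determinisation structure, not the belief set, is the memory that makes ``all plays in the tube satisfy $\Psi_i$'' a state property. Then Lemma~\ref{retaliation-compatible} is obtained by solving a perfect-information Streett game played on $D_i$ itself, where the adversary proposes the next observation and Player~$i$ answers with an action, and $C_i$ is $D_i$ with the winning region of that game as accepting states. To repair your proof you would have to replace the subset construction by such a determinisation of the universal word automaton (or an equivalent universal-tree-automaton emptiness argument); with only beliefs, the construction of $W_i$ and hence of $C_i$ does not go through, except for the purely safety-like fragment where ``bad'' is detectable on the current knowledge.
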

\begin{proof}
Let us show how to construct this finite-state automaton for any Player~$i$, $1 \leq i \leq n$. Our construction follows these steps:
  \begin{itemize}
  	\item First, we construct from $G$, according to lemma~\ref{good-retaliation}, a deterministic Streett automaton
          $D_i=(Q^{D_i},q^{D_i}_{{\sf init}},(O_i \times (\Sigma \cup \{ \tau \}),\delta^{D_i},{\sf St}^{D_i})$ that accepts exactly the  set of sequences of observations $\eta \in ( O_i \times (\Sigma \cup \{ \tau \}))^{\omega}$ that are {\em doomsday compatible} for Player~$i$. We know that the number of states in $D_i$ is ${\bf O}(2^{|S|^2 \log |S|})$ and the number of Streett pairs is bounded by
          ${\bf O}(|S|^2\cdot n)$, where $|S|$ is the number of states in $G$. 
	\item Second, we consider a turn-based game played on $D_i$ by two players, {\sf A} and {\sf B}, that move a token from states to states along edges of $D_i$ as follows:
		\begin{enumerate}
			\item initially, the token is in some state $q$
			\item then in each round: {\sf B} chooses an observation $o\in O_i$ in the set $\{ o\in O_i\ |\ \exists (q,(o,\sigma),q')\in \delta^{D_i}\}$. Then {\sf A} chooses 
                          a transition $(q,(o,\sigma),q')\in \delta^{D_i}$ (which is completely determined by $\sigma$ as $D_i$ is deterministic), and  the token is moved to
                          $q'$ where a new round starts. 
                      \end{enumerate}
		The objective of {\sf A} is to enforce from state $q$ an infinite sequence of states, so a run of $D_i$ that starts in $q$, and which satisfies ${\sf St}^{D_i}$ the Streett condition of $D_i$.  For each $q$, this can be decided in time polynomial in the number of states in $D_i$ and exponential in the number of Streett pairs in ${\sf St}^{D_i}$, see~\cite{DBLP:conf/lics/PitermanP06} for an algorithm with the best known complexity. Thus, the overall complexity is exponential in the size of the game structure $G$. We denote by ${\sf Win} \subseteq Q^{D_i}$ the set of states $q$ from which {\sf A} can win the game above.
		\item Note that if $(o_1,\sigma_1)\dots (o_m,\sigma_m)$ is the trace of a path from $q_{{\sf init}}$ in $D_i$ to a state $q \in {\sf Win}$, then clearly 
                  $(o_1,\sigma_1)\dots (o_{n-1},\sigma_{n-1})o_n$ is good for retaliation. Indeed, the winning strategy of {\sf A} in $q$ is an observation based retaliating strategy $\lambda^R_i$ for Player~$i$ in $G$. 
                  On the other hand, if a prefix of observations reaches $q \not\in {\sf Win}$ then by determinacy of Streett games, we know that {\sf B} has a winning strategy in $q$
                  and this winning strategy is a strategy for the coalition (against Player~$i$) in $G$ to enforce a play in which Player~$i$ does not win and at least one of the other players wins.
                  So, from $D_i$ and ${\sf Win}$, we can construct a finite state automaton $C_i$ which is obtained as a copy of $D_i$ with the following acceptance condition: 
                  a prefix $\kappa=(o_0,\sigma_0),(o_1,\sigma_1),\dots,(o_{k-1},\sigma_{k-1}),o_k$ is accepted by $C_i$ if there exists a path $q_0 q_1 \dots q_{k}$ in $C_i$ such
that $q_0$ is the initial state of $C_i$ and either there exists a transition labeled $(o_k,\sigma)$ from $q_k$ to a state of ${\sf Win}$.\qed

% and all the transitions that leaves $q_{k-1}$ are labelled by a pair in $(o_k,\sigma)$ for some $\sigma \in \Sigma \cup \{ \tau \}$.\qed
  \end{itemize}
\end{proof}

\makeatletter
\let\@savethebibliography\thebibliography
\def\thebibliography#1{%
  \@savethebibliography{#1}
  \global\def\c@enumiv{\value{suitebib}}}
\makeatother
\putbib[biblio]
\end{bibunit}

% \bibliographystyle{plain}
% \bibliography{bib2}

\end{document}